\newenvironment{sequation}{\begin{equation}\small}{\end{equation}}
\setlist{nolistsep}
\newenvironment{proof}{{\noindent\it Proof. }}{\hfill $\blacksquare$\par}
\def\BibTeX{{\rm B\kern-.05em{\sc i\kern-.025em b}\kern-.08em
		T\kern-.1667em\lower.7ex\hbox{E}\kern-.125emX}}
\definecolor{color}{rgb}{0.0, 0, 0}
\definecolor{color1}{rgb}{0.0, 0, 0}
\newskip\theorempreskipamount
\newskip\theorempostskipamount
\newtheorem{theorem}{Theorem}
\newtheorem{definition}{Definition}
\newtheorem{lemma}{Lemma}
\newtheorem{corollary}{Corollary}
\begin{document}

\title{BARGAIN-MATCH: A Game Theoretical Approach for Resource Allocation and Task Offloading in Vehicular Edge Computing Networks}

\author{Zemin~Sun,\IEEEmembership{}
    	Geng~Sun,~\IEEEmembership{Member,~IEEE},
		Yanheng~Liu,\IEEEmembership{}
		Jian~Wang,~\IEEEmembership{Member,~IEEE}, and
		Dongpu~Cao,~\IEEEmembership{Senior~Member,~IEEE}% <-this % stops a space
	%	\thanks{Manuscript received , revised .  This study is supported in part by the National Natural Science Foundation of China (62172186, 62002133, 61872158, 62272194), in part by the Science and Technology Development Plan Project of Jilin Province (20210101183JC, 20210201072GX), and in part by the Young Science and Technology Talent Lift Project of Jilin Province (QT202013). (\textit{Corresponding author: Geng Sun.)}}
		\IEEEcompsocitemizethanks{\IEEEcompsocthanksitem Zemin Sun, Geng Sun, Yanheng Liu, and Jian Wang are with the College of Computer Science and Technology, Jilin University, Changchun 130012, China, and Key Laboratory of Symbolic Computation and Knowledge Engineering of Ministry of Education, Jilin University, Changchun 130012, China. E-mail: \{sunzemin, sungeng, yhliu, wangjian591\}@jlu.edu.cn.
		\IEEEcompsocthanksitem  Dongpu Cao is with the School of Vehicle and Mobility, Tsinghua University, Beijing, China. E-mail: dongpu.cao@ieee.org}
	\thanks{Manuscript received; revised }
	\thanks{(\textit{Corresponding author: Geng Sun.)}}
}

% The paper headers
\markboth{Journal of \LaTeX\ Class Files,~Vol.~14, No.~8, August~2015}%
{Shell \MakeLowercase{\textit{et al.}}: Bare Demo of IEEEtran.cls for Computer Society Journals}

\IEEEtitleabstractindextext{%
	\begin{abstract}		
		Vehicular edge computing (VEC) is emerging as a promising architecture of vehicular networks (VNs) by deploying the cloud computing resources at the edge of the VNs. {\color{color} However, efficient resource management and task offloading in the VEC network is challenging. In this work, we first present a hierarchical framework that coordinates the heterogeneity among tasks and servers to improve the resource utilization for servers and service satisfaction for vehicles. Moreover, we formulate a joint resource allocation and task offloading problem (JRATOP), aiming to jointly optimize the intra-VEC server resource allocation and inter-VEC server load-balanced offloading by stimulating the horizontal and vertical collaboration among vehicles, VEC servers, and cloud server.} Since the formulated JRATOP is NP-hard, {\color{color} we propose a cooperative resource allocation and task offloading algorithm named BARGAIN-MATCH, which consists of a bargaining-based incentive approach for intra-server resource allocation and a matching method-based horizontal-vertical collaboration approach for inter-server task offloading.} Besides, BARGAIN-MATCH is proved to be stable, weak Pareto optimal, and polynomial complex. Simulation results demonstrate that the proposed approach achieves superior system utility and efficiency compared to the other methods, especially when the system workload is heavy.
	\end{abstract}
	
	\begin{IEEEkeywords}
		Vehicular network, vehicular edge computing, game theory, resource allocation, task offloading.
\end{IEEEkeywords}}

\maketitle
\IEEEdisplaynontitleabstractindextext
\IEEEpeerreviewmaketitle

%	
% Introduction
%

\section{Introduction}
\label{sec_introduction}

\par  \IEEEPARstart{W}{ith} the development of vehicular networks (VNs) and the ever-increasing number of vehicles on the road, various and explosive applications are emerging such as autonomous driving, auto navigation, and augmented reality. These vehicular applications usually require extensive computation resources and low or ultra-low latency. However, fulfilling the computation-intensive and delay-sensitive tasks is challenging due to the limited computation resources of vehicles. To overcome this challenge, {\color{color} mobile edge computing or multi-access edge computing (MEC) \cite{Porambage2018}} is emerging as a promising technology by shifting the cloud computing resources in close proximity to mobile terminals, leading to the new paradigm of vehicular edge computing (VEC) \cite{Porambage2018,Sabella2016,Duan2020}. The VEC migrates the lightweight and ubiquitous resources from cloud servers to the road side units (RSUs) equipped with VEC servers to extend the computation capabilities of the conventional VNs \cite{Dai2019a}. By offloading the tasks to the VEC servers, the communication latency between the vehicles and the cloud server can be reduced, and the computation overloads on vehicles can be relieved. 

{\color{color}
\par However, compared to cloud computing and wireless networks, the VEC network is characterized by the typical features of both edge computing and VNs, i.e., the limited resources of edge servers and highly dynamic of VNs. Therefore, the joint optimization of resource management and task offloading for VEC network is essential, which however confronts several challenges. 

\textit{{First}}, various tasks of vehicles generally arrive dynamically and have stringent requirements on the offloading service. Internally, for a certain VEC server, the limited computational resource of the VEC server and the stringent requirements of requesting vehicles could result in resource competition inside the VEC server, especially in peak hours. Therefore, it is challenging for a VEC server to decide efficient resource allocation policy to satisfy the heterogeneous and stringent requirements of different tasks under the resource constraint. \textit{{Second}}, high mobility of vehicles and random generation of tasks lead to the spatio-temporally uneven distribution of tasks among VEC servers. Externally, the heterogeneous computational capacity among VEC servers or between VEC server and cloud server further incurs load imbalance and resource under-utilization among the servers. For example, some VEC servers could be overloaded or congested while the others could be underloaded or idle when tasks are randomly offloaded. Therefore, the spatiotemporal heterogeneity among tasks and the computational heterogeneity among servers pose a significant challenge in designing the efficient task offloading scheme to provide service satisfaction for vehicles and load balance among servers. \textit{{Third}}, the unique features of VNs, such as the dynamic of channel and mobility of vehicles, add complexity to integrating these features into the optimization for the VEC network.

\par This work presents a cooperative resource allocation and task offloading approach for VEC network to optimize the resource allocation for servers and offloading satisfaction for vehicles. The main contributions are summarized as follows:

\begin{itemize}
	\item  We employ a hierarchical architecture of resource allocation and task offloading for VEC network to coordinate both the space-time-requirement heterogeneity among tasks and the computational heterogeneity among servers. Specifically, the regional software-defined networking (SDN) \cite{lin2021sdvec}, which separates the data and control planes, is integrated for efficient resource allocation and task offloading. Under the coordination of the controller, the intra-server resource allocation and inter-server load-balanced offloading are optimized by stimulating horizontal and vertical collaborations among vehicle, edge, and cloud layers.
	
	\item We formulate a joint resource allocation and task offloading problem (JRATOP) by jointly optimizing the strategies of resource allocation and pricing, and task offloading, with the aim of maximizing the system utility that is theoretically modeled by synthesizing the unique features of VN channels, nonorthogonal multiple access (NOMA) \cite{Dai2018}, mobility of vehicles, spatiotemporal heterogeneity of tasks,  computational heterogeneity of servers, and energy consumption of nodes. 
	
	\item Due to the NP-hardness of JRATOP, we propose a cooperative resource allocation and task offloading algorithm BARGAIN-MATCH that consists of two components. For intra-server resource allocation, a bargaining game-based incentive approach is proposed to stimulate collaboration between the task (of a vehicle) and a VEC/cloud server for resource allocation and pricing. For inter-server task offloading, a many-to-one matching is constructed between tasks and severs to stimulate edge-edge collaboration for horizontal task offloading and edge-cloud collaboration for vertical task offloading.

	\item The performance of BARGAIN-MATCH is verified through theoretical analysis and simulation. Specifically, BARGAIN-MATCH is proved to be stable, weak Pareto optimal, and polynomial complex. Furthermore, simulation results show that BARGAIN-MATCH can achieve superior system utility and system efficiency, especially when the workload is heavy.
\end{itemize}
}

\par The remaining of this paper is organized as follows. Section \ref{sec_related work} reviews the related work. Section \ref{sec_model} presents the models and preliminaries. The problem formulation is given in Section \ref{sec_problemFormulation}. Section \ref{sec_jointOffloading} elaborates the proposed BARGAIN-MATCH. Section \ref{sec_simulation} shows the simulation results and discussions. In Section \ref{sec_discussion}, we extend the investigation to the scenario with the real-world vehicle applications. This work is concluded in Section \ref{sec_conclusion}. Furthermore, for the sake of readability, all notations are listed in Table \ref{tab_notation}.

{\color{color}
\section{Related work}
\label{sec_related work}

\par In recent years, MEC has emerged as a promising paradigm to provide cloud-computing capabilities at the network edges by deploying lightweight MEC servers ubiquitously in close proximity to end users \cite{Porambage2018,Taleb2017}. Delay-sensitive, computation-intensive, and energy-consuming computation tasks can be offloaded to the MEC severs to improve the quality of service for mobile users. The task offloading and/or resource management for MEC networks has attracted increasing research attempts. Several studies \cite{Kuang2019,Bi2020,Yu2021} focus on the joint task offloading and resource allocation in the single (or double)-user and single-server MEC system, which may not be applicable to the real-world scenarios, especially those with heavy or bursty workloads.

\par To overcome the above challenge, more studies focus on the multi-user and multi-server MEC environment regarding the task offloading and resource allocation. For example, in \cite{Liu2019}, a new scheme is proposed to guarantee the efficiency and reliability of mission-critical task offloading and resource allocation by imposing probabilistic and statistical constraints to the task queue length based on extreme value theory. Apostolopoulos et al. \cite{Apostolopoulos2020} present a novel risk-aware data offloading approach where the risk-seeking offloading behavior of users and the resource over-exploitation of MEC servers are jointly considered by using the prospect theory and tragedy of the commons. Zhang et al. \cite{Zhang2020} employs Lyapunov optimization theory to optimize the task offloading and resource allocation in the MEC-based cloud radio access network, aiming at maximizing the network energy efficiency. Tan et al. \cite{Tan2022} construct a two-level framework for energy-efficient task offloading and resource allocation in OFDMA-based MEC networks. However, these studies mainly focus on MEC networks, which may be not applicable for VNs with highly dynamic vehicles and wireless channels, and uneven distributed nodes and workloads.

In recent years, considerable efforts have been made to improve the performance of VEC network, especially focusing on task offloading to mitigate the competition among vehicles \cite{Sun2019,Wang2020}, to balance the workload \cite{Zhang2020a,Dai2019a}, and to explore the available resources of vehicles \cite{qin2022learning, liu2022mobility,Wang2022}. Sun et al. \cite{Sun2019} propose an adaptive learning based task offloading algorithm based on the multi-armed bandit theory to minimize the average offloading delay. Wang et al. \cite{Wang2020} propose a multi-user non-cooperative computation offloading game, where each vehicle decides whether to offload its task to the VEC server according to the traffic density. An SDN-based VEC architecture is introduced in \cite{Zhang2020a} to provide centralized network management to balance the workload of task offloading. Dai et al. \cite{Dai2019a} construct a cooperative task offloading mechanism based on the queuing theory to minimize the task completion delay and balance the workload at edges. Qin et al. \cite{qin2022learning} focus on exploiting vehicles’ idle and redundant resources for energy efficient task offloading in VNs under information uncertainty. Liu et al. \cite{liu2022mobility} propose a
task offloading scheme by exploiting multi-hop vehicle computation resources in VEC networks. Wang et al. \cite{Wang2022} consider the available neighboring VEC clusters and propose an imitation learning-based task scheduling approach to minimize the system energy consumption. However, these studies mainly focus on optimizing the offloading strategies with insufficient consideration for the resource allocation of VEC servers. Besides, most of them consider the single-server scenario \cite{Sun2019,Wang2020,qin2022learning,liu2022mobility, Wang2022}.

\par Considering the limited resources of edge servers, some studies target on resource management for VEC networks, including spectrum sharing \cite{Liang2019}, computation resource allocation \cite{Zhu2022}, mult-dimensional resource management \cite{Peng2020}, and the exploitation of under-utilized vehicular resources \cite{Duan2022,Wang2022}. For example, Peng et al. \cite{Peng2020} employ a deep learning approach to manage the resources of VEC servers for the delay-sensitive applications of vehicles. Zhu et al. \cite{Zhu2022} adopt a Stackelberg game to model the interaction between vehicles and VEC servers to obtain the price and amount of computation resources to be allocated. In \cite{Duan2022}, the authors focus on the power-aware resource management to jointly optimize the resource utilization and energy efficiency of VEC servers. However, these studies do not consider the offloading strategies from the perspective of vehicles.

\par Since the problems of resource allocation and task offloading are coupled with each other, several research efforts have been devoted to joint resource allocation and task offloading, aiming at delay-driven system utility optimization \cite{Dai2019a,Choo2018,zhou2019computation,zhao2019computation}, spectrum efficiency improvement \cite{li2020joint}, or energy efficiency improvement \cite{huang2020vehicle, huang2021revenue}.
For instance, Dai et al. \cite{Dai2019a} propose a task offloading and resource allocation scheme for VEC networks to minimize the task processing delay under the permissible latency constraint. In \cite{Choo2018}, the task offloading is optimized by maximizing the task completion probability, and the resource allocation is determined by performing a mobility-aware greedy algorithm. Zhou et al. \cite{zhou2019computation} propose an incentive mechanism based on contract-match mechanism to leverage the under-utilized computation resources for task offloading of nearby vehicles. In \cite{zhao2019computation}, the optimal offloading decision and the resource allocation are achieved by using the game theory. Li et al. \cite{li2020joint} consider the influence of time-varying channel on the time-varying spectrum efficiency of task offloading, which is solved by using the branch and bound algorithm. Considering the limited energy of vehicles, Huang et al. \cite{huang2020vehicle, huang2021revenue} propose an energy efficiency-driven approach to reduce the energy costs of vehicles under the constraint of computation resources of VEC servers.

\par Although the above-mentioned work has significantly improve the performance of the VEC network, there are still some problems to be addressed. The heterogeneity among tasks and servers, the load imbalance among servers, and the unique features of VN such as the mobility of vehicles, the dynamic of vehicular channel, and the energy consumption of task execution have not been jointly explored for task offloading and resource allocation. Distinguished from the previous works, this work studies cooperative resource allocation and task offloading in the VEC network, where the stringent requirements of tasks, the heterogeneity among tasks and servers, the load imbalance, the unique features of VNs, and the energy limitation of the VEC networks, are jointly considered.
}

{\color{color}

% symbol
	\begin{table*}[!hbp]
	\vspace{0em}
	\setlength{\abovecaptionskip}{0pt}%    
	\setlength{\belowcaptionskip}{0pt}%
	\caption{Summary of notations}
	\label{tab_notation}
	\renewcommand*{\arraystretch}{.09}
	\color{color}
	\begin{center}
		\begin{tabular}{|p{.183\textwidth}|p{.32\textwidth}||p{.1\textwidth}|p{.28\textwidth}|}
			\hline
			\textbf{Symbol}&\textbf{Description}&\textbf{Symbol}&\textbf{Description}\\
			\hline
				$a \in \{0,\mathcal{E},o\}$ & The offloading destination of a task&$\alpha_i$& The effective switched capacitance of vehicle $i$'s/server $j$'s CPU\\ 
			\hline
				$\beta^{L}/\beta^{NL}$ & The path loss exponent for LoS/NLoS 	communication&$B_{i,j}$ & The bandwidth \\
			\hline
				$c_{j,i}(t)\in[c_{j,i}^{\min}(t),c_{j,i}^{\max}(t)]$ & The unit price of resources by server $j$&$c$ & The speed of light\\ 
			\hline
				$\mathcal{C}_{i}(t)$ & The computation resources required by per bit of task&
				$\mathcal{C}_{i}^{\text{\text{\text{req}}}}(t)$ & The amount of computation resources \\ 
			\hline
				$C_i^a(t)$/$C_j^i(t)$ & The normalized cost of vehicle $i$/server $j$& $C_i^{\max}$/$C_j^{\max}$ & The maximum budget of vehicle $i$/server $j$\\
			\hline
				$\Delta c_{j,i}(t)$&Bid-spread ask&	$\Delta d_{i,j}(t)$&The horizontal distance difference between the current and previous time epochs\\
			\hline
				$d_{i,j}(t)$ & The distance between $i$ and $j$&$d_0$ & The reference distance\\
			\hline
				$\mathcal{D}_{i}^{\text{in}}(t)$&The size of the computation task&	$\mathcal{D}_{i}^{\text{out}}(t)$ & The size of the computation result\\
			\hline
				${\delta_{i}^{i}}^{*}(t)$/${\delta_{j}^{i}}^{*}(t)$&The optimal partitions when vehicle $i$ makes a proposal&	${\delta_{i}^{j}}^{*}(t)$/${\delta_{j}^{j}}^{*}(t)$&The optimal partitions when server $j$ makes a proposal\\
			\hline
				$D_j$&The set of less preferred tasks of server $j$&	$E_{i}^{0}(t)$/$E_{j}^{i}(t)$ & The energy consumption of vehicle $i$/server $j$\\
			\hline
			 	$E_i^{\max}$/$E_j^{\max}$ & The energy constraint of vehicle $i$/server $j$&$\epsilon_{i}(t)$/$\epsilon_{j}(t)$&The discount factor of vehicle $i$/server $j$ \\
			\hline
				$\mathcal{E}=\{1,\ldots,E\}$&The set of VEC servers	& $f_c$ & The carrier frequency\\
			\hline
			 	$F(var_1,c_{j,i}(t))$ &The function in Eq. (24) &$f_i(t)$ & The available computation capacity of vehicle $i$\\
			\hline
				$f_i^{\max}$/$f_j^{\max}$&The computation capability of vehicle $i$/server $j$ &$f_{j,i}(t)$/ $f_{o,i}(t)$ & The computation resources provided by VEC server $j$/ cloud server $o$ to vehicle $i$\\
			\hline
				$g_{i,j}^{L}(t)/g_{i,j}^{NL}(t)$ & The channel power gain between vehicle $i$ and VEC server $j$ & 	$g_{i,j}(t)$/$g_{I,j}(t)$ & The channel power gain between vehicle $i$/$I$ and VEC server $j$\\
			\hline
				$h_{i,j}^{L}(t)/h_{i,j}^{NL}(t)$ & The small-scale fading for LoS/NLoS communication& $i\in \mathcal{V}$& The index of vehicle $i$\\
			\hline
				$j\in \mathcal{E}$ / $j\in\ \{\mathcal{E},o\}$&The index of server $j$ & $\chi_{\sigma}^{L}/\chi_{\sigma}^{NL}$ & The shadowing for LoS/NLoS communication\\
			\hline
				$L_{i,j}^{L}(t)/L_{i,j}^{NL}(t)$ & The pathloss for LoS/NLoS communication &
				$m^{L}/m^{NL}$ & The fading parameter for LoS/NLoS communication\\
			\hline
				 $N_i^{\text{core}}$/$N_j^{\text{core}}$ & The number of vehicle $i$/server $j$'s CPU cores &$N_0$ & The noise power\\
		     \hline
		   	   $\Omega_k^j$/$\Omega_j^k$&The preference of task $k$/server $j$ towards server $k$/task $k$&$o$& Cloud server \\
		     \hline
		     	$\mathbf{P}_i(t)=(X_i(t), Y_i(t), 0)$&The position of vehicle $i$&$p_i^{\text{gen}}(t)$&The indicator of vehicle $i$'s task generation\\
		     \hline 
		     	$\mathbf{P}_j=(X_j, Y_j, 0)$&The position of VEC server $j$&$P_{i}(t)$/$P_{I}(t)$ & The transmit power of vehicle $i$/$I$\\
		     \hline
		    	 $p_{i,j}^{L}$ & The probability of LoS transmission between vehicle $i$ and VEC server $j$	&$\Phi$&Matching result\\
		     \hline
		     	$(\mathcal{P}_B, \mathcal{S}_B, \mathcal{U}_B)$&The triplet of bargaining&$ (\mathcal{P}_M,\Omega, \Phi)$&The triplet of matching \\
		     \hline
		     	$r_c$&The data rate between edge and cloud &$r_{f}$ & The data rate of fiber link\\
		     \hline
		    	$\rho_{k}(j)$/$\rho_{j}(k)$&The preference of task $k$/server $j$ on server $j$/task $k$&$\Phi(k)$/$\Phi(j^{\prime})$&The matching list of task $k$/server $j^{\prime}$ \\
		     \hline
		     	$R_j (j\in\mathcal{E})$&The coverage radius of VEC server $j$& 	$\mathcal{R}_j^i(t)$ & The normalized revenue of server $j$ \\
		     \hline
		   		  $r_{i,j}(t)$ & The data transmission rate between vehicle $i$ and VEC server $j$	&$SW$&Social welfare\\
		     \hline
		     	 $s_i^a(t)$ & The offloading strategy of task $\mathcal{T}_{i}$&
		    	 $\sigma^{L}/\sigma^{NL}$ & The standard deviation of shadowing for LoS/NLoS transmission\\
		    \hline
		    	$s_{j,i}^*(t)=(f_{j,i}^*(t), c_{j,i}^*(t))$&The optimal resource allocation and pricing& ${S}^*_{\text{off}}(t)$/ ${S}^*_{\text{all}}(t)$&The optimal strategy of offloading/resource allocation \\
		    \hline
				$t\in \mathbf{T}=\{0,\ldots,T-1\}$&System timeline&$\Delta t$&Each slot duration\\
			\hline
				$\mathbf{T}(t_0)$&Time epoch&$T_0$& Each time epoch duration\\
			\hline
		   		  $T_{i,j^{}}^{\text{soj}}$&The sojourn time of vehicle $i$ in the coverage of VEC server $j$	&$\tau>0$ & CPU parameter\\
			\hline
				$ \mathcal{T}_{i}(t)$&The task generated by vehicle $i$ at time $t$& $T_{i}^{\max}(t)$ & The maximum tolerable delay of
				task $i$\\
			\hline
		   		 $T_{i}^{0}(t)$/$T_{i}^{j}(t)$/$T_{i}^{o}(t)$& The total delay for offloading task $\mathcal{T}_i(t)$ locally/on edge server $j$/on cloud server $o$&$T_{i,j^{\text{cur}}}^{\text{tran}}(t)$ & The delay for vehicle $i$ to transmit the task to VEC $j^{\text{cur}}$ \\
		   \hline
		  		  $T_{i,j}^{\text{comp}}(t)$/$T_{i,o}^{\text{comp}}(t)$ & The computation delay that the task is processed by VEC server $j$/cloud server $o$& $\tau>0$ & CPU parameter\\
			\hline
				$T_{j^{\text{cur}},j}^{\text{hand}}(t)$/$T_{j^{\text{cur}},o}^{\text{hand}}(t)$ & The horizontal/vertical task handover delay &$\mathcal{T}^{\text{rej}}(t)$&The tasks that are rejected\\
			\hline
		 		 $T_{j,j^{\text{arr}}}^{\text{hand}}(t)$/$T_{o,j^\text{arr}}^{\text{hand}}(t)$ & The horizontal/vertical result handover delay&$\Psi_i^a(t)$ & The normalized satisfaction level of task completion delay\\
			\hline
				$\mathcal{V}_{j}=\{1,\ldots,V_j\}$ & The set of interference vehicles within the range of VEC server $j$& $U_i^a(t)$/$U_j^i(t)$ & The utility obtained by vehicle $i$/server $j$from offloading task $\mathcal{T}_i(t)$\\
			\hline
				$\mathcal{V}=\{1,\ldots,V\}$&The set of vehicles&$\mathcal{V}^{req}(t)$&The requesting vehicle set	\\
		   \hline
 			$v_i(t)\in[v^{\min}, v^{\max}]$&The velocity  of vehicle $i\in \mathcal{V}$& $w_i$/$w_j$ & The weight coefficient of vehicle $i$/server $j$\\
 			\hline 
 			$\zeta_i(t)$&The indicator for the movement direction of vehicle $i$&$P(\zeta_i(t))$& The general model for the movement of vehicles\\
           \hline
		\end{tabular}
	\end{center}
\end{table*}     
}
% end symbol table

%
%Models and preliminaries
%
\section{Models and Preliminaries}
\label{sec_model}

In this section, a VEC-enabled VN architecture is first introduced, followed by the traffic and mobility model, communication model, computation model, and energy consumption model.

\subsection{System Model}
\label{sec_system_model}

% II{x} is an indicator function where {x} = 1 when event x is true and {x} = 0, otherwise

\subsubsection{System Overview}
\label{sec:system_overview}

\begin{figure*}[!hbt] 
	\setlength{\abovecaptionskip}{0pt}   
	\setlength{\belowcaptionskip}{0pt} 
	\centering
	\includegraphics[width =6.6in]{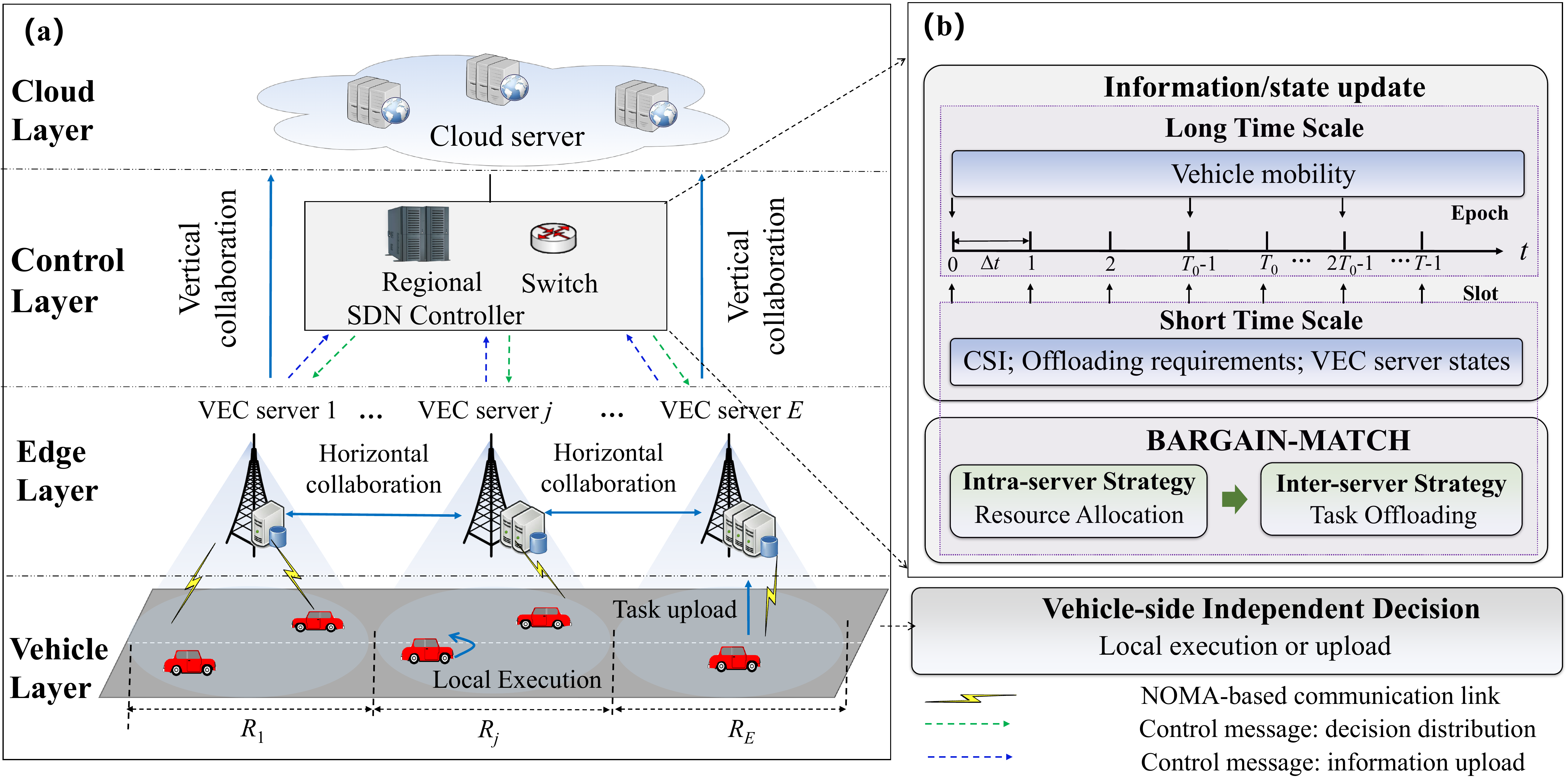}
	\caption{\textcolor{color}{The cooperative architecture of resource allocation and task offloading in VEC network.}}
	\label{fig_gameModel}
	\vspace{-1.2em}
\end{figure*}

{\color{color}
\par  As shown in Fig. \ref{fig_gameModel}, in the spatial domain, we consider an SDN-based hierarchical architecture for VEC network that consists of a vehicle layer with $V$ vehicles \textcolor{color1}{$\mathcal{V}=\{1, \ldots, i, \ldots, V\}$}, an edge layer with $E$ VEC servers \textcolor{color1}{$\mathcal{E}=\{1, \ldots, j, \ldots,E\}$ that are mounted on RSUs \footnote{In this work, the RSU and VEC server are used interchangeably unless otherwise indicated.}}, a control layer with a regional SDN controller, and a remote cloud layer with a cloud server $o$. {\color{color1} For simplicity but without loss of generality, we assume that the vehicles and VEC-mounted RSUs in the system model are equipped with single antenna \cite{Xu2020,Luo2021}.} 
\textit{At the vehicle layer}, vehicles distribute randomly on the multi-lane bi-directional road, run in either direction, and could generate tasks at any time. {\color{color} Each vehicle independently decides whether to process the task locally or upload it to the connected VEC server} \footnote{A vehicle directly uploads the tasks to the VEC server within the current service area since it is unaware of the servers out of the service area \cite{Wang2022}.} by using NOMA. \textit{At the edge layer}, VEC servers are mounted on RSUs that are deployed alongside the road with non-overlapping coverage radius $R_j, j\in \mathcal{E}$, and the service area of each VEC server is defined as the wireless coverage of the RSU. Moreover, the VEC servers are connected to the regional SDN controller \cite{Chen2018,Wu2020} and to each other via high-speed fiber links \cite{Zhang2020,Qian2019, Wang2022} and mobile access gateways (MAGs) \cite{zhou2019energy}. Besides, each VEC server is responsible to collect the local information on its own states, vehicles' states, and the channel state information (CSI), and uploads the information to the control layer. \textit{At the control layer}, the regional SDN controller, on which our algorithm runs, is responsible for decision making and handover management with the support of MAG. For decision making, the following cooperative decisions are performed by our algorithm under the coordination of the controller, i.e., intra-VEC server resource allocation and inter-server task offloading that includes the horizontal task migration among servers and vertical task migration from the VEC server and the cloud server. For handover management, it is incorporated into the task migration to guarantee the connectivity for moving vehicles \cite{zhou2019energy,Malik2021}. With the SDN technology, the controller only needs the knowledge acquired from the edge layer, and makes accordingly decisions by performing the proposed algorithm. \textit{At the cloud layer}, the tasks can be vertically offloaded from the edge layer to the cloud layer under the guidance of the controller.}

{\color{color}
\par In the temporal domain, the VEC system operates in a discrete time-slotted manner. Specifically, the system timeline is equally divided into $T$ time slots, i.e., $t\in \mathbf{T}=\{0,\ldots,T-1\}$, where each slot duration $\Delta t$ is consistent with the coherence block of the wireless channel \cite{Liao2021}. Every $T_0$ consecutive slots are grouped into a time epoch $t_0\in\mathbb{Z}^+$, where $\mathbf{T}(t_0)=\{(t_0-1)T_0,\ldots,t_0\cdot T_0-1\}$. Moreover, the CSI and states of VEC servers and vehicles are captured and updated in different time scales. Specifically, in the short timescale, the information on CSI, offloading requirements of vehicles, and states of VEC servers are captured and updated. In the long timescale, the mobility of vehicles is captured and updated \footnote{The position and velocity of vehicles can be estimated from the GPS data that is known to the RSUs \cite{zhou2019computation}.}. This is because CSI keeps constant in a time slot and varies across different time slots, while the mobility of a vehicle keeps approximately constant during a few consecutive slots \cite{Liao2021}.
}

%\vspace{-2.5pt}
\subsubsection{Basic Models}
	
\par The basic models for the nodes in the system is illustrated in the following.

{\color{color}
\par \textit{\textbf{Mobility Model.}} The mobility of vehicle $i\in \mathcal{V}$ is denoted as $\left(\mathbf{P}_i(t),v_i(t)\right)$, where $\mathbf{P}_i(t)=\left(X_i(t), Y_i(t), 0\right)$ and  $v^{\min}\leq v_i(t)\leq v^{\max}$ denote the position and velocity  of vehicle $i$ at time $t$, respectively. Suppose that vehicle $i$ is located in the service range of VEC server $j$ at time $t$, the moving direction of vehicle $i$ is denoted as $\zeta_i(t)$ is the indicator function of vehicle $i$'s movement direction towards VEC server $j$. $\zeta_i(t)$ can be estimated based on the difference of the horizontal distance (between $i$ and $j$) between the current time epoch and the previous time epoch, i.e., $\Delta d_{i,j}(t) = |X_i(\left \lfloor{t/T_0}\right \rfloor)-X_j| - |X_i\left(\left \lfloor{t/T_0}\right \rfloor-1 \right)-X_j^{})|$. According to \cite{kim2022modems}, we consider the following two cases. \textbf{i)} If the previous location of the vehicle is known, $\zeta_i(t)$ can be given as an integral indicator, i.e., $\zeta_i(t)=\begin{cases} 1, \ \Delta d_{i,j}(t)>0 \\ -1, \ \Delta d_{i,j^{\text{cur}}}(t)<0\end{cases}$ to indicate the vehicle is moving toward ($\zeta_i(t)=1$) or moving away from ($\zeta_i(t)=-1$) VEC server $j$. \textbf{ii)} If vehicle's mobility is not fully known and should be predicted for the future time, i.e., $t=0$ or $\Delta d_{i,j}(t)=0$, $\zeta_i(t)=\pm P(\zeta_i(t))$ is defined as a continuous variable in [-1,1] to indicate the probability of vehicle's movement direction, where $P(\zeta_i(t))$ is a general model to represent the movement of vehicles which can be set as a typical mobility model, e.g., Markovian mobility \cite{wang2016dynamic}. Therefore, $\zeta_i(t)$ can be concluded as:
\begin{equation}
	\label{eq_mobility_direction}
	\zeta_i(t)=
	\begin{cases}
		1, \qquad \qquad  t>0, \quad \Delta d_{i,j^{\text{cur}}}(t)>0, \\ 
		-1, \qquad \quad \ t>0, \quad \Delta d_{i,j^{\text{cur}}}(t)<0,  \\
		\pm P(\zeta_i(t)), \ t=0 \ \text{or} \ \Delta 	d_{i,j^{\text{cur}}}(t)=0.
	\end{cases}
\end{equation}

\par Given that vehicle $i$ is located within the coverage of VEC server $j^{\text{cur}}$ currently, the remaining sojourn time of vehicle $i$ before leaving the coverage of VEC server $j$ can be obtained as:
\begin{equation}
	\label{eq_sojourn_delay}
	T_{i,j^{}}^{\text{soj}}=\frac{R_j+ \zeta_i(t)\cdot|X_i(t)-X_j^{\text{cur}}|}{v_i(t)}.
\end{equation}

\noindent where $R_{j}^{}$ is the communication radius of VEC server $j^{}$, $|X_i(t)-X_j^{}|$ is the horizontal distance between vehicle $i$ and VEC server $j^{}$.

\par Furthermore, the VEC server $j^{\text{arr}}$ that vehicle $i$ will be attached after time $T_i^{\text{move}}$ can be estimated as:
\begin{sequation}
	\label{eq_arrivalVEC}
	j^{\text{arr}}=j+\zeta_i(t)\cdot \left \lceil \frac{v_i(t)\cdot T_i^{\text{move}}-\left(R_j+ \zeta_i(t)\cdot|X_i(t)-X_j^{}|\right)}{v_i(t)}\right\rceil.
\end{sequation}
}

\par\textit{\textbf{Vehicle Model.}} Each vehicle $i \in \mathcal{V}$ is characterized by $\left(\mathbf{P}_i(t), v_i(t), f_i^{\max}, p_i^{\text{gen}}(t)\right)$, where $\mathbf{P}_i(t)=\left(X_i(t), Y_i(t), 0\right)$ and $v_i(t)$ denote the position and the velocity of vehicle $i$ at time $t$, respectively, and $f_i^{\max}$ is the computation capability of vehicle $i$. {\color{color} Moreover, each vehicle can generate multiple tasks during the system timeline, where in each time slot the vehicle could generate one task or not. The task generation of vehicle $i$ in time slot $t$ is denoted by a binary indicator $p_i^{\text{gen}}(t)\in\{0,1\}$, where $p_i^{\text{gen}}(t)=1$ means that vehicle $i$ generates a task.} Considering the resource limitation, each vehicle is assumed to be equipped with one CPU core \cite{Dai2019a}.

{\color{color}
\par\textit{\textbf{Task Model.}} The task generated by vehicle $i$ in time slot $t$ is denoted as $ \mathcal{T}_{i}(t)=\left(\mathcal{D}_{i}^{\text{in}}(t), \mathcal{D}_{i}^{\text{out}}(t),\mathcal{C}_{i}(t), \mathcal{C}_{i}^{\text{\text{\text{req}}}}(t), T_{i}^{\max}(t)\right)$, where $\mathcal{D}_{i}^{\text{in}}(t)$ is the size of the input computation task, $\mathcal{D}_{i}^{\text{out}}(t)$ is the size of the computation result, $\mathcal{C}_i(t)$ is the computation resources (CPU cycles/s) required by per bit of task (i.e., the computation intensity) , $\mathcal{C}_i^{\text{\text{\text{req}}}}(t)=\mathcal{D}_{i}^{\text{in}}(t)\cdot \mathcal{C}_i(t)$ is the amount of computation resources that is required to fulfill the task, and $T_{i}^{\max}$ is the maximum tolerable delay of the task.
}

\par\textit{\textbf{Server Model.}} First, the local of each edge server $j \in \{\mathcal{E}\}$ is denoted as $\mathbf{P}_j=(X_j, Y_j, 0)$. {\color{color} Furthermore, the heterogeneous computational capabilities of edge and cloud servers are characterized by different numbers of CPU cores and different amounts of computation resources of each CPU core, i.e., $j\in \{\mathcal{E}, o\}$ is characterized by $ \left (f_{j}^{\max},N_j^{\text{core}}\right)$, where $N_j^{\text{core}}$ denotes the number of CPU cores of server $j$ and $f_{j}^{\max}$ (cycles/s) denotes the maximal computational resource of server $j$. We consider that the servers are equipped with multi-core CPUs \cite{liu2019dynamic,Dai2019a} so that multiple tasks can be processed by a server in parallel, and each CPU core is dedicated to at most one task in each time slot \cite{liu2019dynamic}. }

\par {\color{color} \textit{\textbf{Strategy Variables.}} The following strategies are determined jointly. For task $\mathcal{T}_i(t)$, the offloading strategy is defined as a binary variant $s_i^a(t)\in\{0,1\}, a \in \{0,\mathcal{E},o\}$, where $a$ denotes the offloading destination of the task. Specifically, the task $\mathcal{T}_i(t)$ could be executed on vehicle $i$ ($s_{i}^{0}(t)=1$), on VEC server $j\in \mathcal{E}$ ($s_{i}^{j}(t)=1$), or on cloud server $o$ ($s_{i}^{o}(t)=1$). For server $j\in \{\mathcal{E},o\}$, the resource allocation strategy is denoted as $\{\left(f_{j,i}(t), c_{j, i}(t)\right)\}$, where $f_{j,i}(t)$ is the amount of computation resources provided by server $j$ to vehicle $i$ at time $t$, and $c_{j,i}(t)$ is the unit price of computation resources charged to vehicle $i$ for executing task $\mathcal{T}_i(t)$. }

%
%Communication model
%
\subsection{Communication Model}
\label{sec_communicationModel}

{ \color{color1}
\par This work mainly focuses on the task uploading from vehicles to edge servers, and the task downloading from edge servers to vehicles is omitted since the size of the computation outcome is generally much smaller than that of the computation input for most mobile applications \cite{guo2018collaborative,yi2019multi,Xu2019}. For task uploading, the power-domain NOMA is employed to support multiple requesting vehicles to simultaneously transmit their tasks to the RSU with which they are currently attached, considering the following reasons. In terms of the advantage of the power-domain NOMA, it has been recognized as a promising technique to achieve the capacity region for multi-user uplink communications in MEC networks \cite{Fang2021,Pan2019,Qian2021}, vehicular networks \cite{patel2021performance}, unmanned aerial vehicle (UAV) networks \cite{Liu2019a}, etc. Specifically, in the power-domain NOMA, multiple vehicles are able to share the same bandwidth resources, and they are distinguished in the power domain with the aid of the key technique of successive interference cancellation (SIC) \cite{Liu2022,liu2022developing}. In terms of the practicality of task uploading, when performing SIC, the SIC receiver decodes the stronger signals sequentially from the superimposed signals by treating the weaker signals as noise, implying that the performance of the power-domain NOMA is directly influenced by the capability of the SIC receivers \cite{Xu2020}.
Specifically, in the power-domain NOMA system, the SIC implementation has a linear computational complexity in the number of users \cite{luo2021error}. For the downlink scenario, the linear complexity may be a bottleneck since it requires the implementation of a sophisticated SIC scheme at each receiver with limited processing capabilities. However, for the uplink scenario that is mainly considered in this work, it is relatively more convenient and affordable for the RSU equipped with more powerful VEC servers when the capability of SIC receivers is taken into careful consideration.

\par In the considered NOMA-based VEC network, the details of employing the SIC for multi-vehicle task uploading is illustrated as follows. First, to capture the capability of the SIC receiver at the RSU side, we consider that each RSU $j$ is equipped with an $S_j$-SIC receiver \cite{Xu2020,Fang2020,Makki2020}, where $S_j$ means that the SIC receiver at RSU $j$ is capable of successively decoding the signals that are simultaneously transmitted by at most $S_j$ vehicles within its service range \footnote{\textcolor{color1}{The value of $S_j$ depends on the architecture types of SIC receivers (e.g., hard-based or soft-based SIC) in practical systems, which further relies on the capability of the RSU \cite{3gpp2018technical}.}}}. Furthermore, by using SIC, the channel gains of the uploading vehicles $\mathcal{V}_{j}(t)=\{1,\ldots,V_j(t)\}$ ($V_j(t)\leq S_j$) within the range of VEC server $j$ are first ordered as $g_{1,j}(t)\geq \cdots g_{i,j}(t) \cdots\geq g_{V_j,j}(t), \ \forall i \in \mathcal{V}_j(t)$ \cite{liu2018distributed}. To decode the signal of vehicle $i$, VEC server $j$ first decodes the stronger signals of vehicles $I^{\prime}<i$, then subtracts them from the superposed signal, and treats the weaker signals of vehicles $V_j(t)\geq I \geq i+1$ as interference. Therefore, the data transmission rate from vehicle $i$ to VEC server $j$ using NOMA can be calculated as:

\begin{equation}
\label{eq_dataRateUp}
	r_{i,j}(t)=B_{i,j}\cdot \log_2\left(1+\frac{P_{i}(t)\cdot g_{i,j}(t)}{N_0+\sum_{I=i+1}^{V_j(t)}P_{I}(t)\cdot g_{I,j}(t)}\right),
\end{equation}

\noindent where $B_{i,j}$ is the bandwidth, $P_{i}(t)$ is the transmit power of vehicle $i$, $g_{i,j}(t)$ is the channel power gain between vehicle $i$ and VEC server $j$, $N_0$ is the noise power, $P_{I}(t)$ is the transmit power of interference vehicle $I\in\mathcal{V}_j(t)$, and $g_{I,j}(t)$ is the channel power gain between the interference vehicle and the VEC server.

\par The channel power gain is calculated by incorporating the probabilistic LoS and NLoS transmissions into  both small-scale and large-scale fading \cite{yang2018dense}. For uplink communication, the channel power gain between vehicle $i$ and VEC server $j$ at time $t$ can be given as:
\begin{equation}
\label{eq_channelPowerGain}
	g_{i,j}(t)=p_{i,j}^{L}\cdot g_{i,j}^{L}(t) +(1-p_{i,j}^{L})\cdot  g_{i,j}^{NL}(t),
\end{equation}

\noindent where $p^{L}$ denotes the probability of LoS transmission and $g_{i,j}^{L}(t)/g_{i,j}^{NL}(t)$ denotes the channel power gain between vehicle $i$ and VEC server $j$ for LoS$/$NLoS transmission, which is calculated as:
{\color{color}
\setlength{\abovedisplayskip}{3pt}
\setlength{\belowdisplayskip}{3pt}
\begin{subnumcases}{\label{eq_channelPowerGain1}}
	$$g_{i,j}^{L}(t)=|h_{i,j}^{L}(t)|^2\cdot \left(L_{i,j}^{L}(t)\right)^{-1}\cdot 10^{\frac{-\chi_{\sigma}^{L}}{10}}$$,
	\label{eq_channelPowerGainLoS1}\\
	$$g_{i,j}^{NL}(t)=|h_{i,j}^{NL}(t)|^2\cdot \left(L_{i,j}^{NL}(t)\right)^{-1}\cdot 10^{\frac{-\chi_{\sigma}^{NL}}{10}}$$, \label{eq_channelPowerGainNLoS1}
\end{subnumcases}

\noindent where $h_{i,j}^{L}(t)/h_{i,j}^{NL}(t)$, $L_{i,j}^{L}(t)/L_{i,j}^{nL}(t)$, and $\chi_{\sigma}^{L}/\chi_{\sigma}^{NL}$ denote the components of \textit{small-scale fading}, \textit{pathloss}, and \textit{shadowing}, respectively for LoS/NLoS communication. For LoS communication, these components are detailed as follows. \textbf{i)} \textit{The small-scale fading} characteristic of the channel is captured by using a parametric-scalable and good-fitting generalized fading model, i.e., Nakagami-$m$ fading \cite{Boumaalif2022}.  Specifically, $h_{i,j}^{L}(t)/h_{i,j}^{NL}(t)$ follows a Nakagami distribution with fading parameter $m^{L}/m^{NL}$, i.e., $\begin{cases}h_{i,j}^{L}(t) \sim f_A\left(h_{i,j}^{L}(t),m^{L}\right)\\
h_{i,j}^{NL}(t) \sim f_A\left(h_{i,j}^{NL}(t),m^{NL}\right)\end{cases}$, $f_A(h,m)=\frac{2{\left(m\right)}^{m}\cdot h^{2m-1}\cdot e^{\left(-\frac{m\cdot  h^2}{\overline{p}}\right)}}{\Gamma\left(m\right) \cdot \overline{p}^m}$, where $0.5\leq m\leq 5$ denotes the Nakagami fading parameter, $\overline{p}$ is the average received power in the fading envelope, and $\Gamma(m)$ is the Gamma function. \textbf{ii)} \textit{The path loss} between vehicle $i$ and VEC server $j$ for LoS/NLoS communication can be calculated as $\begin{cases}L_{i,j}^{L}(t)=\frac{\left(4\pi \cdot d_0 \cdot f_c\right)^2}{c^2}\cdot \left(\frac{d_{i,j}(t)}{d_0}\right)^{\beta^{L}}\\L_{i,j}^{NL}(t)=\frac{\left(4\pi \cdot d_0 \cdot f_c\right)^2}{c^2}\cdot \left(\frac{d_{i,j}(t)}{d_0}\right)^{\beta^{NL}}\end{cases}$, where $f_c$ is the carrier frequency, $c$ is the speed of light, $d_0$ is the reference distance,  $d_{i,j}(t)$ is the distance between $i$ and $j$, and $\beta^{L}/\beta^{NL}$ is the path loss exponent for LoS/NLoS communication. \textbf{iii)} \textit{The shadowing} captures the signal attenuation caused by shadowing in transmission, which is a zero-mean Gaussian distributed random variable, i.e., $\begin{cases} \chi_\sigma^{L}(t)\sim\mathcal{N}\left(0,\left(\sigma^{L}\right)^2\right)\\
 \chi_\sigma^{NL}(t)\sim\mathcal{N}\left(0,\left(\sigma^{NL}\right)^2\right)
 \end{cases}$, where variance $\sigma^{L}/\sigma^{NL}$ denotes the standard deviation of shadowing for LoS/NLoS transmission \cite{yang2018dense}. Similarly, the channel power gain $g_{I,i}(t)$ between interference vehicle $I$ and VEC server $j$ can be easily obtained by replacing the corresponding parameters.
}

\subsection{Offloading Delay and Energy Consumption}
\label{sec_DelayModel}

\subsubsection{Offloading Delay}
\par For task $\mathcal{T}_i(t)$ generated by vehicle $i$ at time slot $t$, the service delay of completing the task depends on the offloading strategy $s_i^a$.

\par\textbf{\textit{Local Offloading.}} When task $\mathcal{T}_i(t)$ is processed by vehicle $i$ locally, the delay is given as:
\begin{equation}
\label{eq_time_local}
	T_{i}^{0}(t)=\frac{\mathcal{C}_{i}^{\text{\text{\text{req}}}}(t)}{f_{i(t)}},
\end{equation}
 
\noindent where $f_i(t)$ is the available computation capacity of vehicle $i$ at time $t$.

%\footnote{{\color{color1}The delay of result feedback is omitted for horizontal migration since the size of the computation outcome is generally much smaller than that of the computation input for most mobile applications \cite{guo2018collaborative,yi2019multi,Xu2019}.}}

{
\par \textbf{\textit{Edge Offloading.}} When task $\mathcal{T}_i(t)$ is processed by VEC server $j$, the offloading delay mainly consists of transmission delay, computation delay and horizontal migration delay \footnote{{\color{color1}The delay of result feedback is omitted for horizontal migration as mentioned in Section \ref{sec_communicationModel}.}}, i.e.,
\begin{sequation}
	\label{eq_edge_delay}
	\begin{aligned}
	T_{i}^{j}(t)=\underbrace{T_{i,j^{\text{cur}}}^{\text{tran}}(t)}_{\text{\tiny{Transmission}}}+\underbrace{T_{i,j}^{\text{comp}}(t)}_{\text{\tiny{Computation}}}+\overbrace{\underbrace{T_{j^{\text{cur}},j}^{\text{hand}}(t) \mathbb{I}_{\left(j\neq j^{\text{cur}}\right)}}_{\text{\tiny{Task handover}}}+\underbrace{T_{j,j^{\text{arr}}}^{\text{hand}}(t) \mathbb{I}_{\left(j\neq j^{\text{arr}}\right)}}_{\text{\tiny{Result handover}}}}^{\text{{Horizontal migration} }}.
	\end{aligned}
\end{sequation}
}

\noindent \textbf{i)} $T_{i,j^{\text{cur}}}^{\text{tran}}(t)=\frac{\mathcal{D}_{i}^{\text{in}}(t)}{r_{i,j}(t)}$ is the delay for vehicle $i$ to transmit the task to VEC $j^{\text{cur}}$ where the vehicle is within the service area. \textbf{ii)} $T_{i,j}^{\text{comp}}(t)=\frac{\mathcal{C}_i^{\text{\text{\text{req}}}}(t)}{f_{j,i}(t)}$ is the execution delay of the task, where $f_{j,i}(t)$ is the computation resources that is allocated by VEC server $j$ to the task. {\color{color} \textbf{iii)} Similarly to \cite{Malik2021}, the horizontal migration delay incorporates task handover delay and result handover delay, i.e.,  $T_{j^{\text{cur}},j}^{\text{hand}}(t)=\frac{2\mathcal{D}_{i}^{\text{in}}(t)}{r_{f}}$ and $T_{j,j^{\text{arr}}}^{\text{hand}}(t)=\frac{2\mathcal{D}_{i}^{\text{out}}(t)}{r_{f}}$, where $r_{f}$ is the data rate of fiber link. On the one hand, when task is offloaded to the VEC server in whose coverage the vehicle is currently located, the task handover is used to forward the task firstly from the source VEC server $j^{\text{cur}}$ to the controller, and then from the controller to the selected VEC server $j$. On the other hand, when the result is not generated by the VEC server where the vehicle will arrive, the result handover is used to forward the result from VEC server $j$ to the controller, and from the controller to the destination VEC server $j^{\text{arr}}$ with which the vehicle will be attached. Note that $j^{\text{arr}}$ can be estimated by Eq. \eqref{eq_arrivalVEC}.} Therefore, the total delay of edge offloading can be obtained as:
{\color{color}
\begin{sequation}
	\label{eq_edge_delay1}
	T_{i}^{j}(t)={\frac{\mathcal{D}_{i}^{\text{in}}(t)}{r_{i,j}(t)}}+{\frac{2\mathcal{D}_{i}^{\text{in}}(t)}{r_{f}}\cdot \mathbb{I}_{\left(j\neq j^{\text{cur}}\right)}}+{\frac{\mathcal{C}_i^{\text{\text{\text{req}}}}(t)}{f_{j,i}(t)}}+{\frac{2\mathcal{D}_{i}^{\text{out}}(t)}{r_{f}}\cdot \mathbb{I}_{\left(j\neq j^{\text{arr}}\right)}}.
\end{sequation}
}

%
% Cloud computing
%
{\color{color}
\par \textbf{\textit{Cloud Offloading.}} When task $\mathcal{T}_i(t)$ is processed by cloud server $o$, the offloading delay mainly includes transmission delay, computation delay and vertical migration delay, i.e., 
\begin{sequation}
	\label{eq_cloud_delay}
	\begin{aligned}
		T_{i}^{o}(t)=\underbrace{T_{i,j^{\text{cur}}}^{\text{tran}}(t)}_{\text{{Transmission}}}+\underbrace{T_{i,o}^{\text{comp}}(t)}_{\text{{Computation}}}+\overbrace{\underbrace{T_{j^{\text{cur}},o}^{\text{hand}}(t) }_{\text{{Task handover}}}+\underbrace{T_{o,j^{\text{arr}}}^{\text{hand}}(t) }_{\text{{Result handover}}}}^{\text{{Vertical migration} }}.
	\end{aligned}
\end{sequation}

\noindent \textbf{i)} $T_{i,j^{\text{cur}}}^{\text{tran}}(t)=\frac{\mathcal{D}_{i}^{\text{in}}(t)}{r_{i,j}(t)}$ is the delay for vehicle $i$ to transmit the task to VEC server $j^{\text{cur}}$ within the current service area. \textbf{ii)} $T_{i,o}^{\text{comp}}(t)=\frac{\mathcal{C}_i^{\text{\text{\text{req}}}}(t)}{f_{o,i}(t)}$ is the computing delay of the task, wherein $f_{o,i}(t)$ is the computation resources that is allocated to the task. \textbf{iii)} similarly to edge offloading, the vertical migration delay consists of task handover delay and result handover delay, i.e.,
$T_{j^{\text{cur}},o}^{\text{hand}}(t)= \frac{\mathcal{D}_{i}^{\text{in}}(t)}{r_{c}}$ and $T_{o,j^\text{arr}}^{\text{hand}}(t)=\frac{\mathcal{D}_{i}^{\text{out}}(t)}{r_{c}}$, where $r_c$ is the data rate between the VEC server and the cloud \cite{Qian2019}. Therefore, the total delay of cloud offloading can be obtained as:
}
\begin{equation}
	\label{eq_cloud_delay1}
	T_{i}^{o}(t)=\frac{\mathcal{D}_{i}^{\text{in}}(t)}{r_{i,j}(t)}+\frac{\mathcal{C}_i^{\text{\text{\text{req}}}}(t)}{f_{o,i}(t)}+\frac{\mathcal{D}_{i}^{\text{in}}(t)+\mathcal{D}_{i}^{\text{out}}(t)}{r_{c}}.
\end{equation}

%
%Energy consumption model
%

\subsubsection{Energy Consumption}
\label{sec_energyModel}

\par Completing task $\mathcal{T}_i(t)$ could incur additional costs for vehicle $i$, VEC server $j$ or cloud server $o$.

\par \textbf{\textit{Local Offloading.}} The energy consumption of vehicle $j$ to execute task $\mathcal{T}_i(t)$ locally is widely formulated as:
\begin{equation}
	\label{eq_energyLocalExe}
	E_{i}^{0}(t)=\alpha_i\cdot(f_{i}(t))^{\tau-1}\cdot \mathcal{C}_{i}^{\text{\text{\text{req}}}}(t),
\end{equation}
	
{\color{color} \noindent where $\alpha_i\geq0$ is the effective switched capacitance of vehicle $i$'s CPU that depends on the CPU chip architecture \cite{pan2021cost}, and $\tau>0$ is the constant that is typically set as $2$ or $3$ \cite{Zhang2018}.}

%Energy consumption for edge computing

\par {\color{color}\textbf{\textit{Edge/Cloud Offloading.}} The energy consumption of server $j\in \{\mathcal{E},o\}$ to execute task $\mathcal{T}_i(t)$ can be given as:}
\begin{equation}
\label{eq_energyMecComp}
	E_{j}^{i}(t)=\alpha_j\cdot(f_{j,i}(t))^{\tau-1}\cdot \mathcal{C}_{i}^{\text{req}}(t),
\end{equation}

\noindent where $\alpha_j\geq0$ denotes the effective switched capacitance of server $j$'s CPU.

%
%Utility
%

\subsection{Utility Model}
\label{sec_utilityModel}

{\color{color}\par When task $\mathcal{T}_i(t)$ is completed by using the offloading strategy $s_i^a(t)$ at time $t$, the utilities of vehicle $i$, VEC server $j$, and cloud server $o$ are formulated as follows.}

\subsubsection{Vehicle Utility}
\label{sec_vehicleUtility}

{\color{color} \par The utility obtained by vehicle $i\in \mathcal{V}$ from offloading task $\mathcal{T}_i(t)$ is formulated as:
\begin{equation}
	\label{eq_utilityVehicle}
	U_i^a(t)=w_i\cdot \Psi_i^a(t)-(1-w_i)C_i^a(t),
\end{equation}

\noindent where $\Psi_i^a(t)$ and $C_i^a(t)$ are the normalized satisfaction level of task completion delay and the normalized cost of task execution, respectively, when selecting offloading strategy $a$}, and $w_i$ is the weight coefficient of the satisfaction level. 

\par \textit{First}, the satisfaction function is a term that is widely used in economics, which is formulated as a logarithmic function that is convex and starts from zero. It has been employed to quantify the satisfaction level of task offloading \cite{Zhang2020a,zhao2019computation}. {\color{color} Based on these studies, the normalized satisfaction level can be calculated as:
\begin{equation}
	\label{eq_SatisficationVehicleNormal}
	\begin{aligned} 
		\Psi_i^a(t)=\frac{\log\left(1+\left(T_i^{\max}-T_i^a(t)\right)\right)}{\log\left(1+T_i^{\max}\right)},
	\end{aligned}
\end{equation}

\noindent where $T_i^a(t)$ is the total delay for completing task $\mathcal{T}_i(t)$, which can be obtained based on Eqs. \eqref{eq_time_local}, \eqref{eq_edge_delay1} and \eqref{eq_cloud_delay1}. 

\par \textit{Second}, the normalized cost of vehicle $i$ can be computed based on the energy consumption for local offloading or the payment for remote offloading, i.e.,
\setlength{\abovedisplayskip}{0pt}
\begin{subnumcases}{\label{eq_cost}C_i^a(t)=}
	$$\frac{E_i^{0}(t)}{E_i^{\max}}$$, \qquad  \qquad a=0,
	\label{eq_costA}\\
	$$\frac{c_{j,i}(t)\cdot f_{j,i}(t)}{C_i^{\max}}$$, \quad \ a=j, \ j\in\{\mathcal{E},o\} \label{eq_costB},
\end{subnumcases}	

\noindent where $E_i^0(t)$ (given by Eq. \eqref{eq_energyLocalExe}) is the computation energy consumption of vehicle $i$ and $E_i^{\max}$ is the energy constraint of vehicle $i$ \cite{ning2020intelligent}, $c_{j,i}(t)$ is the unit price of the computation resources charged by the VEC server or cloud server, and $f_{j,i}(t)$ is the computation resources allocated to vehicle $i$ by server $j$, and $C_i^{\max}$ is the budget of vehicle $i$ for the costs payed to the servers.

\par Therefore, $U_i^a(t)$ can be obtained based on Eqs.  \eqref{eq_energyLocalExe}, \eqref{eq_utilityVehicle} and \eqref{eq_cost} as:
\begin{sequation}
\label{eq_utilityVehicle1}
	\begin{aligned}
		&U_i^a(t)=w_i\cdot \frac{\log\left(1+\left(T_i^{\max}-T_i^a(t)\right)\right)}{\log\left(1+T_i^{\max}\right)}-(1-w_i)\\&\left(\frac{\alpha_i\cdot(f_{i}(t))^{\tau-1}\cdot \mathcal{C}_{i}^{\text{\text{\text{req}}}}(t)}{E_i^{\max}}\cdot \mathbb{I}_{\left(a=0\right)}+\frac{c_{j,i}(t)\cdot f_{j,i}(t)}{C_i^{\max}}\cdot \mathbb{I}_{\left(a=j\right)}\right).
	\end{aligned}
\end{sequation}

}

\subsubsection{Server Utility}
\label{sec_utility_server}

\par The utility of \textcolor{color}{server} $j\in \{\mathcal{E},o\}$ obtained by executing task $\mathcal{T}_i(t)$ is formulated as the revenue of task processing subtracting the cost of energy consumption:
\begin{equation}
	\label{eq_utilityMEC}
	\begin{aligned}
		U_j^i(t)&=w_j \cdot \mathcal{R}_j^i(t)-(1-w_j)C_j^i(t),
	\end{aligned}
\end{equation}

\noindent where $\mathcal{R}_j^i(t) $ and $C_j^i(t)$ are the normalized revenue and normalized cost of server $j$, respectively, and $w_j$ is the weight coefficient of revenue. 

\par {\color{color} \textit{First}, the normalized revenue of server $j$ to process task $\mathcal{T}_i(t)$ can be given as:
\begin{equation}
	\label{eq_revenueMEC}
	\begin{aligned}
		\mathcal{R}_j^i(t) = \frac{ c_{j,i}(t) \cdot f_{j,i}(t)}{C_j^{\max} \cdot f_{j}^{\max}},
	\end{aligned}
\end{equation}

\noindent where $ C_j^{\max} $ is the maximum unit price of server $j$'s computation resources. 

\par \textit{Second}, the normalized energy consumption of server $j$ to process task $\mathcal{T}_i(t)$ can be given as:
\begin{equation}
	\label{eq_costVEC}
	\begin{aligned}
		C_j^i(t) = \frac{E_j^i(t)}{E_j^{\max}},
	\end{aligned}
\end{equation}
 
\noindent where $E_j^i(t)$ (given by Eq. \eqref{eq_energyMecComp}) is the computation energy consumption of server $j$ and $E_j^{\max}$ is the energy constraint of server $j$.

\par Therefore, $U_j^i(t)$ can be obtained based on Eqs. \eqref{eq_utilityMEC}, \eqref{eq_revenueMEC}, \eqref{eq_costVEC}, and \eqref{eq_energyMecComp} as: 
\begin{sequation}
	\label{eq_utilityMEC1}
	\begin{aligned}
		U_j^i(t)&=w_j \cdot \frac{ c_{j,i}(t) \cdot f_{j,i}(t)}{ C_j^{\max} \cdot f_{j}^{\max}}-(1-w_j)\cdot\frac{\alpha_j\cdot(f_{j,i}(t))^{\tau-1}\cdot \mathcal{C}_{i}^{\text{req}}(t)}{E_j^{\max}}.
	\end{aligned}
\end{sequation}

\subsubsection{Social Welfare}
\label{sec_social_welfare}
%	 \cite{border1985fixed}
\par Social welfare is employed in this work to quantify the system performance of computation resource allocation and task offloading for the servers and vehicles in the VEC network. Therefore, the social welfare at time $t$ can be given as follows:
\begin{sequation}
	\label{eq_socialwelfare}
	\begin{aligned}
			SW(t)&=\underbrace{\sum_{i\in \mathcal{V}}\sum_{a\in \mathcal{A}}p_i^{\text{gen}}(t) s_i^a(t) U_i^a(t)}_{\text{The total utility of vehicles}}+\underbrace{\sum_{j\in \{\mathcal{E},o\}}\sum_{i\in \mathcal{V}}s_i^j(t) p_i^{\text{gen}}(t) U_j^i(t)}_{\text{The total utility of servers}}
			\\&=\sum_{i\in \mathcal{V}}\sum_{a\in \mathcal{A}}p_i^{\text{gen}}(t)\cdot s_i^a(t)\cdot \left(U_i^a(t)+U_a^i(t)\right).
		\end{aligned}
\end{sequation}

\noindent Note that Eq. \eqref{eq_socialwelfare} incorporates the local, \textcolor{color}{edge, and cloud} offloading strategies. If vehicle $i$ offloads the task locally at time $t$, i.e., $a=0$, then the utility of server $j$ to provide service for vehicle $i$ is 0, i.e., $U_{0}^{i}(t)=0$.
}

%
% Problem formulation
%
\section{Problem Formulation}
\label{sec_problemFormulation}

\par The objective of this work can be transformed to maximize the social welfare over $\mathbf{T}$ slots by jointly optimizing the task offloading strategy \textcolor{color}{$S_t = \{s_i^a(t)\}_{i\in \mathcal{V}, a \in \mathcal{A}, t\in \mathbf{T}}$}, and the computation resource allocation and pricing strategy \textcolor{color}{$S_c= \{f_j^i(t),c_j^i(t)\}_{j\in \{\mathcal{E},o\}, i\in \mathcal{V}, t\in \mathbf{T}}$. Therefore, JRATOP can be formulated as follows:}
\begin{sequation}
\label{eq_problem}
	\begin{aligned}
		\textbf{P}: \quad &\max_{S_t, S_c}  \ \sum_{t=t_0}^\mathbf{T}SW(t)\\
		\text{s.t.}\ \ 
		&\text{C1}: s_i^a(t)\in\{0,1\}, \ \forall i\in \mathcal{V}	a\in \mathcal{A}\\
		&\text{C2}: \sum_{a\in \mathcal{A}}s_i^a(t)\leq 1, \ \forall  i\in \mathcal{V},a\in \mathcal{A}\\
		&\textcolor{color}{\text{C3}: p_i^{\text{gen}}(t)=\{0,1\}, \ \forall i\in \mathcal{V}}\\
		&\textcolor{color}{\text{C4}: s_i^a(t)\cdot T_i^a(t)\leq T_i^{\max}(t), \ \forall i\in \mathcal{V}, \forall j\in \{\mathcal{E}, o\}, a\in \mathcal{A}}\\
		&\textcolor{color}{\text{C5}: s_i^a(t) \cdot T_{i,j^{\text{cur}}}^{\text{tran}}(t) \leq T_{i,j^{\text{cur}}}^{\text{soj}}, \ \forall i\in \mathcal{V}, \forall j^{\text{cur}}, a \in \{\mathcal{E}, o\}},\\
		& \textcolor{color}{\text{C6}: s_i^a\cdot T_{j,j^{\text{arr}}}^{\text{hand}}(t) \leq T_{i,j^{\text{arr}}}^{\text{soj}}, \ 
			\forall i\in \mathcal{V}, \forall j\in \{\mathcal{E}, o\}, \forall a\in\{\mathcal{E},o\}}\\
		&\textcolor{color}{\text{C7}: v^{\min}\leq v_i(t) \ \leq v^{\max}, \ \forall i\in \mathcal{V}}\\
		&\textcolor{color}{\text{C8}: \sum_{i\in\mathcal{V}} s_i^{j}(t)\cdot f_{j,i}(t) \leq  f_j^{\max}, \  \forall j\in \{\mathcal{E},o\}}\\
		&\textcolor{color}{\text{C9}: \sum_{i \in \mathcal{V}} s_i^{j}(t) \leq N_{j}^{\text{core}}, \ \forall j \in \{\mathcal{E},o\}}\\
		&\textcolor{color}{\text{C10}: s_i^0(t)\cdot E_i^0(t)\leq 	E_i^{\max}, \ \forall i\in \mathcal{V}}\\
		&\textcolor{color}{\text{C11}: \sum_{i\in\mathcal{V}} s_i^j(t)\cdot E_j^i(t)\leq 	E_j^{\max}, \ \forall j\in \{\mathcal{E},o\}}\\
		&\textcolor{color}{\text{C12}: s_i^a(t)\cdot c_{a,i}(t)\cdot f_{a,i}(t)\leq C_i^{\max}, \ \forall i\in \mathcal{V}, \forall a\in{\{\mathcal{E},o\}}} \\
	\end{aligned}
\end{sequation}

\noindent \textcolor{color}{where $T_{i,j^{\text{cur}}}^{\text{soj}}$ and $T_{i,j^{\text{arr}}}^{\text{soj}}$ can be obtained by Eq. \eqref{eq_sojourn_delay}, and $j^{\text{arr}}$ can be obtained by Eq. \eqref{eq_arrivalVEC}.} Constraints C1 and C2 are the values of offloading strategies of vehicles, which indicates that the vehicle can only select one strategy as its offloading decision. {\color{color} Constraint C3 represents each vehicle generates at mos time slot. Constraint C4 is the delay constraint of the task, which guarantees that the task is completed before the deadline. Constraint C5 ensures that the task uploading is completed before vehicle $i$ moving out of the coverage of the connected VEC server $j^{\text{cur}}$. Constraint C6 guarantees that the result dispatch is completed before vehicle $i$ moving out of the coverage of VEC server $j^{\text{arr}}$ with which it will be attached. VEC server $j^{\text{arr}}$ can be obtained by substituting  $T_{i}^{\text{move}}=T_{i,{{j}^{\text{cur}}}}^{\text{tran}}(t)+T_{{{j}^{\text{cur}}},j}^{\text{hand}}(t){{\mathbb{I}}_{\left( j\ne {{j}^{\text{cur}}} \right)}}+T_{i,j}^{\text{comp}}(t)$ into Eq. \eqref{eq_arrivalVEC}, where $T_{i}^{\text{move}} (j\in \{\mathcal{E},o\})$ is the moving duration before the result of task is generated.} Constraint C7 poses constraints on the velocity of each vehicle. Constraints C8 and C9 limit the computation resources and the number of CPU cores, respectively, for each server. {\color{color} Constraints C10 and C11 constrain the energy budgets of vehicles and servers, respectively. Constraint C12 represents a vehicle's maximum payment for the computational resources provided by servers.}

\begin{theorem}
	\label{lemma_NP}
	The problem $\textbf{P}$ formulated in Eq. \eqref{eq_problem} is NP-hard. 
\end{theorem}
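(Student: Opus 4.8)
The plan is to prove NP-hardness by exhibiting a polynomial-time reduction from a classical NP-hard combinatorial problem to a suitably restricted instance of $\textbf{P}$. The restriction I would use retains only the combinatorial core of $\textbf{P}$ --- the binary offloading decisions $s_i^a(t)$ together with the per-server capacity constraint C8 --- and drives all the remaining structure (time coupling, mobility, pricing, and the continuous allocation $f_{j,i}(t)$) into a degenerate regime. Concretely, I would fix a single slot ($T=1$), fix the unit price $c_{j,i}(t)$, and force the allocated resource $f_{j,i}(t)$ to a prescribed constant whenever task $i$ is placed on server $j$, so that the only surviving degrees of freedom are the assignment variables and C8 becomes the linear capacity inequality $\sum_{i} s_i^{j}(t)\, f_{j,i}(t)\le f_j^{\max}$.

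I would first argue that, under an appropriate choice of the auxiliary parameters, every constraint unrelated to the capacity structure is rendered inactive: taking stationary (or slow) vehicles that remain within the coverage of their serving RSU for the whole slot makes the sojourn and velocity constraints C5--C7 slack; taking a sufficiently large tolerable delay $T_i^{\max}$ makes the deadline constraint C4 slack for every feasible placement; choosing generous budgets $E_i^{\max}$, $E_j^{\max}$, $C_i^{\max}$ makes C10--C12 slack; and setting $N_j^{\text{core}}$ no smaller than the number of tasks makes the core-count constraint C9 slack. What survives is the requirement to assign each task to at most one destination (C2) subject to the per-server resource budget (C8), while maximizing the social welfare of Eq. \eqref{eq_socialwelfare}, which under these choices collapses to a sum of assignment-dependent profits $U_i^{j}(t)+U_j^{i}(t)$.

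The reduction I would then set up is from the Generalized Assignment Problem (GAP): given $n$ items, $m$ bins with capacities $b_j$, and a profit $p_{ij}$ and weight $w_{ij}$ for placing item $i$ in bin $j$, maximize the total profit of an assignment of items to at most one bin each, subject to $\sum_i w_{ij}\le b_j$ per bin. I would map items to tasks, bins to edge servers, $b_j$ to $f_j^{\max}$, the fixed requirement $f_{j,i}(t)$ to $w_{ij}$, and the combined edge utility $U_i^{j}(t)+U_j^{i}(t)$ to $p_{ij}$; the local alternative $a=0$ plays the role of leaving an item unassigned, so I would calibrate its utility to zero, and I would suppress the cloud alternative $a=o$ by assigning it a nonpositive utility. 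With this dictionary the restricted instance of $\textbf{P}$ is precisely a GAP instance, and the construction is carried out in time polynomial in $n$ and $m$.

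The step I expect to be the main obstacle is \emph{parameter realizability}: verifying that an arbitrary GAP profit $p_{ij}\ge 0$ can indeed be produced by the structured utility $U_i^{j}(t)+U_j^{i}(t)$ of Eqs. \eqref{eq_utilityVehicle1} and \eqref{eq_utilityMEC1} through an admissible setting of $w_i$, $w_j$, $c_{j,i}(t)$, $\alpha_i$, $\alpha_j$, the input sizes, and the normalizing budgets, \emph{simultaneously} with keeping every neutralized constraint slack. If one prefers to sidestep the bin-dependent weights, the same argument goes through by reducing instead from the (strongly NP-hard) Multiple Knapsack Problem, where $w_{ij}=w_i$ depends on the task only and is realized by a task-independent allocation $f_{j,i}(t)=w_i$, and a single-server specialization collapses further to $0$--$1$ Knapsack. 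Once realizability is established, any polynomial-time algorithm solving $\textbf{P}$ would optimally solve the embedded GAP (equivalently MKP) instance, contradicting its NP-hardness unless $\mathrm{P}=\mathrm{NP}$; hence $\textbf{P}$ is NP-hard.
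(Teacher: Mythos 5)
Your overall strategy---specializing $\textbf{P}$ to a single slot, neutralizing the mobility, deadline, energy, and budget constraints, and reducing from GAP or the Multiple Knapsack Problem so that C2 and C8 become the assignment and capacity constraints---is the standard route for this class of joint offloading/allocation problems and the right family of reductions here (the paper relegates its argument to a supplemental appendix, but it is of the same specialize-and-reduce type). However, as written your proposal has a genuine gap, and it is not only the ``parameter realizability'' step you flag. The deeper problem is that $f_{j,i}(t)$ is a \emph{decision variable} of $\textbf{P}$ (it is part of $S_c$ in Eq.~\eqref{eq_problem}), so you are not entitled to ``force the allocated resource $f_{j,i}(t)$ to a prescribed constant'': restricting decision variables changes the problem, whereas an NP-hardness reduction may only restrict the \emph{instance}. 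You must instead construct instances in which the welfare-optimal continuous allocation for each admissible pair $(i,j)$ provably equals your intended weight $w_{ij}$, and then argue that the residual combinatorial problem over the $s_i^a(t)$ is your GAP instance. Your own neutralization choices work against this: once C4 is made slack by taking $T_i^{\max}$ large and C10--C12 are made slack by generous budgets, nothing pins $f_{j,i}(t)$ to $w_{ij}$ --- the welfare $U_i^j(t)+U_j^i(t)$ in Eqs.~\eqref{eq_utilityVehicle1} and \eqref{eq_utilityMEC1} has a nontrivial interior optimum in $f_{j,i}(t)$ (the logarithmic satisfaction term rises with $f_{j,i}$ while the server's energy term grows like $f_{j,i}^{\tau-1}$), so the optimizer can shrink or grow allocations to evade or exploit C8, and C8 no longer behaves like a knapsack constraint with prescribed item weights.

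The fix is available but must be made explicit: either (a) keep C4 \emph{tight} rather than slack, choosing $T_i^{\max}$ so that task $i$ is feasible on server $j$ only when $f_{j,i}(t)\ge w_{ij}$, with cost coefficients making welfare decreasing in $f_{j,i}$ beyond that point, which forces $f_{j,i}^*(t)=w_{ij}$ exactly; or (b) compute the closed-form optimizer $f_{j,i}^*(t)$ (essentially what Theorem~\ref{lemma_convex} provides on the vehicle side), define $w_{ij}:=f_{j,i}^*(t)$ and $p_{ij}:=U_i^j(t)+U_j^i(t)$ evaluated there, and verify that these derived quantities can encode a hard GAP/MKP family. Either way, the realizability verification you defer \emph{is} the proof; without it the argument is a plan rather than a reduction.
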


\begin{proof}
	The detailed proof is given in Appendix A of the supplemental material.
\end{proof}

%
%Joint offloading and resource allocation scheme
%
\section{BARGAIN-MATCH}
\label{sec_jointOffloading}

{\color{color}
\par To solve problem $\mathbf{P}$, the algorithm of BARGAIN-MATCH is proposed for resource allocation and task offloading by using the bargaining and matching schemes. BARGAIN-MATCH  mainly consists of the following two parts. i) For intra-server resource allocation, the bargaining game is used to stimulate the negotiation between the requesting vehicle and server on resource allocation and price in Section \ref{sec_bargain}. ii)  For inter-server task offloading, the many-to-one matching is constructed between tasks and servers to stimulate both edge-edge collaboration for horizontal task migration and edge-cloud collaboration for vertical task migration. 
}

%
%Two-player resource allocation and pricing: a   game scheme
%
\subsection{Resource Allocation and Pricing: A Bargaining Game-based Scheme}
\label{sec_bargain}

%\par The bargaining game is used to model the interaction between the requesting vehicle and its preferred server for negotiating the optimal resource allocation and pricing. 
%

%
%Fundamentals
%
\subsubsection{Fundamentals}
\label{sub_fundamentalsBargaining}

\par {\color{color} Denote the set of requesting vehicles that have tasks to upload at time $t$ as $\mathcal{V}^{\text{req}}(t)=\{i|i\in\mathcal{V},p_i^{\text{gen}}(t)=1,s_i^0(t)=0\}$. A finite-time bargaining game is constructed to stimulate the negotiation between the requesting vehicle and its expected server on the decisions of resource allocation and pricing in period $\Delta t$. The bargaining game is defined as a triplet of $(\mathcal{P}_B, \mathcal{S}_B, \mathcal{U}_B)$, which is as follows:}

\begin{itemize}[]
	\item $\mathcal{P}_B=\{i\in \mathcal{V}^{\text{req}}(t), j \in \{\mathcal{E},o\}$ denotes the parties, i.e., a seller and a buyer. Vehicle $i$ acts as a buyer that aims to offload its task by buying the required computation resources from the server. Moreover, server $j$ acts as a seller who aims to execute the task of the vehicle by allocating the computation resources and charging from the vehicle.
	
	\item $\mathcal{S}_B=\{f_{j,i}(t),c_{j,i}(t)\}$ denotes the set of strategies. The strategy of vehicle $i$ is to request the satisfactory amount of computation resources from the server, and the strategy of server $j$ is to decide the satisfactory price of the computation resources that are sold to vehicle $i$.	
	
	\item $\mathcal{U}_B=\{U_i^j(t), U_j^i(t)\}$ denotes the utilities of vehicle $i$ and server $j$, wherein $U_i^j(t)$ and $U_j^i(t)$ are given in Eqs. \eqref{eq_utilityVehicle1} and \eqref{eq_utilityMEC1}, respectively.
\end{itemize}

%
%Optimal resource allocation and pricing
%
\subsubsection{Resource Allocation and Pricing}
\label{sec_optAlloPrice}

\par In this section, the optimal resource allocation and pricing strategies are presented.

{\color{color}
\begin{theorem}
	\label{lemma_convex}
	For vehicle $i$, the expected optimal amount of computation resources it intends to request from target server $j$ to offload task $\mathcal{T}_i(t)$ is obtained as: $f_{j,i}^*(t)=\frac{2w_i\cdot C_i^{\max}}{F(var_1,c_{j,i}(t))-\log\left(1+T_i^{\max}(t)\right)\cdot c_{j,i}(t)\cdot\left(1-w_i\right)}$, where $F(var_1,c_{j,i}(t))$ is given by \textbf{Eq. (24)}, which is the function of variables $var_1$ and $c_{j,i}(t)$. $var_1={\frac{\mathcal{D}_{i}^{\text{in}}(t)}{r_{i,j}(t)}}+{\frac{2\mathcal{D}_{i}^{\text{in}}(t)}{r_{f}}\cdot \mathbb{I}_{\left(j\neq j^{\text{cur}}\right)}}+{\frac{2\mathcal{D}_{i}^{\text{out}}(t)}{r_{f}}\cdot \mathbb{I}_{\left(j\neq j^{\text{arr}}\right)}}$ for edge server $j\in \mathcal{E}$, and $var_1=\frac{\mathcal{D}_{i}^{\text{in}}(t)}{r_{i,j}(t)}+\frac{\mathcal{D}_{i}^{\text{in}}(t)+\mathcal{D}_{i}^{\text{out}}(t)}{r_{c}}$ for cloud server $j=o$.
\end{theorem}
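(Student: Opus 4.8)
The plan is to treat the buyer's (vehicle's) utility $U_i^j(t)$ from Eq.~\eqref{eq_utilityVehicle1} (in the case $a=j$) purely as a function of the requested resource amount $f_{j,i}(t)$, holding the quoted price $c_{j,i}(t)$ and all transmission and migration terms fixed, and then to maximize it by a first-order analysis. First I would substitute the edge delay Eq.~\eqref{eq_edge_delay1} (resp.\ the cloud delay Eq.~\eqref{eq_cloud_delay1}) into the satisfaction term and observe that every delay contribution except the computation delay $\mathcal{C}_i^{\text{req}}(t)/f_{j,i}(t)$ is independent of $f_{j,i}(t)$. Collecting those independent terms into the single constant $var_1$ lets me write $T_i^j(t)=var_1+\mathcal{C}_i^{\text{req}}(t)/f_{j,i}(t)$, so that $U_i^j(t)$ depends on $f_{j,i}(t)$ only through the logarithmic satisfaction term and the linear cost term $-(1-w_i)c_{j,i}(t)f_{j,i}(t)/C_i^{\max}$.

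The key step is to establish concavity, so that the unique stationary point is the global maximizer. Restricting to the feasible region where the task meets its deadline, i.e.\ $T_i^{\max}(t)-T_i^j(t)>0$, equivalently $f_{j,i}(t)>\mathcal{C}_i^{\text{req}}(t)/\big(1+T_i^{\max}(t)-var_1\big)$, I would compute the second derivative of $\log\big(1+T_i^{\max}(t)-var_1-\mathcal{C}_i^{\text{req}}(t)/f_{j,i}(t)\big)$ and show it is strictly negative there. The sign reduces to $2\big(1+T_i^{\max}(t)-var_1\big)f_{j,i}(t)-\mathcal{C}_i^{\text{req}}(t)>0$, which holds throughout the feasible region since there $f_{j,i}(t)$ already exceeds $\mathcal{C}_i^{\text{req}}(t)/\big(1+T_i^{\max}(t)-var_1\big)$. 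As the remaining cost term is linear, $U_i^j(t)$ is strictly concave in $f_{j,i}(t)$, so the equation $\partial U_i^j(t)/\partial f_{j,i}(t)=0$ has a unique feasible solution which is the maximizer.

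Setting that derivative to zero and clearing denominators yields a quadratic in $f_{j,i}(t)$ whose leading coefficient is proportional to $(1-w_i)c_{j,i}(t)\log\big(1+T_i^{\max}(t)\big)\big(1+T_i^{\max}(t)-var_1\big)$ and whose constant term is $-w_i\,\mathcal{C}_i^{\text{req}}(t)\,C_i^{\max}$. Because the constant term is negative, exactly one root is positive, and that is the root I keep. The final step is purely algebraic: instead of writing the positive root in the usual $(-b+\sqrt{\,\cdot\,})/2a$ form, I would use the reversed quadratic formula $x_+=2c/\!\big(-b-\sqrt{b^2-4ac}\big)$ and then divide numerator and denominator by $\mathcal{C}_i^{\text{req}}(t)$. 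This moves $2w_iC_i^{\max}$ into the numerator and turns the radical into $F(var_1,c_{j,i}(t))$, the square root of the discriminant after normalization by $\mathcal{C}_i^{\text{req}}(t)$ (Eq.~(24)), while the $-b$ term becomes exactly $\log\big(1+T_i^{\max}(t)\big)c_{j,i}(t)(1-w_i)$, producing the stated denominator. The main obstacle is not the calculus but this last rearrangement: one has to recognize the reversed-quadratic rewriting and the cancellation of $\mathcal{C}_i^{\text{req}}(t)$ to land precisely on the claimed closed form, and to check that the cloud case $j=o$ goes through verbatim with the only change being the definition of $var_1$ read off from Eq.~\eqref{eq_cloud_delay1}.
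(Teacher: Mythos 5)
Your proposal is correct and follows what is essentially the paper's own route: fix $c_{j,i}(t)$, isolate the $f_{j,i}(t)$-dependence of the delay as $T_i^j(t)=var_1+\mathcal{C}_i^{\text{req}}(t)/f_{j,i}(t)$, establish concavity of $U_i^j(t)$ on the feasible region, and solve the first-order condition as a quadratic whose positive root, written in the $2d/(-b-\sqrt{b^2-4ad})$ form and normalized by $\mathcal{C}_i^{\text{req}}(t)$, yields exactly the stated expression with $F(var_1,c_{j,i}(t))$ as in Eq.~(24). I verified the discriminant indeed equals $\left(\mathcal{C}_i^{\text{req}}(t)\right)^2 F^2$ and the cancellation you describe lands precisely on the claimed closed form, so no gaps remain.
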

}

\begin{proof}
	\label{proof_concave}
	The detailed proof is given in Appendix B of the supplemental material.
\end{proof}

\newcounter{mytempeqncnt}
\begin{figure*}[!h]
	\label{eq:F}
	\normalsize
	\setcounter{mytempeqncnt}{\value{equation}}
	\setcounter{equation}{23}
	{\color{color}
	\begin{sequation}
		F(var_1,c_{j,i}(t))=\sqrt{\frac{c_{j,i}(t) \log(1+T_i^{\max}(t))(1-w_i) \left(c_{j,i}(t) \cdot \mathcal{C}_i^{\text{req}}(t) \cdot \log(1+T_i^{\max}(t)) (1-w_i)+4C_i^{\max} w_i\left(1+T_i^{\max}(t)-var_1\right)\right)}{\mathcal{C}_i^{\text{req}}(t)}}
	\end{sequation}
    }
	\setcounter{equation}{\value{mytempeqncnt}}
	\hrulefill 
%	\vspace*{0pt} 
\end{figure*}
\setcounter{equation}{24}

{\color{color}
\begin{lemma}
	\label{lemma_priceRange}	
	 When the controller decides to offload task $\mathcal{T}_i(t)$ to server $j\in\{\mathcal{E},o\}$ at time slot $t$, there exist a lower bound and an upper bound for the unit price of the computation resources that are allocated to the task, i.e., $c_{j,i}^{\min}(t) \leq c_{j,i}(t)\leq c_{j,i}^{\max}(t)$, where		$c_{j,i}^{\min}(t)=\frac{ (1-w_j)\alpha_j\cdot(f_{j,i}(t))^{\tau-2}\cdot \mathcal{C}_{i}^{\text{req}}(t)\cdot C_j^{\max}\cdot f_j^{\max}}{w_j\cdot E_j^{\max}}$ and $	c_{j,i}^{\max}(t)= \frac{w_i\log\left(1+T_i^{\max}-T_i^j(t)\right)\cdot C_i^{\max}}{(1-w_i)\cdot f_{j,i}(t)\cdot \log\left(1+T_i^{\max}\right)}$.
\end{lemma}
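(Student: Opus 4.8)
The plan is to derive both bounds from the individual-rationality (participation) constraints of the two parties in the bargaining game $(\mathcal{P}_B, \mathcal{S}_B, \mathcal{U}_B)$. A resource-allocation-and-pricing agreement can only be concluded if neither the seller (server $j$) nor the buyer (vehicle $i$) is made worse off by participating, i.e., both $U_j^i(t)\geq 0$ and $U_i^j(t)\geq 0$ must hold. I will show that the server's constraint produces the lower price bound $c_{j,i}^{\min}(t)$ and the vehicle's constraint produces the upper price bound $c_{j,i}^{\max}(t)$, so that any mutually acceptable price lies in the stated interval.

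First I would establish the lower bound. Starting from the server utility in Eq. \eqref{eq_utilityMEC1} and imposing $U_j^i(t)\geq 0$, I would isolate the revenue term $w_j\, c_{j,i}(t) f_{j,i}(t)/(C_j^{\max} f_j^{\max})$ on one side and the normalized energy-cost term $(1-w_j)\alpha_j (f_{j,i}(t))^{\tau-1}\mathcal{C}_i^{\text{req}}(t)/E_j^{\max}$ on the other. Since $w_j>0$ and $f_{j,i}(t)>0$, multiplying through by the positive factor $C_j^{\max} f_j^{\max}/(w_j f_{j,i}(t))$ preserves the inequality and, after canceling one power of $f_{j,i}(t)$, yields $c_{j,i}(t)\geq c_{j,i}^{\min}(t)$ with $c_{j,i}^{\min}(t)$ exactly as stated.

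Next I would establish the upper bound symmetrically. Starting from the vehicle utility in Eq. \eqref{eq_utilityVehicle1} specialized to the remote-offloading case $a=j$ and imposing $U_i^j(t)\geq 0$, I would move the payment term $(1-w_i)c_{j,i}(t)f_{j,i}(t)/C_i^{\max}$ to one side and the normalized satisfaction term to the other. Because $(1-w_i)$, $f_{j,i}(t)$, and $\log(1+T_i^{\max})$ are all positive, solving for $c_{j,i}(t)$ gives $c_{j,i}(t)\leq c_{j,i}^{\max}(t)$, again matching the claimed expression. Combining the two inequalities establishes the interval $c_{j,i}^{\min}(t)\leq c_{j,i}(t)\leq c_{j,i}^{\max}(t)$.

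These manipulations are routine, so the only point I would flag as genuinely requiring care is that the argument guarantees the \emph{existence} of each bound separately but not automatically that the interval $[c_{j,i}^{\min}(t), c_{j,i}^{\max}(t)]$ is nonempty. Non-emptiness is exactly the condition that offloading is mutually beneficial, namely that the server's minimum acceptable price does not exceed the vehicle's maximum affordable price; this implicitly depends on the delay saving $T_i^{\max}-T_i^j(t)$ being large enough relative to the server's energy cost. I would remark that ensuring this feasibility is the responsibility of the controller's offloading decision in the matching stage, so that the bargaining subgame on which the lemma operates is already well posed.
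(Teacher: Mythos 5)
Your proposal is correct and follows what is essentially the paper's own route: the lower bound falls out of imposing the server's participation constraint $U_j^i(t)\geq 0$ on Eq. \eqref{eq_utilityMEC1}, and the upper bound from imposing the vehicle's participation constraint $U_i^j(t)\geq 0$ on Eq. \eqref{eq_utilityVehicle1} with $a=j$, exactly as you describe, and the algebra reproduces the stated expressions. Your closing remark that non-emptiness of $[c_{j,i}^{\min}(t), c_{j,i}^{\max}(t)]$ is a separate feasibility condition (handled by the offloading decision rather than by the lemma itself) is a sensible and accurate observation.
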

}

\begin{proof}
	The detailed proof is given in Appendix C of the supplemental material.
\end{proof}

\par According to Lemma \ref{lemma_priceRange}, the bid-ask spread can be obtained as \textcolor{color}{$\Delta c_{j,i}(t) = c_{j,i}^{\max}(t)-c_{j,i}^{\min}(t) $.} The trade on the computation resources between vehicle $i$ and VEC server $j$ is modeled as the process of two players bargaining over the pie of size $\Delta c_{j,i}(t)$ according to \cite{rubinstein1982perfect}. Obviously, both bargainers desire to reach an agreement on the proposal of the partition earlier because their utilities will be discounted over time. Therefore, the discount factor is introduced in the bargaining game to describe the discount of the partition in the future. The discount factor captures the patience levels of the bargainers. In other words, the smaller discount factor indicates that the players are impatient with the delay of the negotiation. Accordingly, the discount factors of vehicle $i$ and server $j$ are formulated as follows:
\begin{equation}
\label{eq_discountVehicle}
	\epsilon_{i}(t)=1-\frac{T_{i,j}^{\text{tran}}(t)}{T_i^{\max}(t)},
\end{equation}
\begin{equation}
\label{eq_discountMEC}
	\epsilon_{j}(t)=1-\frac{T_{i,j}^{\text{comp}}(t)}{T_i^{\max}(t)}.
\end{equation}

\noindent For vehicle $i$, it is more impatient if it takes longer time to upload the task to the VEC server, leading to the lower value of $\epsilon_i(t)$. For VEC server $j$, it is more impatient if it takes longer time to execute the task. Besides, the larger task deadline $T_i^{\max}(t)$ indicates that the players have higher endurance to the delay, leading to higher values of $\epsilon_i(t)$ and $\epsilon_j(t)$.

\begin{lemma}
	\label{theorem_ne}
	\par The bargaining game has an unique perfect partition. In the period $T^{b}\in {T}(n)$ in which vehicle $i$ makes a proposal, the optimal partitions are given as:
	$$
		\begin{cases}
			{\delta_{i}^{i}}^{*}(t)=\epsilon_{i}(t)-\frac{\left(1-\epsilon_{i}(t)\right)\left(1-\left(\epsilon_{i}(t)\epsilon_{j}(t)\right)^{\lceil \frac{T^b}{2}\rceil}\right)}{1-\epsilon_{i}(t)\epsilon_{j}(t)},\\
			{\delta_{j}^{i}}^{*}(t)=\frac{\left(1-\epsilon_{i}(t)\right)\left(2-\epsilon_{i}(t)\epsilon_{j}(t)-\left(\epsilon_{i}(t)\epsilon_{j}(t)\right)^{\lceil \frac{T^b}{2}\rceil}\right)}{1-\epsilon_{i}(t)\epsilon_{j}(t)}.
		\end{cases}
	$$
	
	\noindent In the period $T^b\in T(n)$ when server $j$ makes a proposal, the optimal partitions are given as:
	
	$$
	 \begin{cases}
		{\delta_{i}^{j}}^{*}(t)=\frac{\left(1-\epsilon_{j}(t)\right)\left(1-\left(\epsilon_{i}(t) \epsilon_{j}(t)\right)^{\lceil{\frac{T^{b}}{2}\rceil}}\right)}{1-\epsilon_{i}(t) \epsilon_{j}(t)},\\
		{\delta_{j}^{j}}^{*}(t)=\frac{\epsilon_{j}(t)\left(1-\epsilon_{i}(t)\right)-\left(1-\epsilon_{j}(t)\right)\left(\epsilon_{i}(t) \epsilon_{j}(t)\right)^{\lceil{\frac{T^{b}}{2}\rceil}}}{1-\epsilon_{i}(t) \epsilon_{j}(t)}.
	\end{cases}
	$$
\end{lemma}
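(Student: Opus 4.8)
The plan is to model the negotiation as a finite-horizon, alternating-offers bargaining game over the pie of normalized size one (the bid-ask spread $\Delta c_{j,i}(t)$) and to solve it by backward induction, in the spirit of \cite{rubinstein1982perfect} but truncated at the horizon fixed by the task deadline. In each period one party proposes a partition $(\delta_i, \delta_j)$ with $\delta_i + \delta_j = 1$; the other party either accepts, ending the game, or rejects, in which case each party's future share is scaled by its own discount factor $\epsilon_i(t)$ or $\epsilon_j(t)$ (Eqs. \eqref{eq_discountVehicle}--\eqref{eq_discountMEC}) and the right to propose passes to the opponent, with disagreement through the final period yielding zero payoff. I would fix the convention that vehicle $i$ and server $j$ alternate, so that with $T^b$ periods remaining both the identity of the current proposer and the count $\lceil T^b/2\rceil$ of each player's remaining proposing rounds are pinned down.

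First I would establish the base case: by backward induction the last proposer can claim essentially the entire remaining pie, since rejection leaves the responder with nothing, so the responder accepts. Next I would set up the one-step recursion linking consecutive subgames. Writing the proposer's and responder's equilibrium shares in a subgame, subgame perfection forces the proposer to offer the responder exactly the discounted continuation value it would secure by rejecting and proposing next period, and to keep the complement. This yields a pair of coupled linear recursions of the form $\delta_i^{(k)} = 1 - \epsilon_j(t)\,\delta_j^{(k-1)}$ together with its mirror image. Unfolding these recursions, the homogeneous part contributes the geometric factor $(\epsilon_i(t)\epsilon_j(t))^{\lceil T^b/2\rceil}$ while the particular (fixed-point) part contributes the denominator $1-\epsilon_i(t)\epsilon_j(t)$; collecting terms reproduces the stated closed forms for both the $i$-proposing and $j$-proposing periods. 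A useful checkpoint is that the two shares sum to the whole pie, e.g. ${\delta_{i}^{i}}^{*}(t)+{\delta_{j}^{i}}^{*}(t)=1$, confirming that agreement is immediate in equilibrium.

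Finally, uniqueness follows because at every node of the backward induction the best response is essentially unique: a responder strictly prefers to accept any offer exceeding its discounted continuation value and to reject any offer below it, so the proposer's unique optimal action is to offer exactly that value and retain the remainder. Propagating this determinacy from the terminal period back to the first yields a single subgame-perfect partition, which is the ``unique perfect partition'' claimed.

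I expect the main obstacle to be the bookkeeping in solving the coupled recursion: correctly tracking which discount factor, $\epsilon_i(t)$ or $\epsilon_j(t)$, multiplies the continuation value in each period, and translating the number of remaining alternations into the exponent $\lceil T^b/2\rceil$ so that the algebra matches the stated expressions exactly. The economic logic and the uniqueness argument are standard; the delicate part is the exact algebra of the finite truncation and the consistency of the two proposing cases.
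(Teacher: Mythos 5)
Your overall strategy---treating the negotiation as a finite-horizon alternating-offers game over the spread and solving it by backward induction through a coupled linear recursion---is the right family of argument and is almost certainly the spirit of the paper's proof (the geometric factor $\left(\epsilon_i(t)\epsilon_j(t)\right)^{\lceil T^b/2\rceil}$ and the denominator $1-\epsilon_i(t)\epsilon_j(t)$ can only come from unrolling such a recursion). The gap is that the specific conventions you fix do not generate the stated closed forms, and you assert rather than verify that ``collecting terms reproduces'' them. Under your base case (the last proposer claims the whole pie) and your recursion $\delta_i^{(k)}=1-\epsilon_j(t)\,\delta_j^{(k-1)}$, unrolling gives the proposer $i$ a share of the form $\frac{\left(1-\epsilon_j(t)\right)\left(1-\left(\epsilon_i(t)\epsilon_j(t)\right)^{m}\right)}{1-\epsilon_i(t)\epsilon_j(t)}$ plus a boundary term: the numerator carries $1-\epsilon_j(t)$, whereas the lemma's ${\delta_{i}^{i}}^{*}(t)$ carries $1-\epsilon_i(t)$ together with an additional leading $\epsilon_i(t)$. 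The discrepancy is already visible at $T^b=1$: your derivation gives the proposer the entire pie, while the lemma gives ${\delta_{i}^{i}}^{*}(t)=2\epsilon_i(t)-1$, which is not $1$ (and is negative when $\epsilon_i(t)<1/2$). So the extensive form behind the lemma must differ from the one you wrote down---in its terminal or disagreement payoffs, or in which player's discount factor scales which continuation value---and identifying that protocol is precisely the missing step; without it the final algebra cannot ``match the stated expressions exactly.''

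Your proposed sanity check also backfires: in the period in which server $j$ proposes, the two stated shares sum to $1-\frac{2\left(1-\epsilon_j(t)\right)\left(\epsilon_i(t)\epsilon_j(t)\right)^{\lceil T^b/2\rceil}}{1-\epsilon_i(t)\epsilon_j(t)}$, which is not $1$ in general, so ``the two shares sum to the whole pie'' cannot be used to validate the result (it instead reveals either an asymmetric protocol or a sign issue that your derivation would have to reconcile). The uniqueness argument you sketch is standard and fine once the game tree is pinned down, but until you exhibit the exact backward-induction recursion and base case whose solution is ${\delta_{i}^{i}}^{*}(t)=\epsilon_i(t)-\frac{\left(1-\epsilon_i(t)\right)\left(1-\left(\epsilon_i(t)\epsilon_j(t)\right)^{\lceil T^b/2\rceil}\right)}{1-\epsilon_i(t)\epsilon_j(t)}$---and confirm it against the $T^b=1$ case---the proof is not complete.
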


\begin{proof}
	\label{eq_optPartition}
	The detailed proof is given in Appendix D of the supplemental material.
\end{proof}

\begin{theorem}
	\label{lemma_pricing}	
	\par The optimal price of computation resource $c_{j,i}^*(t)$ that the VEC server $j$ charges vehicle $i$ is obtained as: 1) in the period in which vehicle $i$ makes a proposal, \textcolor{color}{ $c_{j,i}^*(t)=c_{j,i}^{\max}(t)-\Delta c_{j,i}(t)\cdot {\delta_{i}^{i}}^{*}(t)$; 2) in the period in which VEC server $j$ makes a proposal, $	c_{j,i}^*(t)=c_{j,i}^{\max}(t)-\Delta c_{j,i}(t) \cdot {\delta_{i}^{j}}^{*}(t)$.}
\end{theorem}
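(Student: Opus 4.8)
The plan is to treat the bid-ask spread $\Delta c_{j,i}(t)=c_{j,i}^{\max}(t)-c_{j,i}^{\min}(t)$ from Lemma~\ref{lemma_priceRange} as the divisible ``pie'' of the Rubinstein bargaining, and then to translate the equilibrium partitions of Lemma~\ref{theorem_ne} from the abstract share axis back to the price axis by an affine change of variables. The whole argument reduces to identifying which endpoint of the feasible price interval each party's share is measured from, and then inverting a single linear relation.

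First I would fix the allocation $f_{j,i}(t)$ (already pinned down by Theorem~\ref{lemma_convex}) and record the monotonicity of the two utilities in the remaining variable $c_{j,i}(t)$. From Eq.~\eqref{eq_utilityMEC1} the server utility $U_j^i(t)$ is affine and strictly increasing in $c_{j,i}(t)$, whereas from Eq.~\eqref{eq_utilityVehicle1} the vehicle utility $U_i^j(t)$ is affine and strictly decreasing in $c_{j,i}(t)$. Lemma~\ref{lemma_priceRange} then identifies the two interval endpoints as the players' reservation prices: at $c_{j,i}(t)=c_{j,i}^{\min}(t)$ the server surplus vanishes, and at $c_{j,i}(t)=c_{j,i}^{\max}(t)$ the vehicle surplus vanishes. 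Hence the buyer (vehicle) gains as the price falls below $c_{j,i}^{\max}(t)$, while the seller (server) gains as the price rises above $c_{j,i}^{\min}(t)$.

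Next I would express the share that each party extracts from the pie in price units, namely the buyer share $\frac{c_{j,i}^{\max}(t)-c_{j,i}(t)}{\Delta c_{j,i}(t)}$ and the seller share $\frac{c_{j,i}(t)-c_{j,i}^{\min}(t)}{\Delta c_{j,i}(t)}$. By construction these are nonnegative on $[c_{j,i}^{\min}(t),c_{j,i}^{\max}(t)]$ and sum to one, which matches the normalization ${\delta_{i}^{i}}^{*}(t)+{\delta_{j}^{i}}^{*}(t)=1$ (and likewise ${\delta_{i}^{j}}^{*}(t)+{\delta_{j}^{j}}^{*}(t)=1$) that follows directly from the closed forms in Lemma~\ref{theorem_ne}. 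Identifying the buyer price share with the vehicle equilibrium partition share $\delta_i^{*}(t)$ gives $\frac{c_{j,i}^{\max}(t)-c_{j,i}(t)}{\Delta c_{j,i}(t)}=\delta_i^{*}(t)$, which I invert to obtain $c_{j,i}^{*}(t)=c_{j,i}^{\max}(t)-\Delta c_{j,i}(t)\cdot\delta_i^{*}(t)$. Substituting $\delta_i^{*}(t)={\delta_{i}^{i}}^{*}(t)$ for the period in which vehicle $i$ proposes and $\delta_i^{*}(t)={\delta_{i}^{j}}^{*}(t)$ for the period in which server $j$ proposes yields the two claimed expressions, and the uniqueness of the price inherits the uniqueness of the perfect partition in Lemma~\ref{theorem_ne}.

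The main obstacle is pinning down the orientation of this affine map: one must verify that the vehicle share of the surplus is measured downward from $c_{j,i}^{\max}(t)$ rather than upward from $c_{j,i}^{\min}(t)$, which rests on the monotonicity directions of the first step together with the correct assignment of reservation prices from Lemma~\ref{lemma_priceRange}. A secondary check is the consistency of the two shares summing to one, so that inverting either the buyer share or the seller share returns the same $c_{j,i}^{*}(t)$; this is precisely the algebraic identity ${\delta_{i}^{i}}^{*}(t)+{\delta_{j}^{i}}^{*}(t)=1$ satisfied by the closed forms in Lemma~\ref{theorem_ne}.
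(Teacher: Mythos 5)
Your derivation is correct and is essentially the paper's own route: the bid--ask spread $\Delta c_{j,i}(t)$ is the Rubinstein pie, $c_{j,i}^{\max}(t)$ and $c_{j,i}^{\min}(t)$ are identified from Lemma~\ref{lemma_priceRange} as the zero-surplus prices of the vehicle and server respectively (which your monotonicity argument from Eqs.~\eqref{eq_utilityVehicle1} and \eqref{eq_utilityMEC1} establishes), and the vehicle's equilibrium partition is mapped to a price by the affine inversion $c_{j,i}^{*}(t)=c_{j,i}^{\max}(t)-\Delta c_{j,i}(t)\cdot\delta_i^{*}(t)$, with $\delta_i^{*}$ instantiated per proposer. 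One caveat on your ``secondary check'': the closed forms in Lemma~\ref{theorem_ne} do satisfy ${\delta_{i}^{i}}^{*}(t)+{\delta_{j}^{i}}^{*}(t)=1$, but direct computation gives ${\delta_{i}^{j}}^{*}(t)+{\delta_{j}^{j}}^{*}(t)=1-\frac{2\left(1-\epsilon_j(t)\right)\left(\epsilon_i(t)\epsilon_j(t)\right)^{\lceil T^b/2\rceil}}{1-\epsilon_i(t)\epsilon_j(t)}\neq 1$ in general, so inverting the seller's share in the server-proposal period would not return the same price. This does not damage your proof, since the theorem's formulas depend only on the vehicle's shares ${\delta_{i}^{i}}^{*}(t)$ and ${\delta_{i}^{j}}^{*}(t)$, but the claimed identity should not be asserted for the server-proposal case as the partitions are printed in Lemma~\ref{theorem_ne}.
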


\begin{proof}
	\label{eq_optPrice}
		The detailed proof is given in Appendix E of the supplemental material.
\end{proof}

{\color{color}
\begin{corollary}
\label{cor_deal}
\par It can be concluded that a deal on the optimal computation resource allocation $f_{j,i}^*(t)$ and pricing $c_{j,i}^*(t)$ can be achieved by Eqs. \eqref{eq_optAllo} and \eqref{eq_optAlloPrice}, respectively, when $f_{j,i}^*(t)$ and $c_{j,i}^*(t)$ satisfy Theorems \ref{lemma_convex} and \ref{lemma_pricing} simultaneously.

\begin{sequation}
	\label{eq_optAllo}
	f_{j,i}^*(t)=\frac{2w_i\cdot C_i^{\max}}{F\left(var_1,c_{j,i}^*(t)\right)-\log\left(1+T_i^{\max}(t)\right)\cdot c_{j,i}(t)\cdot\left(1-w_i\right)},
\end{sequation}

\begin{subnumcases}{\label{eq_optAlloPrice}c_{j,i}^*(t)=}
	$$c_{j,i}^{\max}(t)-\Delta c_{j,i}(t)\cdot {\delta_{i}^{i}}^{*}(t)$$,  \label{eq_optAlloPriceA}\\
	$$c_{j,i}^{\max}(t)-\Delta c_{j,i}(t)\cdot  {\delta_{i}^{j}}^{*}(t)$$.\label{eq_optAlloPriceB}
\end{subnumcases}
\end{corollary}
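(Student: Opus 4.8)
The plan is to assemble the three preceding results into a single mutually acceptable agreement and to argue that this agreement is exactly the pair $\left(f_{j,i}^*(t), c_{j,i}^*(t)\right)$ determined by the coupled Eqs.~\eqref{eq_optAllo} and \eqref{eq_optAlloPrice}. First I would establish \emph{individual rationality}. By Lemma~\ref{lemma_priceRange} the unit price is confined to $[c_{j,i}^{\min}(t), c_{j,i}^{\max}(t)]$, where $c_{j,i}^{\min}(t)$ is the price at which the server utility $U_j^i(t)$ of Eq.~\eqref{eq_utilityMEC1} vanishes and $c_{j,i}^{\max}(t)$ is the price at which the vehicle utility $U_i^j(t)$ of Eq.~\eqref{eq_utilityVehicle1} vanishes. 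Hence the bid-ask spread $\Delta c_{j,i}(t)=c_{j,i}^{\max}(t)-c_{j,i}^{\min}(t)\ge 0$ measures the surplus to be divided, and \emph{any} price in this interval leaves both parties with nonnegative utility; whenever $\Delta c_{j,i}(t)>0$ the buyer and seller both strictly prefer trading to walking away, so a deal is feasible.

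Next I would obtain the agreed point. By Lemma~\ref{theorem_ne} the finite-horizon bargaining over the pie $\Delta c_{j,i}(t)$ admits a unique perfect partition, whose buyer share ${\delta_i^i}^*(t)$ (resp.\ ${\delta_i^j}^*(t)$) lies in $[0,1]$. Theorem~\ref{lemma_pricing} converts this partition into the unique equilibrium price $c_{j,i}^*(t)$ of Eq.~\eqref{eq_optAlloPrice}; since the share is in $[0,1]$, the identity $c_{j,i}^*(t)=c_{j,i}^{\max}(t)-\Delta c_{j,i}(t)\cdot {\delta_i^i}^*(t)$ forces $c_{j,i}^*(t)\in[c_{j,i}^{\min}(t), c_{j,i}^{\max}(t)]$, so the agreed price remains individually rational for both sides. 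Substituting this price into the buyer's best-response characterization of Theorem~\ref{lemma_convex} yields the agreed allocation $f_{j,i}^*(t)$ of Eq.~\eqref{eq_optAllo}. The resulting pair then satisfies Theorems~\ref{lemma_convex} and \ref{lemma_pricing} at once, which is precisely the asserted deal: the allocation is the buyer's optimum against the agreed price, and the price is the unique equilibrium of the bargaining.

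The hard part will be the \emph{mutual coupling} of the two equations. The allocation formula of Theorem~\ref{lemma_convex} writes $f_{j,i}^*(t)$ as a function of $c_{j,i}^*(t)$, whereas the bounds $c_{j,i}^{\min}(t), c_{j,i}^{\max}(t)$—and therefore the equilibrium price of Theorem~\ref{lemma_pricing}—themselves depend on $f_{j,i}(t)$. Thus Eqs.~\eqref{eq_optAllo} and \eqref{eq_optAlloPrice} form a coupled system whose consistency must be verified: one must show a \emph{simultaneous} solution exists, not merely that each equation is solvable in isolation. I would handle this by regarding the deal as a fixed point of the self-map $c \mapsto c_{j,i}^*\!\left(f_{j,i}^*(c)\right)$ on the compact interval $[c_{j,i}^{\min}(t), c_{j,i}^{\max}(t)]$; because the partition weights of Lemma~\ref{theorem_ne} keep the image inside this interval and the composed map is continuous in $c$, the intermediate value theorem on this one-dimensional interval (equivalently, Brouwer's fixed-point theorem) guarantees a fixed point. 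That fixed point is the pair claimed by the corollary, and the individual rationality established in the first step ensures that neither party has a profitable deviation, so the agreement is genuinely achieved.
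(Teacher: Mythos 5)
Your proposal is correct in substance but takes a more ambitious route than the paper. The paper treats Corollary \ref{cor_deal} as an immediate conclusion: it simply combines Theorem \ref{lemma_convex} (the buyer's best-response allocation as a function of price) with Lemma \ref{theorem_ne} and Theorem \ref{lemma_pricing} (the unique perfect partition of the bid--ask spread and the resulting price), and the statement itself is explicitly \emph{conditional} --- it asserts that a deal is reached \emph{when} $f_{j,i}^*(t)$ and $c_{j,i}^*(t)$ satisfy both results simultaneously, so no existence argument is demanded. The circular dependence you correctly identify (the bounds $c_{j,i}^{\min}(t)$ and $c_{j,i}^{\max}(t)$ of Lemma \ref{lemma_priceRange} depend on $f_{j,i}(t)$, while Eq.~\eqref{eq_optAllo} gives $f_{j,i}^*(t)$ as a function of $c_{j,i}^*(t)$) is resolved by the paper not analytically but operationally, in Algorithm \ref{algo_allo_price}: the server first pins $f_{j,i}^*(t)$ to its available resource $f_j^{avl}$, then the price is computed from Eq.~\eqref{eq_optAlloPrice} and adjusted through the pricing rule of Definition \ref{def_pricingRule} until both utilities are nonnegative. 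Your fixed-point framing of the composed map $c \mapsto c_{j,i}^*\!\left(f_{j,i}^*(c)\right)$ is a genuinely different and more rigorous way to certify that a simultaneous solution exists, and it buys an existence guarantee the paper never actually supplies; its one soft spot is that the interval $[c_{j,i}^{\min}(t), c_{j,i}^{\max}(t)]$ is not a fixed domain --- its endpoints themselves move with $f_{j,i}(t)$ --- so before invoking the intermediate value theorem you would need to exhibit a genuinely invariant compact interval (or work with the map on the partition share in $[0,1]$ instead) and verify continuity of $F(var_1,\cdot)$ and of the bounds along the way. Your individual-rationality reading of the two bounds (server utility vanishing at $c_{j,i}^{\min}(t)$, vehicle utility vanishing at $c_{j,i}^{\max}(t)$) is consistent with Eqs.~\eqref{eq_utilityVehicle1} and \eqref{eq_utilityMEC1} and matches the role these bounds play in the paper's pricing rule.
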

}

\par Based on the result of optimal resource allocation and pricing, a negotiation approach between vehicle $i$ and VEC server $j$ that intend to conclude a transaction for task offloading is presented in Algorithm \ref{algo_allo_price}. The negotiation is mainly based on the pricing rule, which is defined as follows.

\begin{definition}
	\label{def_pricingRule}
	
	Pricing Rule. Before the transaction, \textcolor{color}{VEC server} $j$ initially sets the optimal resource allocation $f_{j,i}^*(t)$ as its available computation resource. Then, the resource price is updated based on the following rules.
	
	\begin{itemize}
		\item If $U_i^j(t)>0$ $\&\&$ $U_j^i(t)>0$, an agreement on the optimal resource allocation and pricing is reached.
		
		\item If $U_i^j(t)>0$ $\&\&$ $U_j^i(t)<0$,  vehicle $i$ proposes the optimal price based on Eq. \eqref{eq_optAlloPriceA}.
		
		\item If $U_i^j(t)<0$ $\&\&$ $U_j^i(t)>0$, VEC server $j$ proposes the optimal price based on Eq. \eqref{eq_optAlloPriceB}.
		
		\item If $U_i^j(t)<0$ $\&\&$ $U_j^i(t)<0$,  either VEC server $j$ or vehicle $i$ can propose the optimal price.
	\end{itemize}
\end{definition}

\begin{algorithm}[]	
	\label{algo_allo_price}	
	\SetAlgoLined
	\KwIn{Vehicle $i$, VEC server $j$}
	\KwOut{The optimal reource allocation and pricing strategy  $s_{j,i}^*$ between $i$ and $j$}
	\textbf{ Initialization:} 
	$U_i^j(t)= 0$; $U_j^i(t)= 0$\;
	\textcolor{color}{VEC server} $j$ sets the optimal allocation as \textcolor{color}{$f_{j,i}^*(t) = f_j^{avl}$}\;
	Calculate the optimal price $c_{j,i}^*(t)$ based on Eq. \eqref{eq_optAlloPrice}\;
	Calculate $U_i^j(t)$ for vehicle $i$ based on \textcolor{color}{Eq. \eqref{eq_utilityVehicle1}}\;
	Calculate $U_j^i(t)$ for VEC server $j$ based on \textcolor{color}{ Eq. \eqref{eq_utilityMEC1}}\;
	
	\uIf{$U_i^j(t)>0$ $\&\&$ $U_j^i(t)>0$}
	{
		\tcp{An agreement is reached. }
		\Return{ $s_{j,i}^*(t)=(f_{j,i}^*(t), c_{j,i}^*(t))$}\;
	}
	\uElseIf{$U_i^j(t)>0$ $\&\&$ $U_j^i(t)<0$}
	{
		Vehicle $i$ proposes the optimal price $c_{j,i}^*(t)$  based on Eq. \eqref{eq_optAlloPriceA}\;
	}
	\uElseIf{$U_i^j(t)<0$ $\&\&$ $U_j^i(t)>0$}
	{
		VEC server $j$ proposes the optimal price \textcolor{color}{$c_{j,i}^*(t)$  based on Eq. \eqref{eq_optAlloPriceB}}\;
	}
	\uElse{
		Either of the players can propose the optimal price \textcolor{color}{$c_{j,i}^*(t)$ based on  Eq. \eqref{eq_optAlloPrice}}\;
	}
	\Return{\textcolor{color}{$s_{j,i}^*=(f_{j,i}^*(t), c_{j,i}^*(t))$}}\;
	\caption{Resource Allocation and Pricing.}
\end{algorithm}	

%\vspace{-5pt}

%
% Muli-player offloading strategy selection: a matching game scheme
%
\subsection{Offloading Strategy Selection: A Matching-based Scheme}
\label{sec_matching}

\par {\color{color} Denote the tasks that have not been decided where to offload as $\mathcal{T}^{\text{req}}(t)=\{\mathcal{T}_i(t)| i\in \mathcal{V}\}$. Then the offloading strategy of each task $k\in\mathcal{T}^{\text{req}}(t)$ is decided using a many-to-one matching scheme. Specifically, if task $k$ is matched with VEC server $j\in \mathcal{E}$, it will be offloaded horizontally from VEC server $j^{\text{cur}}$ to VEC server $j$; if it is matched with server $o$, it will be offloaded vertically from VEC server $j^{\text{cur}}$ to the cloud.
}

%
% Fundamentals
%

\subsubsection{Fundamentals}
\label{sub_fundamentals}

\par The matching is described as a triplet of $ (\mathcal{P}_M,\Omega, \Phi)$: 

\begin{itemize}
	\item $\mathcal{P}_M=\left(\mathcal{T}^{\text{req}}(t), \{\mathcal{E},o\}\right)$ denotes two disjoint sets of players where $\mathcal{T}^{\text{req}}(t)=\{\mathcal{T}_i(t)| i\in \mathcal{V}\}$ is the set of tasks {\color{color} that have not been decided where to offload currently}, and $\{\mathcal{E},o\}$ is the set of servers.
	
	\item\textcolor{color}{$\Omega=\left(\Omega_k^{j},\Omega_j^{k}\right)$ denotes the preference lists of the tasks and servers. Each task $k\in \mathcal{T}^{\text{req}}(t)$ has a descending ordered preferences on the servers, i.e., $\Omega_k^j=\{j|j\in\{\mathcal{E},o\}, j\succ_{i}{j^\prime}\}$, where $\succ_k$ denotes the preference of task $k$ towards the servers. Furthermore, each server $j \in \{\mathcal{E},o\}$ has a descending ordered preference list over the tasks, i.e., $\Omega_j^k=\{k\in \mathcal{T}^{\text{req}}(t), k \succ_{j} {k^\prime}$. }
	
	\item \textcolor{color}{$\Phi\subseteq \{k| k\in\mathcal{T}^{\text{req}}(t)\} \times \{\mathcal{E},o\}$ is the many-to-one matching between the tasks and servers. Each task $k\in \mathcal{T}^{\text{req}}(t)$ can be matched with at most one server, i.e., $\Phi(k)\in \{\mathcal{E},o\}$, and each server $j\in\{\mathcal{E},o\}$ can be matched with multiple tasks, i.e., $\Phi(j)\subseteq \{k| k\in\mathcal{T}^{\text{req}}(t)\} $}.
	
\end{itemize}

%\begin{remark}
%	It is noted the local processor is defined as VEC server $0\in\{0, \mathcal{E}\}$ to represent the preference of vehicles on the local offloading. Therefore, $\Omega_i^v=0$ means that vehicle $i$ prefers to process the tasks locally. Besides, $\Phi(i)=0$ means that vehicle $i$ is matched with VEC server 0, i.e., vehicle $i$ selects the strategy of local offloading.
%\end{remark}

% Preference list construction

\subsubsection{Preference List Construction}
\label{sub_preConstruct}

\par The preference lists are constructed as follows.
{\color{color}
\begin{enumerate}[label=\alph*)]
	\item Predict the optimal resource allocation $f_{j,i}^*(t)$ and price $c_{j,i}^*(t)$ allocated by server $j$ to each task $k=\mathcal{T}_i(t)\in \mathcal{T}^{\text{req}}(t)$ based on Algorithm \ref{algo_allo_price}. Note that $k=\mathcal{T}_i(t)$ means that task $k$ is generated by vehicle $i$ at time $t$.
	
	\item Calculate the values of preference for each server $j\in \{\mathcal{E},o\}$ on tasks $k\in \mathcal{T}^{\text{req}}(t)$ as:
	\begin{equation}
		\label{eq_preValueM}
		\rho_{j}(k)=U_j^i(t), \ k = \mathcal{T}_i(t),
	\end{equation}
	\noindent where $U_j^i(t)$ is the utility of server $j$ to process task $k$ of vehicle $i$ (in Eq. \eqref{eq_utilityMEC}). 
	
	\item Construct the matching list for each server $j\in \{\mathcal{E},o\}$ by ranking the preference values as a descending order: 
	\begin{equation}
		\label{eq_preOrderM}
		\begin{aligned}
			\rho_{j}(k)>\rho_{j}(k^{\prime}) \Leftrightarrow k \succ_j k^{\prime}, \ \Omega_j^k=\{k,k^{\prime}\}
		\end{aligned}
	\end{equation}

	\item Calculate the preference values of each task $k\in \mathcal{T}^{\text{req}}(t)$ on each server $j\in \{\mathcal{E}, o\}$ as:
	\begin{equation}
		\label{eq_preValueV}
		\rho_{k}(j)=U_i^j(t), \ k=\mathcal{T}_{i}(t)
	\end{equation}
	
	\noindent where $U_i^j(t)$ is the utility obtained by vehicle $i$ when its task $k$ is processed by server $j$ at time $t$.

	\item Construct the matching list for each task $k\in \mathcal{T}^{\text{req}}(t)$ by ranking the preference values as a descending order:
	\begin{equation}
		\label{eq_preOrderV}
		\rho_{k}(j)>\rho_{k}(j^{\prime}) \Leftrightarrow j\succ_i j^{\prime}, \ \Omega_k^j=\{j,j^{\prime}\}.
	\end{equation}	
\end{enumerate}
}

\begin{algorithm}[]	
	\label{algo_matching}	
	\SetAlgoLined
	\KwIn{\textcolor{color}{The requesting vehicle set $\mathcal{V}^{req}(t)$, task set $\mathcal{T}^{\text{req}}(t)=\{\mathcal{T}_i(t)| i\in \mathcal{V}^{\text{req}}(t)$, and server set $\{\mathcal{E},o\}$}}
	\KwOut{\textcolor{color}{The optimal matching list $\Phi(t)$, the offloading strategy  ${S}^*_{\text{off}}(t)$, and the resource allocation strategy ${S}^*_{\text{all}}(t)$. }
	}
	\textbf{ Initialization:} 
	$\mathcal{T}^{\text{rej}}(t)=\mathcal{T}^{\text{req}}(t)$, $\mathcal{E}^{\prime}=\emptyset$, $\Phi^*=\emptyset$\;
	\tcp{Preference lists construction}
\For{\textcolor{color}{$k\in \mathcal{T}^{\text{req}}(t)$}}
	{
		\For {\textcolor{color}{$j\in \{\mathcal{E},o\}$}}
		{
			Call Algorithm \ref{algo_allo_price} for 	\textcolor{color}{$s_{j,i}^*(t)=\left(f_{j,i}^*(t), c_{j,i}^*(t)\right)$\;}
			Calculate the preference values of server $j$ on task $k$ as Eq. \eqref{eq_preValueM}\;
			Construct the matching list of server $j$ based on Eq. \eqref{eq_preOrderM}\;
			{\color{color}Calculate the preference values of task $k$ on server $j$ as Eq. \eqref{eq_preValueV}\;
			Construct the matching list of task $k$ as Eq. \eqref{eq_preOrderV}\;}
		}
	}
	
	\tcp{Matching construction}
	\While{\rm{There exists}}
	{
		\For{\rm {task} \color{color} {$k=\mathcal{T}_i(t)\in \mathcal{T}^{\text{rej}}(t)$}}
		{
			{\color{color}
				Select the most preferred server $j^{\prime}$\;
				Update the preference list of $k$ as Eq. \eqref{eq_addMatchVeh}\; 
				Update the preference list of $j^{\prime}$ as Eq. \eqref{eq_addMatch}\;
			}
		}
		\For{\rm {server} {\color{color}$j\in \{\mathcal{E}^{\prime},o\}$} \rm{that receives new requests}}
		{
			Update matching list based on Eq. \eqref{eq_decideMatch}\;
			{\color{color}Update $\mathcal{T}^{\text{rej}}(t)$ based on \eqref{eq_delete_veh1}\; }
			\If{$D_j\neq \emptyset$}
			{
				\For{\rm {task} $k\in D_j$}
				{
					Update the preference list as Eq. \eqref{eq_deletMecA}\;
					Update the matching list as Eq. \eqref{eq_deletMecB}\;
				}
			}
		}     
	}
	\Return \textcolor{color}{{$\Phi(t)$, ${S}^*_{\text{off}}(t)=\{s_i^{a^*}(t)| a^*=\Phi^*(k), k=\mathcal{T}_i(t)\in \mathcal{T}^{\text{req}}(t)\}$, ${S}^*_{\text{all}}(t)=\{s_{j,i}^*(t)|j\in \{\mathcal{E},o\}, k=\mathcal{T}_i(t)\in \Phi^*(j)\}$}}\;
	\caption{Matching algorithm for tasks and servers in time slot $t$:}
	\vspace{1pt}
\end{algorithm}

% matching sconstruction

\subsubsection{Matching Construction}
\label{sub_matchConstruct}

\par The matching scheme is implemented as follows.
\begin{enumerate}[label=\alph*)]
	{\color{color}\item The rejected set is initialized as $\mathcal{T}^{\text{rej}}(t)=\mathcal{T}^{\text{req}}(t)$.
	
	\item Select the most preferred server $j^{\prime} = \Omega_k^j[0]$ for each task $k\in \mathcal{T}^{\text{rej}}$, and adds $j^{\prime} $ to the matching list of each task temporarily:	
	\begin{equation}
	\label{eq_addMatchVeh}
		\Phi(k)=\Phi(k)\cup j^{\prime}.
	\end{equation}

	\item If task $k$ prefers server $j^{\prime}\in \{\mathcal{E},o\}$, add task $k$ to the matching list of server $j^{\prime}$ temporarily:	
	\begin{equation}
	\label{eq_addMatch}
		\Phi(j^{\prime})=\Phi(j^{\prime} )\cup k.
	\end{equation}
	
	\item Update the matching list $\Phi(j)$ of each server $j\in\{\mathcal{E}, o\}$ by remaining the top-$n$ most preferred tasks and removing the less preferred tasks:		
	\begin{subnumcases}{\label{eq_decideMatch}}
		$$|\Phi(j)|\leq n\leq N_{j}^{\text{core}}, \notag \\ \sum_{k \in \Phi(j)} f_{j,i}^*(t)\leq f_j^{\max}, k=\mathcal{T}_{i}(t)$$, \label{eq_decideMatchA}\\
		$$\Phi(j)=\Phi(j)\ \backslash \ D_j$$\label{eq_decideMatchB},
	\end{subnumcases}

 \noindent where $n$ is the number of server $j$'s CPU cores that are idle at the current time, $N_{j}^{\text{core}}$ is the total number of server $j$'s CPU cores, and $D_j$ is the set of less preferred tasks of server $j$.
	
	\item Add the tasks in $D_j$ to the rejected set: 
	\begin{equation}
	\label{eq_delete_veh1}
		\mathcal{T}^{\text{rej}}(t)=\mathcal{T}^{\text{rej}}(t)\cup D_j.
	\end{equation}
	
	\item Update the preference list and matching list of each deleted task $k\in D_j$: 	
	\begin{subnumcases}{\label{eq_deletMec}}
		$$\Omega_k^j=\Omega_k^j\ \backslash \ \{j\}$$,  \label{eq_deletMecA}\\
		$$\Phi(k)=\Phi(k) \ \backslash \ \{j\}$$\label{eq_deletMecB},
	\end{subnumcases}

	\item For the tasks $k=\mathcal{T}_i(t)\in \mathcal{T}^{\text{rej}}(t)$ that are deleted in the last iteration, repeat the steps b) to d) until all tasks have been matched with a server, or the unmatched tasks have been rejected by all servers.}
\end{enumerate}

% end matching

%
% adaptive algorithm
%

\subsection{Main Steps of BARGAIN-MATCH and Analysis}
\label{sec_main_step_analysis}

\par In this section, the main steps of BARGAIN-MATCH is shown in Algorithm \ref{algo_offMatching}, and the corresponding stability, optimality, and computational complexity are presented. 

\begin{algorithm}[]	
	\label{algo_offMatching}	
	\setlength{\abovecaptionskip}{1pt}   %调整图片标题与图距离	
	\setlength{\belowcaptionskip}{1pt} 
	\SetAlgoLined
	\KwIn{$\mathcal{V}$, $\mathcal{E}$, $\mathbf{T}$}
	\KwOut{$SW$}
	\textbf{Initialization:} 
	Initialize time $t=0$, social welfare $SW=0$, initial positions of vehicles $\mathbf{P}_i, \ i\in \mathcal{V}$, and positions of servers $\mathbf{P}_j, \ j\in \{\mathcal{E}\}$\;
	\While{$t\leq \mathbf{T}$}
	{		
		\textcolor{color}{Create a list $\mathcal{T}^{\text{req}}(t)$ for the tasks that have not been determined where to be offloaded\;
		Call Algorithm \ref{algo_matching} to obtain $\Phi^*(t)$,  ${S}^*_{\text{off}}(t)$, and ${S}^*_{\text{all}}(t)$}\;
		\If{$\Phi \neq \emptyset $}{
			\For{ $j \in \Phi$}{
				\tcp{Resource trading}
				Allcocate resource $f_{j,i}^*(t)$ to $i$\;
				Charge  $c_{j,i}^*(t)$ for per unit of resource on vehicle $i$\;
				\tcp{Task offloading}
				Add the task $k \in \Phi(j)$ to the task list of  server $j$\;
				\tcp{social welfare calculation}
				Calculate the current social welfare $SW(t)$ based on Eq. \eqref{eq_socialwelfare}\;
				\tcp{State update}
				Update the task processing list of server $j$\;
				Update the available computation resources of server $j$\;
			}				
		}
		Calculate the social welfare $SW=SW+SW(t)$\;
		\textcolor{color}{\If{$t \ \% \ T_0 == 0$}{
			Update the mobility of vehicles\;}
		}
		Update time $t = t+\Delta t$\;				
	}
	\Return $SW$\;
	\caption{BARGAIN-MATCH}
\end{algorithm}

\subsubsection{Stability}
\label{sec_stability}

\begin{definition}
	\label{def_block}
	Blocking pair. Assuming that $k\in \mathcal{T}^{\text{req}}(t)$ and $ j\in\{\mathcal{E},o\}$ are not matched with each other under matching $\Phi$, i.e., $i\neq \Phi(j)$ and $j\neq \Phi(i)$, $\Phi$ is blocked by the blocking pair $(i,j)$ if and only if $i$ and $j$ prefer each other to $j^{\prime}=\Phi(i)$ and $i^{\prime}=\Phi(j)$, respectively. 
\end{definition}
\vspace{6pt}
\begin{definition}
	\label{def_stable}
	Stable matching. A matching is stable if and only if there exists no blocking pair \cite{gusfield1989stable}.
\end{definition}
\vspace{6pt}
{\color{color}
\begin{theorem}
	The matching $\Phi$ proposed by this work is stable for every $k\in \mathcal{T}^{\text{req}}(t)$ and $j\in \{\mathcal{E},o\}$.
\end{theorem}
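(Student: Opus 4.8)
The plan is to prove stability by contradiction, exploiting the deferred-acceptance structure of the matching procedure in Algorithm \ref{algo_matching}. Suppose $\Phi$ is not stable. Then by Definition \ref{def_stable} there exists a blocking pair $(k,j)$ with $k \in \mathcal{T}^{\text{req}}(t)$ and $j \in \{\mathcal{E},o\}$ such that $k \neq \Phi(j)$ and $j \neq \Phi(k)$, yet by Definition \ref{def_block} task $k$ strictly prefers $j$ to its assigned server $j' = \Phi(k)$, i.e. $j \succ_k j'$, while server $j$ strictly prefers $k$ to some currently matched task $k' \in \Phi(j)$, i.e. $k \succ_j k'$. The goal is to show that these two preference relations cannot coexist with the way Algorithm \ref{algo_matching} assigns and prunes tasks.

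First I would argue from the proposing side. Because each task proposes to servers in descending order of its preference list $\Omega_k^j$ (selecting the head $j' = \Omega_k^j[0]$ in each round and deleting a server only after being rejected, cf. Eqs. \eqref{eq_addMatchVeh} and \eqref{eq_deletMec}), and because $k$ ultimately ends up matched to the less-preferred $j'$, task $k$ must have proposed to the more-preferred $j$ at some earlier iteration and subsequently been rejected by $j$. Such a rejection can only occur through the pruning step \eqref{eq_decideMatch}: $k$ was placed into the discarded set $D_j$ precisely because $j$ retained a collection of tasks each preferred to $k$, subject to its idle-core count and resource budget.

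The crux is then a monotonicity argument on the servers' tentative match sets. I would show that over the iterations the set of tasks tentatively held by any server $j$ only improves in preference: whenever $j$ receives new proposals it re-ranks all candidates and keeps the top ones via \eqref{eq_decideMatch}, so a task once rejected never re-enters, and every task $j$ later retains is preferred to every task it has previously discarded. Consequently, at termination every $k'' \in \Phi(j)$ satisfies $k'' \succ_j k$. This directly contradicts the blocking condition $k \succ_j k'$ for $k' \in \Phi(j)$, so no blocking pair can exist and $\Phi$ is stable.

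The main obstacle is this monotonicity step, which is subtler here than in classical Gale--Shapley matching because a server's ``capacity'' is not a fixed quota but a coupled constraint on the number of idle CPU cores and the aggregate computation resource $\sum_{k\in\Phi(j)} f_{j,i}^*(t) \leq f_j^{\max}$ in \eqref{eq_decideMatchA}. I must verify that the pruning rule \eqref{eq_decideMatchB} discards exactly the least-preferred tasks while respecting this joint constraint, and that admitting the blocking task $k$ in place of the less-preferred $k'$ would have been both feasible and preference-improving for $j$, so that $j$ would never have rejected $k$ in the first place. Establishing this feasibility-preserving exchange under the resource-budget constraint is the key technical point that upgrades the standard deferred-acceptance stability argument to the present many-to-one setting.
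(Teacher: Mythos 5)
Your argument is the standard deferred-acceptance stability proof by contradiction, and it is essentially the same route the paper takes: assume a blocking pair $(k,j)$, note that $k$ must have proposed to $j$ before settling for $\Phi(k)$, and use the monotone improvement of $j$'s tentative match set under the pruning step \eqref{eq_decideMatch} to conclude that every task $j$ retains at termination is preferred to $k$, contradicting $k \succ_j k'$. The outline is correct, and you are right that the one place the argument needs care is exactly the step you flag but do not close: because \eqref{eq_decideMatchA} couples a cardinality bound (idle cores) with an aggregate resource budget $\sum_{k\in\Phi(j)} f_{j,i}^*(t)\leq f_j^{\max}$, the claim that ``$j$ only ever discards tasks it likes less than everything it keeps'' is not automatic. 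If the selection is greedy by preference subject to the budget, a highly preferred task with a large resource demand $f_{j,i}^*(t)$ could be skipped while a less preferred but smaller task is retained; that retained task would then witness a pairwise blocking pair under Definition \ref{def_block}, which as stated does not condition on feasibility of the swap. So to complete the proof you must either (i) show that $D_j$ is always a suffix of $j$'s preference order over its current candidates (so the retained set is a preference-prefix and monotonicity holds verbatim), or (ii) interpret the blocking pair as requiring that inserting $k$ into $\Phi(j)$ in place of $k'$ remains feasible for both the core count and the resource budget, and verify that the algorithm never rejects a task for which such a feasible, preference-improving exchange exists. Also note that once $k$ is rejected by $j$, Eq.~\eqref{eq_deletMecA} removes $j$ from $k$'s list permanently, so you do not need to worry about $k$ becoming feasible again later from the algorithm's side --- but you do need one of (i) or (ii) for the stability conclusion itself.
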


\begin{proof}
	The detailed proof is given in Appendix F of the supplemental material.	
\end{proof}
}
{\color{color}
\subsubsection{Optimality}
\label{sec_optimality}
\begin{theorem}
	The matching $\Phi$ proposed by this work is weak Pareto optimal for each $k\in \mathcal{T}^{\text{req}}$ and $j\in \{\mathcal{E},o\}$.
\end{theorem}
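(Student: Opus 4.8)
The plan is to argue by contradiction, exploiting the fact that Algorithm~\ref{algo_matching} is a task-proposing deferred-acceptance procedure, so the produced $\Phi$ is the task-optimal stable matching (and already stable by the preceding theorem). Recall that a matching is weak Pareto optimal if no alternative feasible matching makes \emph{every} participant strictly better off. Since the task side is the proposing side, it suffices to show that no feasible matching $\Phi'$ can make every task strictly better off than under $\Phi$; this immediately rules out any $\Phi'$ that additionally improves every server, which is exactly the stated two-sided conclusion for each $k\in\mathcal{T}^{\text{req}}(t)$ and $j\in\{\mathcal{E},o\}$.

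First I would suppose, toward a contradiction, that such a $\Phi'$ exists, i.e.\ $\Phi'(k)\succ_k\Phi(k)$ for every $k\in\mathcal{T}^{\text{req}}(t)$. Because each task proposes to servers in the descending order of its preference list (constructed via Eq.~\eqref{eq_preOrderV}) and $k$ terminates holding the strictly less preferred server $\Phi(k)$, each task $k$ must have proposed to $\Phi'(k)$ at an earlier round and been discarded from it by the pruning step Eq.~\eqref{eq_decideMatch}. I would then order all such rejection events along the timeline of Algorithm~\ref{algo_matching} and focus on the \emph{last} one: let task $k^\star$ be rejected by server $j^\star=\Phi'(k^\star)$ at the final such round. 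At the instant of this rejection, $j^\star$ discards $k^\star$ only because its temporary list $\Phi(j^\star)$ is already filled with tasks it strictly prefers to $k^\star$, subject to the core-count bound $|\Phi(j^\star)|\le N_{j^\star}^{\text{core}}$ and the resource-feasibility bound $\sum f^*_{j^\star,i}\le f_{j^\star}^{\max}$.

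The core of the argument is then a packing/counting contradiction at $j^\star$. Since $k^\star$'s rejection is the last relevant one, every task currently held by $j^\star$ is never again displaced and therefore ends up in $\Phi$ matched to a server at least as preferred as $j^\star$. Yet under $\Phi'$ the server $j^\star$ must simultaneously accommodate $k^\star$ together with the tasks it is assigned in $\Phi'$. I would show that this forces $j^\star$ to exceed either its core quota $N_{j^\star}^{\text{core}}$ or its resource budget $f_{j^\star}^{\max}$, or else to displace one of its held, strictly-preferred tasks onto a worse server, contradicting the premise that \emph{every} task strictly improves under $\Phi'$.

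The main obstacle is the many-to-one structure with the coupled constraints C8--C9: unlike one-to-one matching, the pivotal server holds a \emph{set} of tasks whose joint acceptance is constrained both by the integer quota $N_j^{\text{core}}$ and by the continuous budget $f_j^{\max}$, with each demand $f^*_{j,i}$ itself fixed by the bargaining outcome of Algorithm~\ref{algo_allo_price}. Hence the delicate step will be verifying the packing argument against \emph{two} simultaneous feasibility conditions rather than a single scalar quota; I would handle this by tracking, for $j^\star$, the aggregate requested resource of its held set against $f_{j^\star}^{\max}$ and showing no feasible reshuffling in $\Phi'$ can give strict improvement to all of $j^\star$'s held tasks and $k^\star$ at once.
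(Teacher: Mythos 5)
There is a genuine gap at the step you yourself flag as ``the core of the argument.'' After you locate the last rejection event, in which task $k^\star$ is discarded by $j^\star=\Phi'(k^\star)$, you assert that accommodating $k^\star$ under $\Phi'$ forces $j^\star$ to violate its quota $N_{j^\star}^{\text{core}}$ or its budget $f_{j^\star}^{\max}$, or to push one of its held tasks onto a worse server. Neither alternative follows. Under your own hypothesis every task is \emph{strictly} better off in $\Phi'$, so the tasks that $j^\star$ holds at the moment of the rejection (and which end up matched to $j^\star$ in $\Phi$) are matched in $\Phi'$ to servers they strictly prefer to $j^\star$ --- that is, they \emph{leave} $j^\star$, freeing both cores and resources for $k^\star$. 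No local packing violation at $j^\star$ is forced, and no held task is ``displaced onto a worse server''; by hypothesis they all move to better ones. A second, smaller issue: your choice of $k^\star$ is the last rejection among events of the form ``$k$ rejected by $\Phi'(k)$,'' which does not prevent \emph{later} rejections at $j^\star$ of other kinds, so the claim that $j^\star$'s held set is ``never again displaced'' also does not follow as stated. Closing the argument along your one-sided route would require the global Roth-style counting argument over all servers simultaneously, and that argument is delicate (and in general can fail) for many-to-one matching with capacities, let alone with the additional continuous budget $\sum_{k\in\Phi(j)}f^*_{j,i}(t)\le f_j^{\max}$ coupled to the bargaining outcome.

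The root cause is your opening reduction: by deciding to prove that no $\Phi'$ improves \emph{every task}, you discard the hypothesis that the servers must also be strictly better off, which is precisely what makes the theorem easy to prove locally. Keeping both sides, the argument closes at the rejecting server itself: any task $k$ with $\Phi'(k)=j'\succ_k\Phi(k)$ was rejected by $j'$ during Algorithm~\ref{algo_matching}, and by the pruning rule \eqref{eq_decideMatch} this happened only because $j'$ was already saturated (in cores or resources) with tasks it strictly prefers to $k$; since a server's held set only improves as the algorithm proceeds, $\Phi(j')$ consists of tasks $j'$ strictly prefers to $k$ and exhausts its feasibility bounds. For $j'$ to accept the strictly less preferred $k$ in $\Phi'$ it must give up capacity occupied by strictly preferred tasks, so $j'$ cannot be strictly better off under $\Phi'$ --- contradicting the assumption that \emph{all} participants, servers included, strictly improve. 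You should restructure the proof around this two-sided local contradiction rather than the one-sided packing argument.
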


\begin{proof}
	The detailed proof is given in Appendix G of the supplemental material.	
\end{proof}
}

\subsubsection{Complexity Analysis}
\label{sec_complexity}

\begin{theorem}
BARGAIN-MATCH has a polynomial worst-case complexity in each time slot, i.e., $\mathcal{O}\left(\left(|\mathcal{E}|+1\right)\cdot \left(2|\mathcal{V}|+ \min\{|\mathcal{E}|+1, |\mathcal{V}|\}\right)\right)$, where $|\mathcal{V}|$ and $|\mathcal{E}|$ are the number of vehicles and VEC servers, respectively.
\end{theorem}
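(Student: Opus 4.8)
The plan is to bound the per-slot cost by tracing the three nested routines---Algorithm~\ref{algo_allo_price} (bargaining), Algorithm~\ref{algo_matching} (matching), and the per-slot body of Algorithm~\ref{algo_offMatching}---and to show that the matching routine dominates. First I would argue that a single call to Algorithm~\ref{algo_allo_price} runs in constant time: the allocation $f_{j,i}^*(t)$ of Theorem~\ref{lemma_convex}, the price bounds of Lemma~\ref{lemma_priceRange}, the equilibrium partitions of Lemma~\ref{theorem_ne}, and the price $c_{j,i}^*(t)$ of Theorem~\ref{lemma_pricing} are all closed-form expressions, and the pricing rule of Definition~\ref{def_pricingRule} branches on a constant number of sign comparisons of $U_i^j(t)$ and $U_j^i(t)$. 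Hence every invocation of Algorithm~\ref{algo_allo_price} costs $\mathcal{O}(1)$.

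Next I would account for the preference-list construction in Algorithm~\ref{algo_matching}. There are at most $|\mathcal{V}|$ requesting tasks in a slot (by constraint C3 each vehicle generates at most one task) and $|\mathcal{E}|+1$ servers. The doubly nested loop invokes Algorithm~\ref{algo_allo_price} once per task--server pair and evaluates the preference values of Eqs.~\eqref{eq_preValueM} and \eqref{eq_preValueV}, each in $\mathcal{O}(1)$; ranking the tasks in each server's list and the servers in each task's list contributes, respectively, $(|\mathcal{E}|+1)\cdot|\mathcal{V}|$ and $|\mathcal{V}|\cdot(|\mathcal{E}|+1)$ elementary operations. Summing gives the first term $\mathcal{O}\!\left(2|\mathcal{V}|\,(|\mathcal{E}|+1)\right)$.

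Then I would analyze the matching (deferred-acceptance) loop. Each task proposes to servers in decreasing preference order and, once rejected via Eqs.~\eqref{eq_decideMatch}--\eqref{eq_deletMec}, removes that server from its list and never re-proposes to it; thus a task can be rejected at most $|\mathcal{E}|+1$ times, which bounds the number of outer \textbf{while} iterations by $|\mathcal{E}|+1$, and trivially by the number $|\mathcal{V}|$ of tasks still to be placed, i.e.\ by $\min\{|\mathcal{E}|+1,|\mathcal{V}|\}$. In each iteration the inner update over servers that received new requests touches at most $|\mathcal{E}|+1$ servers with $\mathcal{O}(1)$ work each (the acceptance test of Eq.~\eqref{eq_decideMatch} against the idle-core count $N_j^{\text{core}}$ and the resource budget $f_j^{\max}$). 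This yields the second term $\mathcal{O}\!\left((|\mathcal{E}|+1)\cdot\min\{|\mathcal{E}|+1,|\mathcal{V}|\}\right)$.

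Finally, I would note that the per-slot body of Algorithm~\ref{algo_offMatching} (the loop over matched servers performing resource trading, offloading, and state updates) costs $\mathcal{O}(|\mathcal{V}|+|\mathcal{E}|+1)$ and is dominated by the above; factoring $(|\mathcal{E}|+1)$ out of both terms then gives the claimed $\mathcal{O}\!\left((|\mathcal{E}|+1)\cdot(2|\mathcal{V}|+\min\{|\mathcal{E}|+1,|\mathcal{V}|\})\right)$. I expect the main obstacle to be rigorously bounding the number of iterations of the matching \textbf{while} loop: unlike classical one-to-one Gale--Shapley, here each server carries both a core-count capacity $N_j^{\text{core}}$ and a resource-budget constraint $\sum_{k\in\Phi(j)} f_{j,i}^*(t)\le f_j^{\max}$, so I must verify that the monotone shrinking of each task's preference list still guarantees termination within $\min\{|\mathcal{E}|+1,|\mathcal{V}|\}$ rounds and that no task is ever revisited by a server that rejected it. This is what makes the round count---rather than the total proposal count---the correct quantity to charge, and hence what justifies the precise form of the stated bound.
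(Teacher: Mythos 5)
Your proposal is correct and follows essentially the same route as the paper's own argument: constant-time closed-form bargaining per task--server pair, a $2|\mathcal{V}|\cdot(|\mathcal{E}|+1)$ cost for building the two families of preference lists, and a deferred-acceptance phase whose round count is bounded by $\min\{|\mathcal{E}|+1,|\mathcal{V}|\}$ with $\mathcal{O}(|\mathcal{E}|+1)$ work per round, which is exactly the decomposition the stated bound encodes. The one point you flag as delicate --- that the $|\mathcal{V}|$ arm of the $\min$ survives the combined core-count and resource-budget capacity rules --- is treated at the same level of detail in the paper, so no additional idea is missing from your attempt.
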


\begin{proof}
The detailed proof is given is Appendix H of the supplemental material.
\end{proof}

\section{Simulation Results and Analysis}
\label{sec_simulation}

\subsection{Simulation Setup}
\label{simulation_set_up}

\par In this section, the proposed approach is evaluated by simulations implemented in MATLAB(R) 9.9 (R2020b) in a 2.70 GHz Intel Core i7 processor. {\color{color}The road scenario is generated by the \textit{Automated Driving Toolbox 3.2}, where 30 VEC servers are placed on a 10 km 6-lane bidirectional road and 100 vehicles are randomly located on the road initially.} Moreover, the vehicles run at the speed ranges of [2, 30] m/s in either direction. The default values of the simulation parameters are listed in Table \ref{tab_simuParameter}. 

\begin{table}[!hbp]
	\setlength{\abovecaptionskip}{-10pt}%    
	\setlength{\belowcaptionskip}{0pt}%
	\caption{Simulation parameters}
	\label{tab_simuParameter}
	\renewcommand*{\arraystretch}{1}
	\begin{center}
		\begin{tabular}{p{.06\textwidth}|p{.21\textwidth}|p{.13\textwidth}}
			\hline
			\hline
			\textbf{Symbol}&\textbf{Meaning}&\textbf{Default value}\\
			\hline
				$\alpha$&CPU parameters& $7.8^{-21}$\cite{Zhang2018}\\
			\hline
				$B_{i,j}$ &Bandwidth between vehicle $i$ and VEC server $j$&40 MHz\cite{Zhang2020a} \\
			\hline
				$\beta^L/\beta^{NL}$&Path loss exponent for LoS/NLoS communication &$3/4$ \cite{zhang2018dense}\\
				\hline
				$c$&Speed of light&$3\times 10^8$ m/s\\
			\hline
				\textcolor{color}{$ C_i^{\max} $}&\textcolor{color}{The budget of vehicle $i$ for the costs payed to the servers }& \textcolor{color}{20 \$}\\
			\hline
				\textcolor{color}{$C_j^{\max} $}&\textcolor{color}{The maximum unit price of server $j$'s computation resources}&\textcolor{color}{1 (\$/GHz)$\cdot f_j^{\max}$} \\
			\hline
				\textcolor{color}{$\mathcal{C}_{i}^{\text{req}}(t)$}&Required computation resources of each bit&[500, 1500] cycles/bit \cite{liu2019cooper}\\ 
			\hline
				$ d_0 $&Reference distance&1 m\\
			\hline
				\textcolor{color}{$\mathcal{D}^{\text{in}}_{i}(t)$}&Task size &$[400, 1000]$
				KB\cite{Zhang2020a}\\
			\hline
				\textcolor{color}{$\mathcal{D}^{\text{out}}_{i}(t)$}&\textcolor{color}{Task result}&\textcolor{color}{$[0.1, 1]$ KB\cite{hui2020game}}\\
			\hline
				\textcolor{color}{$E_i^{\max}$}& \textcolor{color}{Energy constraint of vehicle $i$}&\textcolor{color}{1 (W.h/GHz)$\cdot f_i^{\max}$ \cite{ning2020intelligent}}\\
			\hline
				\textcolor{color}{$E_j^{\max}$}& \textcolor{color}{Energy constraint of server $j$}&\textcolor{color}{1 (W.h/GHz)$\cdot f_j^{\max}$ \cite{ning2020intelligent}}\\
			\hline
				$f_{c}$&Carrier frequency&5.9 GHZ \cite{bazzi2017performance}\\
			\hline
				\textcolor{color}{$f_{i}^{\max}$}&Computation resources of vehicle $i$&[0.5, 1] GHz \cite{lyu2018energy}\\
			\hline
				\textcolor{color}{$f_{j}^{\max}$}&Computation resources of \textcolor{color}{VEC server $j$} &[2, 10] GHz \cite{Zhang2020a}\\ 
			\hline
				\textcolor{color}{$f_{o}^{\max}$}&Computation resources of cloud server $o$ &30 GHz \cite{Wang2022}\\ 
			\hline
				\textcolor{color}{$m^L/m^{NL}$}&\textcolor{color}{Nakagami fading parameter for LoS/NLoS communication} &\textcolor{color}{$2/1$\cite{zhang2018dense}}\\
			\hline
				\textcolor{color}{$N_i^{\text{core}}$}&\textcolor{color}{The CPU core of vehicle $i$}&\textcolor{color}{1}\\
			\hline
				\textcolor{color}{$N_j^{\text{core}}$}&\textcolor{color}{The CPU core of VEC server $j$}&\textcolor{color}{[2, 8]}\\
			\hline
				\textcolor{color}{$N_o$}&\textcolor{color}{Noise power}& \textcolor{color}{-98 dBm}\\
			\hline
				$P_i(t)$&Transmit power&$[-85, 44.8]$ dBm \cite{bazzi2017performance}\\
			\hline
				\textcolor{color}{$r_f$}& \textcolor{color}{Data rate of fiber link} & \textcolor{color}{4 Gb/s \cite{guo2018collaborative1}}\\
			\hline
				\textcolor{color}{$r_{c}$}&\textcolor{color}{Data rate between the edge and cloud}& \textcolor{color}{100 Mb/s \cite{Qian2019}}\\
			\hline
				\textcolor{color}{$\theta^L/\theta^{NL}$}&\textcolor{color}{Standard deviation of shadowing for LoS/NLoS communication}&\textcolor{color}{$3$ dB$/$4 dB\cite{yang2018dense}}\\
			\hline 
				$\tau $&CPU parameters & 3 \cite{Zhang2018}\\		
			\hline
				\textcolor{color}{$T_i^{\max}(t)$}&The maximum permissible delay &[0.1, 5] s \cite{kazmi2021novel}\\
			\hline
				$w_i$/$w_j$&The weight coefficient of vehicle $i$/server $j$&[0, 1] \\
			\hline
		\end{tabular}
	\end{center}
\end{table} 

\par This work evaluates the proposed BARGAIN-MATCH in comparison with three benchmark schemes, i.e., the \textit{entire local offloading (ELO)}, \textit{exhaustive offloading (EXO)}, \textit{nearest \textcolor{color}{VEC} offloading \textcolor{color}{(NVO)}}, and \textcolor{color}{\textit{entire cloud offloading (ECO)}}. Besides, the \textcolor{color}{\textit{non-cooperative game-based offloading (NCO)}\cite{Wang2020} and \textit{One-to-one matching and price-rising-based offloading and resource allocation (OPORA)}} \cite{zhou2019computation} are tailored to be suited to the approach in this paper since there is no feasible solution that can be directly applied to this problem. These approaches are described as follows.

\begin{itemize}
	\item  ELO: all vehicles execute their tasks locally.
	\item  EXO: the tasks of each vehicle are exhaustively offloaded using the optimal offloading strategy.
	\item  NVO: the tasks of each vehicle are offloaded to the nearest VEC server with which the vehicle is currently attached.
	\item  \textcolor{color}{ECO}: all tasks are offloaded to the cloud server.
	\item  \textcolor{color}{NCO}: each vehicle competitively decides the optimal offloading probability by playing a distributed non-cooperative game. Since NCO is designed for the single-VEC server scenario and single-time decision in \cite{Wang2020}, it is adjusted in this paper to adapt to the multi-VEC server and period-time scenario . 
	\item \textcolor{color}{OPORA}: each task of the vehicle is assigned to a VEC server based on the one-to-one matching scheme, and each VEC server is stimulated to allocate the resource using the price-rising scheme.
\end{itemize}

\subsection{System Performance}
\label{sec_performance_utility}

\par In this section, we evaluate the impacts of different system parameters on the performance of social welfare, vehicle utility, and server utility.

\subsubsection{Effect of Time}
\label{sec_effect_time}

\begin{figure*}[!hbt] 
	\centering
	\subfigure[Social welfare]
	{
		\begin{minipage}[t]{0.31\linewidth}
%			\raggedleft
			\centering
			\includegraphics[scale=0.22]{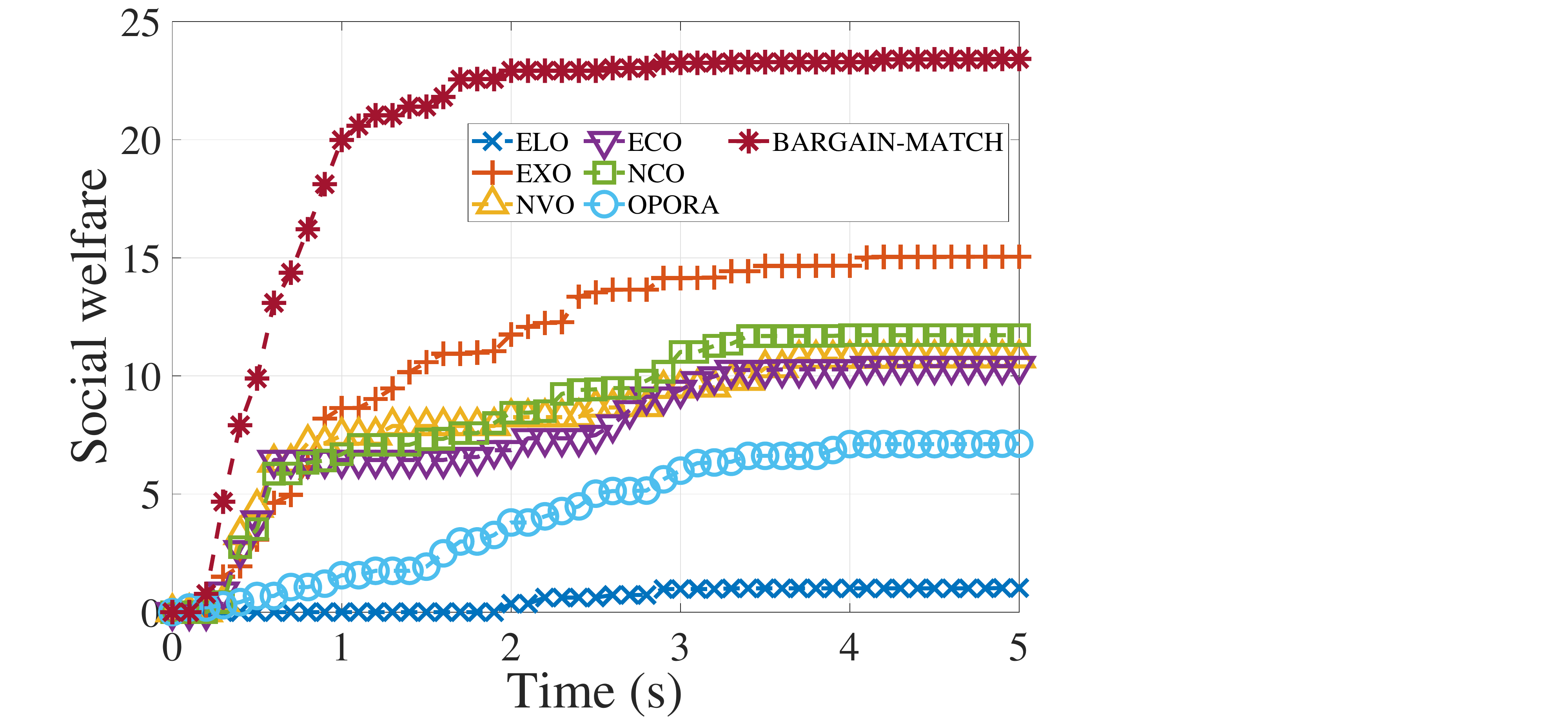}
		\end{minipage}
	}
	\subfigure[The total utility of vehicles]
	{
		\begin{minipage}[t]{0.31\linewidth}
			\centering
			\includegraphics[scale=0.22]{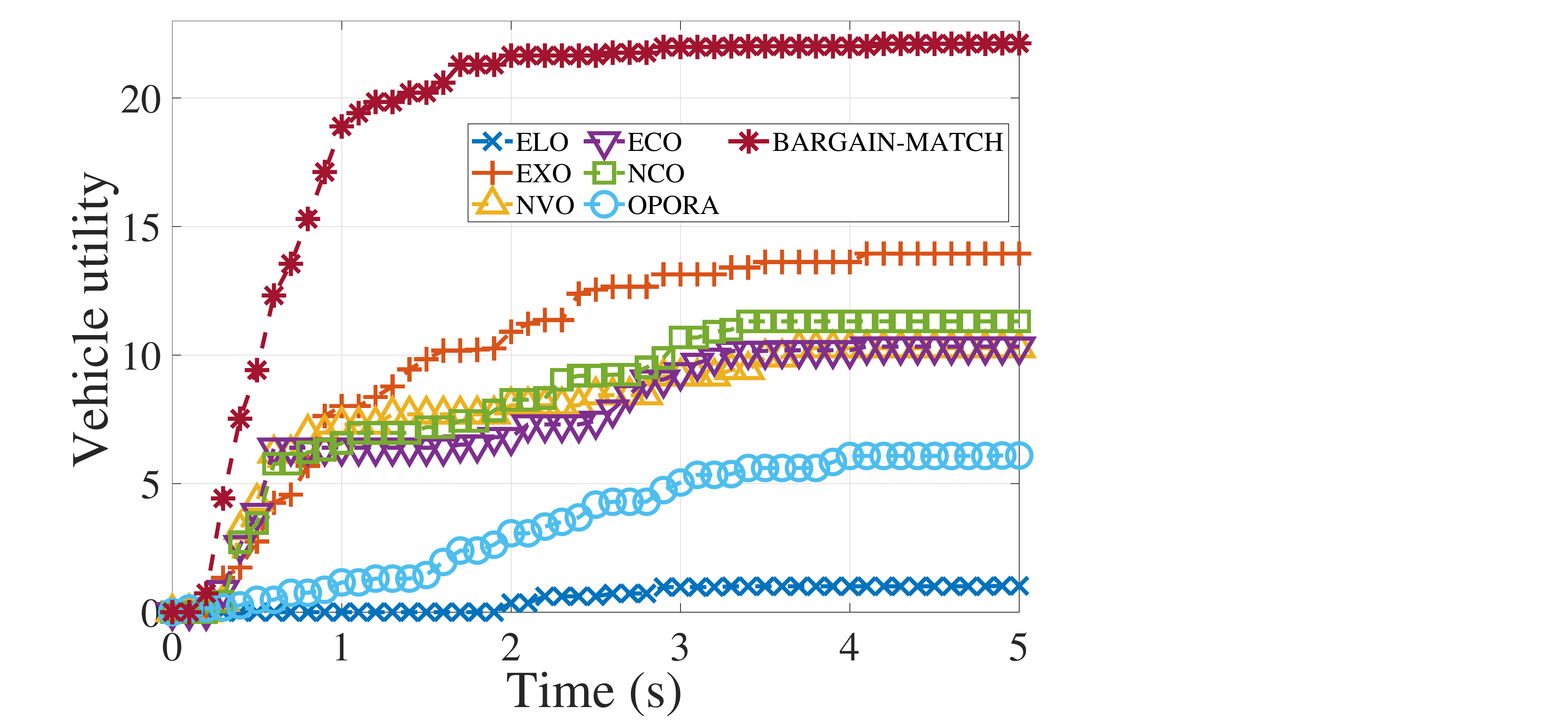}	
		\end{minipage}
	}
	\subfigure[The total utility of servers]
	{
		\begin{minipage}[t]{0.31\linewidth}
%			\raggedright
			\centering
			\includegraphics[scale=0.22]{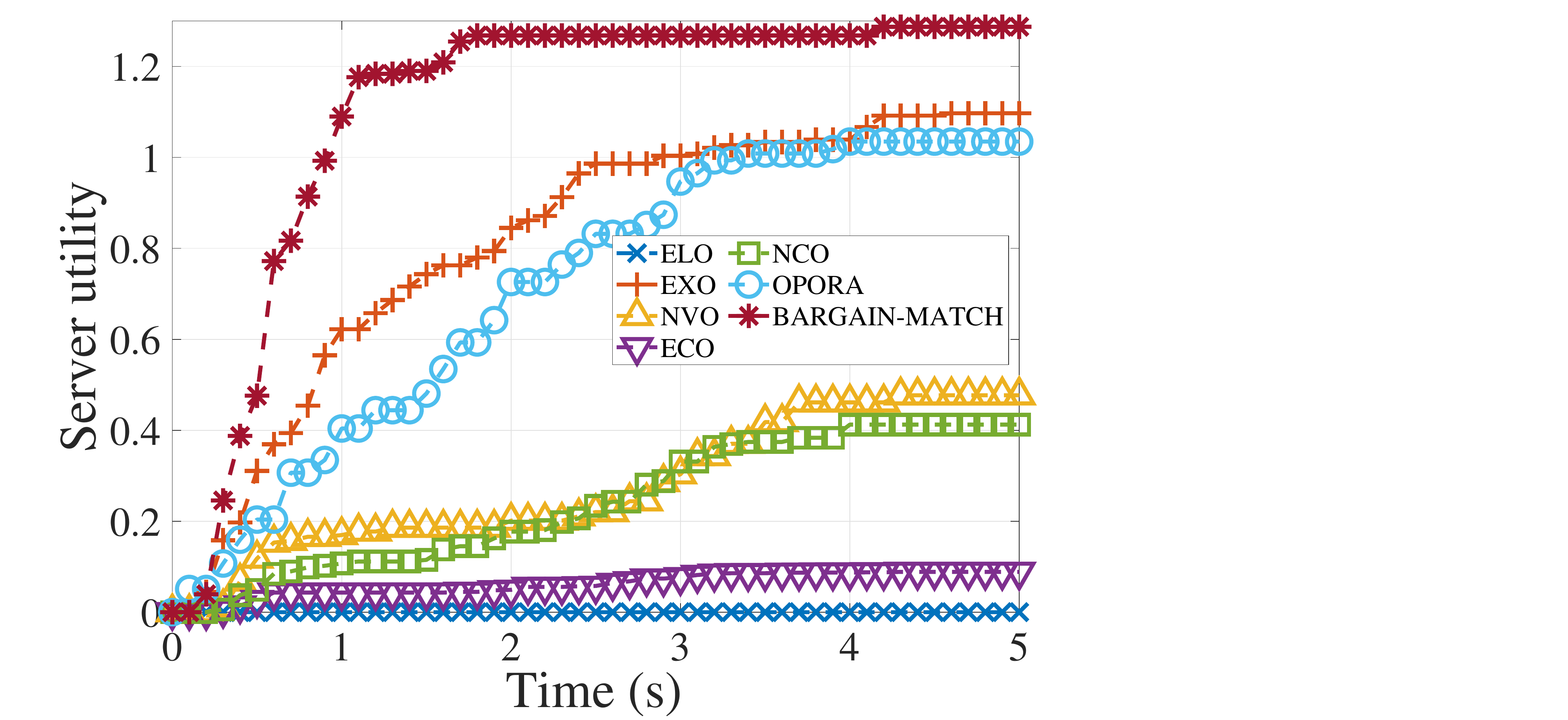}
		\end{minipage}
	}
	\centering
	\caption{\textcolor{color}{System performance with respect to time. (a) Social welfare. (b) Vehicle utility. (c) Server utility.}}
	\label{fig_sw}
	\vspace{-1em}
\end{figure*}
%Figs. 5(a), 5(b) and 5(c) compare the APR, ACD , and ACR 

\par Figs. \ref{fig_sw}(a), \ref{fig_sw}(b), and \ref{fig_sw}(c) show the comparative results of the social welfare, the total utility of vehicles, and the total utility of VEC servers among the seven algorithms in terms of time. {\color{color} It can be observed from Fig. \ref{fig_sw} that as time elapses, the social welfare, total utility of vehicles and total utility of servers for all schemes increase. This is because more tasks of vehicles are executed successfully along with time. Moreover, it can be observed that BARGAIN-MATCH outperforms ELO, EMO, NVO, NCO and OPORA in terms of social welfare, vehicle utility and server utility. The reasons are as follows. First, ELO, EMO, NVO and NCO mainly focus on optimizing the task offloading strategy for vehicles but do not consider the resource allocation strategy for VEC servers. The entire local offloading of ELO, the exhaustive offloading of EXO, the nearest VEC server offloading of NVO, the entire cloud offloading of ECO, and the competitive offloading of NCO could lead to congestion and resource over-use at certain vehicles or VEC servers.} Furthermore, although OPORA achieves superior VEC utility than ELO, EMO, NVO and NCO schemes due to the price incentive strategy, it is inferior to BARGAIN-MATCH because it adopts the one-to-one matching strategy and the incentive of random raising pricing, which are less efficient compared to the many-to-one matching strategy and the bargaining incentive. {\color{color} On the one hand, the many-to-one matching of BARGAIN-MATCH improves both the amount of offloaded tasks and the utilization of computation resource by horizontally or vertically offloading the tasks to the VEC servers or cloud server.} On the other hand, the bargaining incentive scheme of BARGAIN-MATCH facilitates the negotiation between the servers and requesting vehicles on the optimal decisions of computation resource allocation and pricing. Consequently, this set of simulation results shows that BARGAIN-MATCH has the overall superior performance on social welfare, vehicle utility, and VEC server utility among the seven algorithms.

\subsubsection{Effect of Vehicle Numbers}
\label{sec_effect_veh_number}

\begin{figure*}[!hbt] 
	\centering
	\subfigure[Social welfare]
	{
		\begin{minipage}[t]{0.31\linewidth}
			\centering
			\includegraphics[scale=0.22]{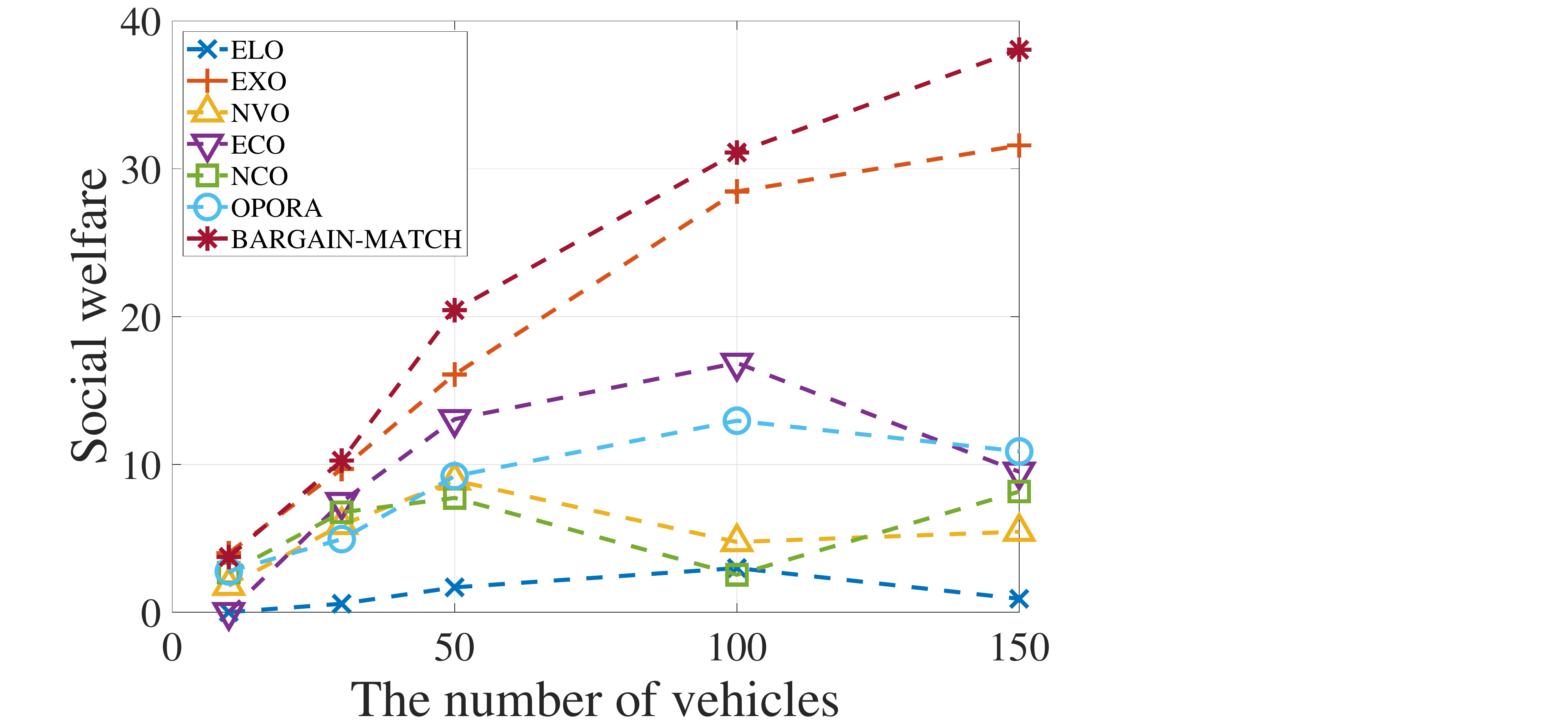}
		\end{minipage}
	}
	\subfigure[The total utility of vehicles]
	{
		\begin{minipage}[t]{0.31\linewidth}
			\centering
			\includegraphics[scale=0.22]{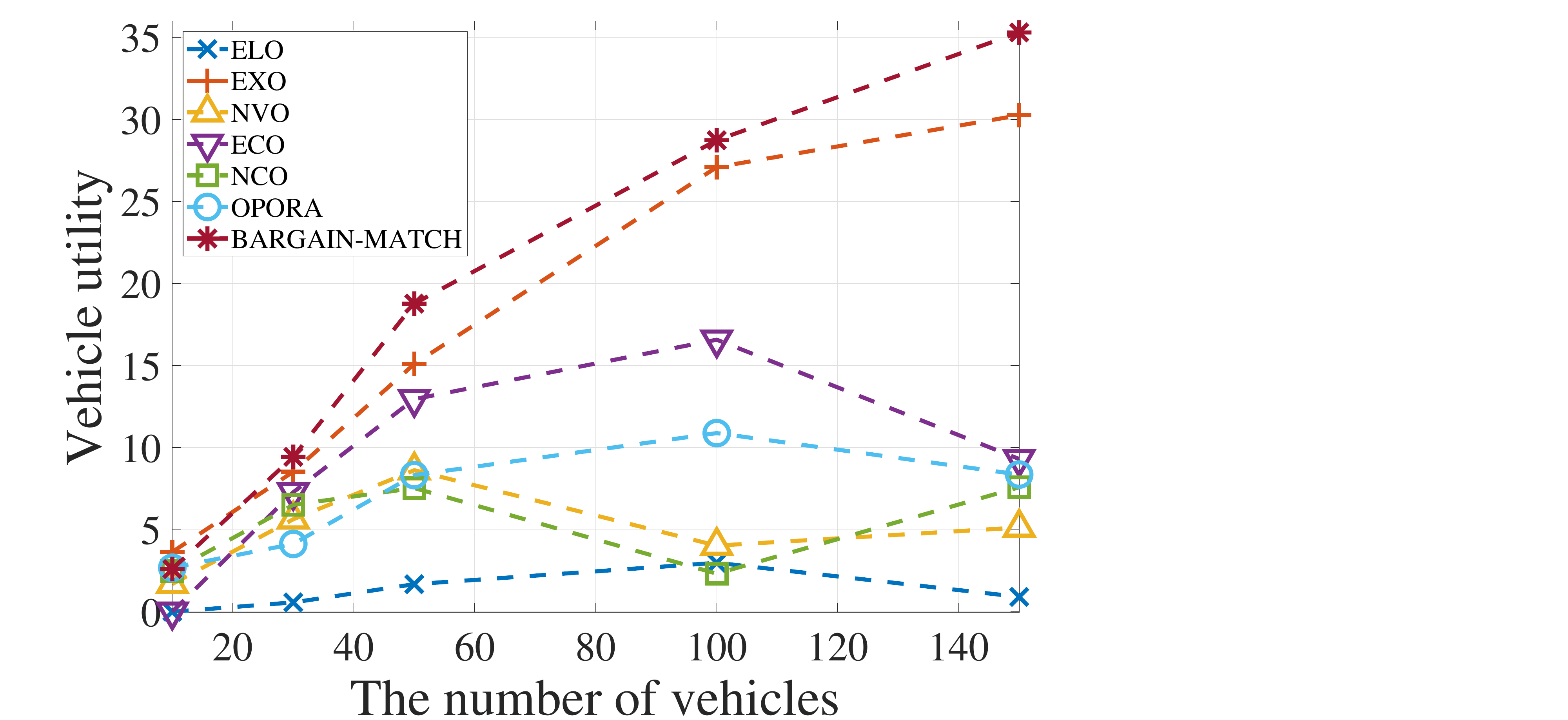}	
		\end{minipage}
	}
	\subfigure[The total utility of servers]
	{
		\begin{minipage}[t]{0.31\linewidth}
			\centering
			\includegraphics[scale=0.22]{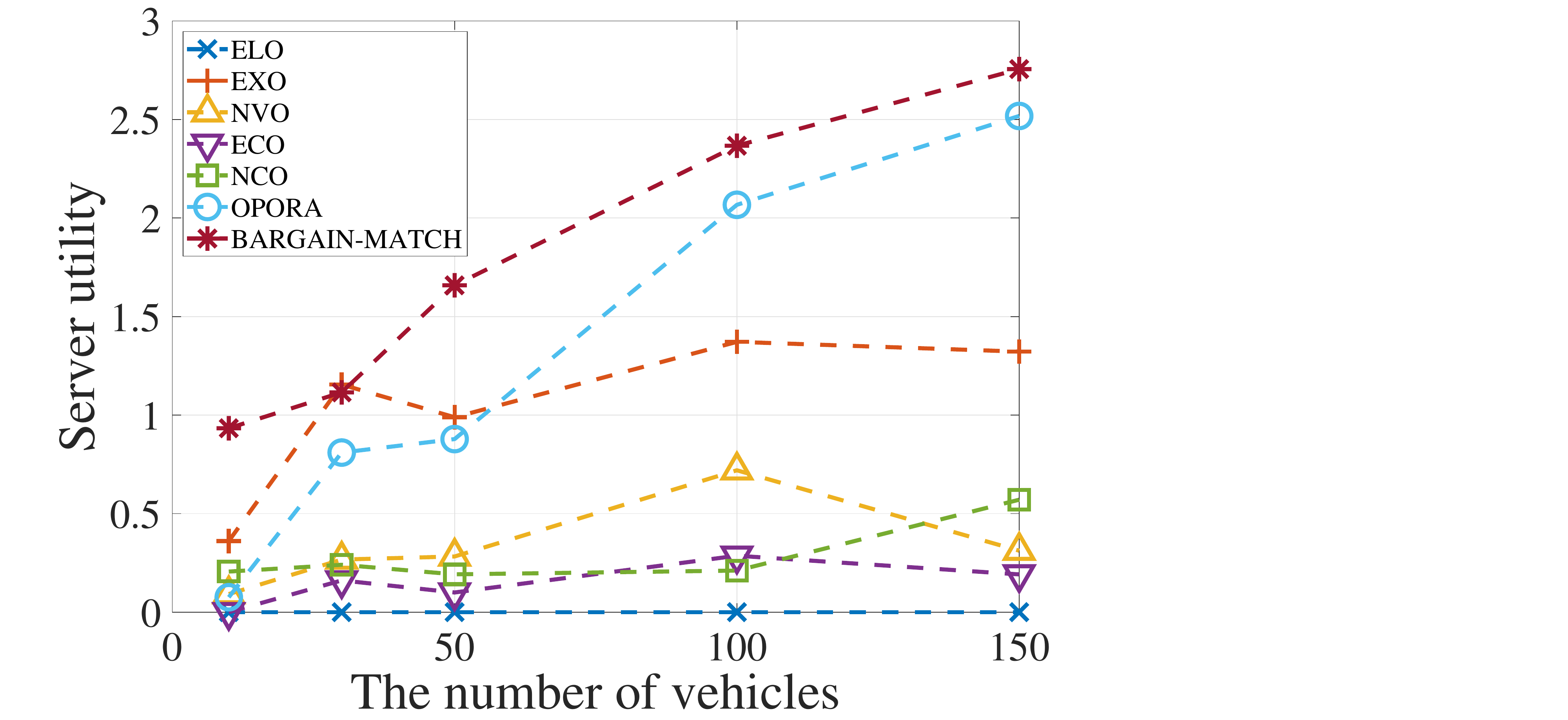}
		\end{minipage}
	}
	\centering
	\caption{System performance with respect to time. (a) Social welfare. (b) Vehicle utility. (c) Server utility.}
	\label{fig_vehicle}
	\vspace{-1em}
\end{figure*}

\par Figs. \ref{fig_vehicle}(a), \ref{fig_vehicle}(b), and \ref{fig_vehicle}(c) compare the impact of the number of vehicles on the performance of social welfare, the total utility of vehicles, and the total utility of servers for different schemes. First, ELO shows the worst performance in terms of social welfare, vehicle utility, and server utility, which is mainly because all tasks are executed locally on vehicles. Furthermore, the social welfare, vehicle utility, and server utility of NVO and ECO exhibit initial upward and then downward tendencies as the number of vehicles grows. This is mainly due to the aggregated amount of tasks with increasing vehicles, which could lead to the possible overload of the nearest VEC servers when adopting EVO and the increasing unfulfilled tasks when adopting ECO. Moreover, with the increase of vehicles, the social welfare and vehicle utility for OPORA show the initial increase and subsequent decrease trends, and the server utility for it rises at a diminishing rate, which is mainly due to the one-to-one offloading strategy and the random increasing price incentive. Additionally, NCO has some random fluctuations in social welfare and vehicle utility and increases slightly in server utility with increasing vehicles. Although the variation tendency of NCO is not statistically significant, it shows inferior performance compared to most of the other schemes (except for ELO). This could be attributed to the probabilistic offloading strategy and the increased competition among vehicles. Besides, EXO shows initial increasing trends in social welfare, vehicle utility, and server utility with the increasing number of vehicles, and the trends slow down gradually or exhibit a decreasing trend. As explained before, although tasks can be offloaded to the optimal server, the exhaustive offloading strategy of EXO could lead to possible congestion or resource shortage for certain VEC servers, as the vehicles continuously increase. Last, it can be observed that BARGAIN-MATCH exhibits progressively increasing trends in the performance of social welfare, vehicle utility, and server utility, maintaining a relatively superior level among the seven schemes. The set of simulation results indicates the better scalability of the proposed BARGAIN-MATCH with an increasing number of vehicles.

{\color{color1}
\subsubsection{Effect of Vehicle Speed}
\label{sec_effect_mob}
\begin{figure*}[!hbt] 
	\centering
	\subfigure[\textcolor{color1}{Social welfare}]
	{
		\begin{minipage}[t]{0.31\linewidth}
			\raggedleft
			\includegraphics[scale=0.22]{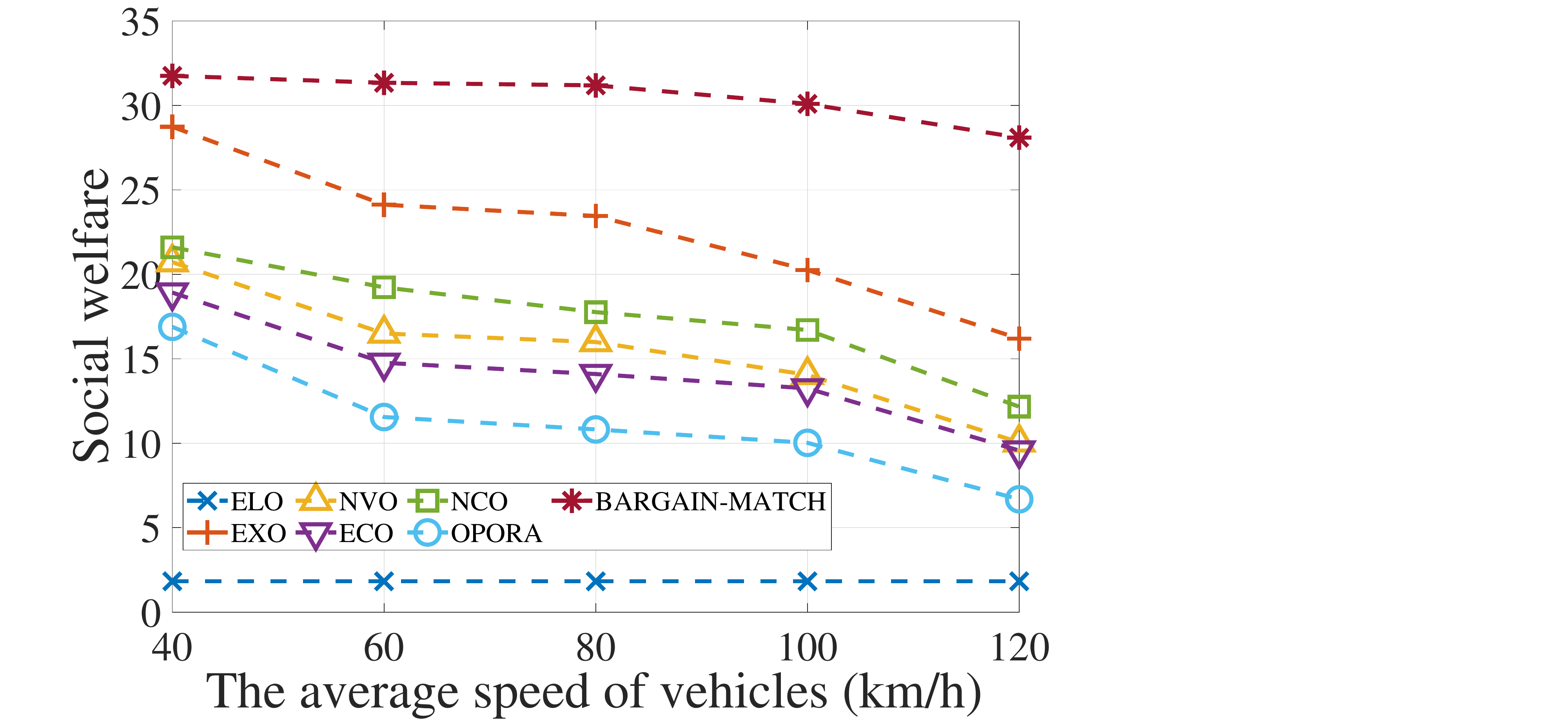}
		\end{minipage}
	}
	\subfigure[\textcolor{color1}{The total utility of vehicles}]
	{
		\begin{minipage}[t]{0.31\linewidth}
			\centering
			\includegraphics[scale=0.22]{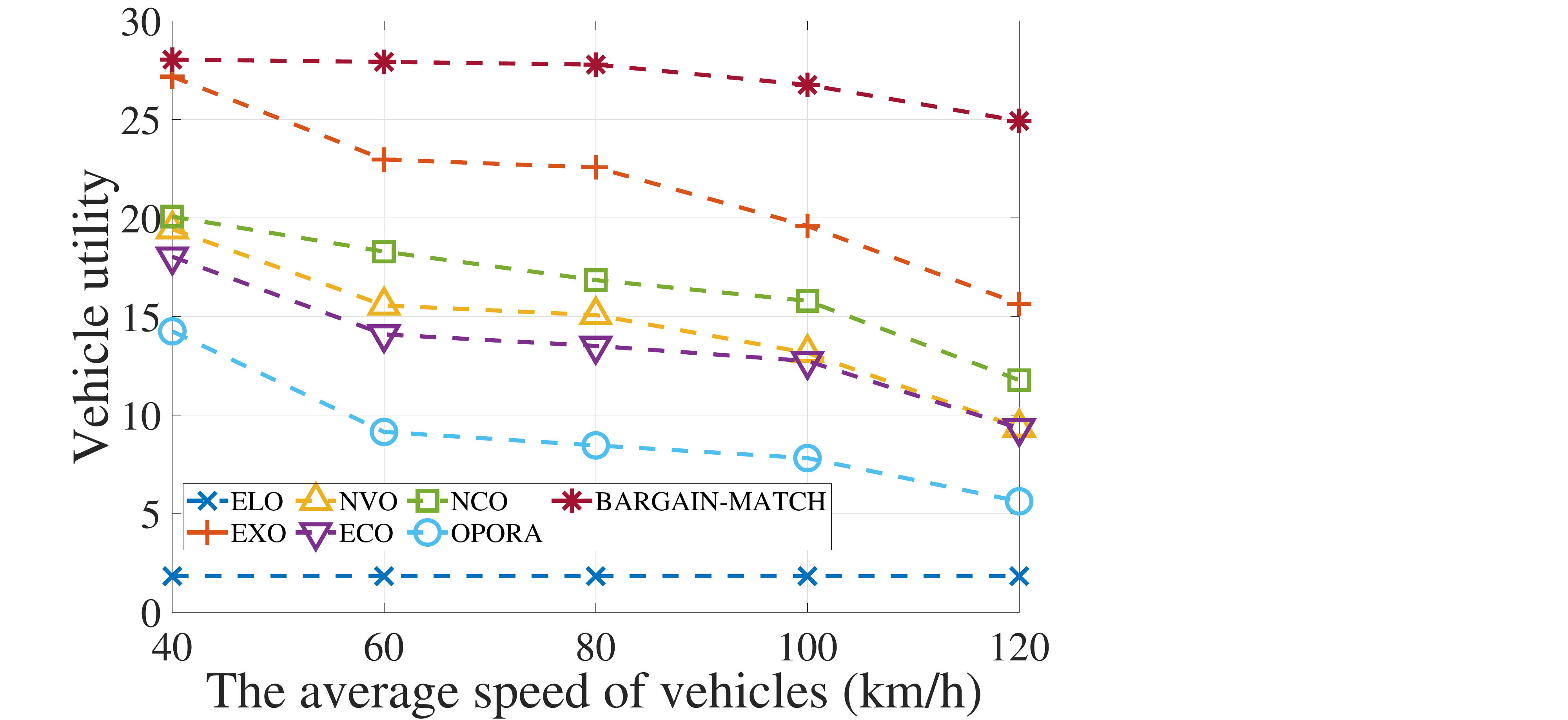}	
		\end{minipage}
}
	\subfigure[\textcolor{color1}{The total utility of servers}]
	{
		\begin{minipage}[t]{0.31\linewidth}
			\raggedright
			\includegraphics[scale=0.22]{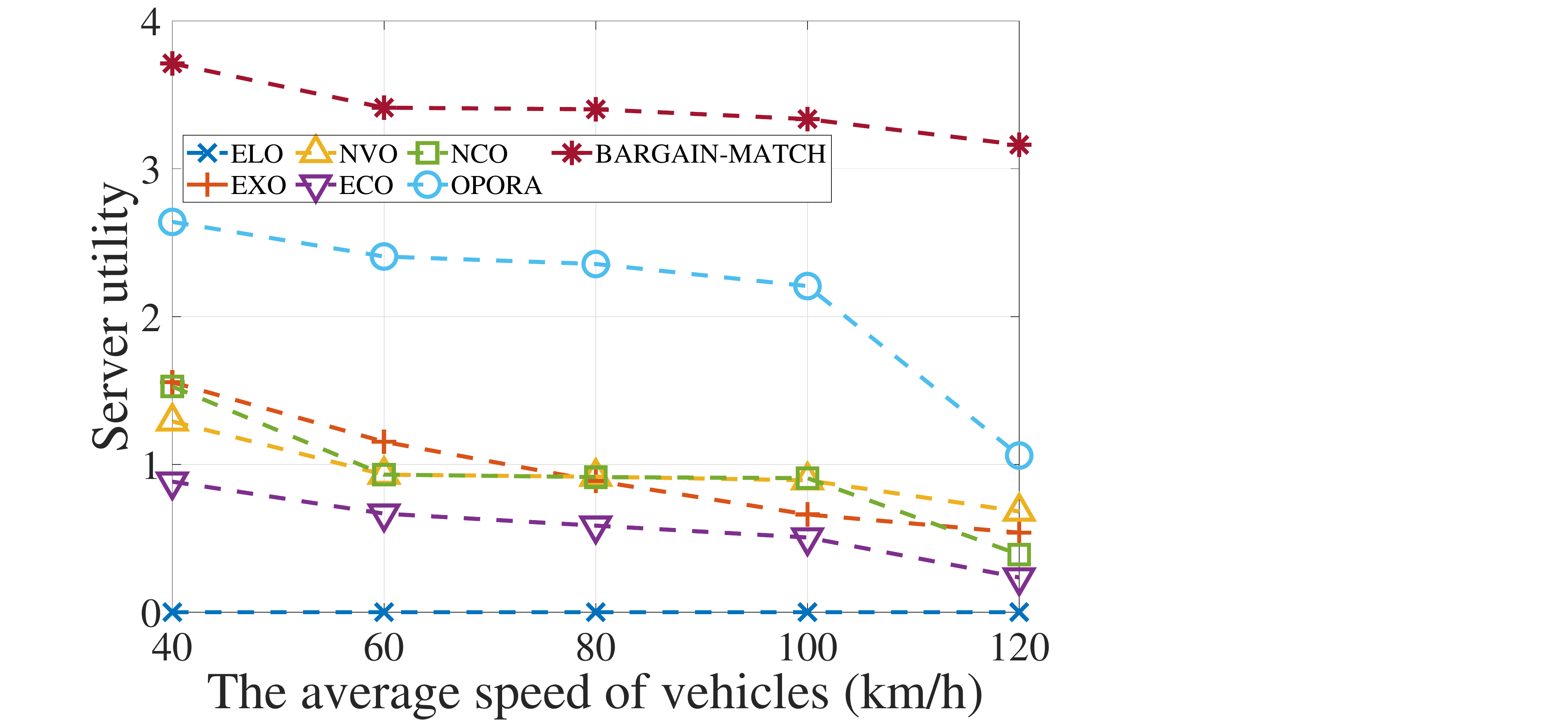}
		\end{minipage}
	}
	\centering
	\caption{\textcolor{color1}{System performance with respect the average speed of vehicles. (a) Social welfare. (b) Vehicle utility. (c) Server utility.}}
	\label{fig_mobi}
	\vspace{-1em}
\end{figure*}

\par Figs. \ref{fig_mobi}(a), \ref{fig_mobi}(b), and \ref{fig_mobi}(c) compare the social welfare, the total utility of vehicles, and the total utility of servers among the comparative algorithms, respectively. First, it can be observed from Fig. \ref{fig_mobi} that the ELO shows the invariant but the worst performance in terms of social welfare, vehicle utility, and server utility with respect to the average speed of vehicles. This is obvious since the tasks of vehicles are executed locally without communicating with the VEC servers. Furthermore, for the algorithms of EXO, NVO, ECO, NCO, OPORA and BARGAIN-MATCH, the curves of social welfare, vehicle utility, and server utility show overall downward trends with the increasing speed of vehicles. This is mainly because the high mobility of vehicles indicates that the vehicles could move out of the service range of the VEC server during the task uploading or task computing more frequently, leading to more repetitive handovers or even service interruptions. Therefore, the high mobility of vehicles could cause increased service delay and more task failures, which further results in the degraded satisfaction level of vehicles and the decreased revenues of servers. Specifically, the performances of EXO, NVO, ECO, NCO, and OPORA degrade significantly when the average speed exceeds 100 km/h. Finally, it can be observed that the performances of the proposed BARGAIN-MATCH exhibit relatively steady downward trends as the average speed of vehicles increases, maintaining superior levels among the seven schemes. The reason is that the proposed BARGAIN-MATCH performs cooperative decisions of task offloading by considering both horizontal and vertical task migrations, where the handover delay is incorporated to improve the connectivity for moving vehicles. In conclusion, this set of simulation results demonstrates that the proposed BARGAIN-MATCH can achieve relatively stable and superior performance in terms the social welfare, vehicle utility, and server utility against varying speeds of vehicles.
}

\subsubsection{Effect of Task Size}
\label{sec_effect_task}

\par The effect of the initial price of the computation resources on the system performance for the comparative algorithms is presented in Appendix I.1.1 of the supplementary material due to the page limitation.

\subsubsection{Effect of the Computation Resources of VEC Servers}
\label{sec_effect_resource}

\par The effect of the initial price of the computation resources on the system performance for the comparative algorithms is presented in Appendix I.1.2 of the supplementary material due to the page limitation.

\subsubsection{Effect of the Initial Price of Computation Resources}
\label{appen_effect_price}

\par The effect of the initial price of the computation resources on the system performance for the comparative algorithms is presented in Appendix I.1.3 of the supplementary material due to the page limitation.

%\subsection{System Efficiency}
\subsection{System Efficiency}
{\color{color1} 
	
%\par In this section, to evaluate the system efficiency, the following metrics are defined.

\par For performance evaluation, the following statistics are collected: the generation time of each task  $t_{i}^{\text{req}}(t),\forall i\in \mathcal{V},t\in \mathbf{T}$; the successful completion time of each task  $t_{i}^{\text{com}}(t),\text{ }\forall i\in \mathcal{V},t\in \mathbf{T}$; the number of successfully completed tasks during the considered timeline, which is denoted as $N^\text{succ}$. Based on these statistics, the following performance metrics are defined.

\begin{itemize}
	\item \textit{Average processing rate} (APR) is defined as the average amount of task (in bits) that is processed per unit time, which is given as:
	\begin{equation} 
		\text{APR}=\frac{{\sum_{t\in \mathbf{T}}\sum_{i\in \mathcal{V}}}p_i^{\text{gen}}(t)\cdot\mathcal{C}_i^{\text{req}}(t)}{\sum_{t\in \mathbf{T}}\sum_{i\in \mathcal{V}}p_i^{\text{gen}}(t)\cdot \left(t_i^{\text{cmp}}(t) - t_i^{\text{req}}(t)\right)}.
	\end{equation}

	%\noindent where $t_i^{\text{cmp}}(t)$ is the time that task $\mathcal{T}_i(t{\tiny })$ is completed, and $ t_i^{\text{req}}(t)$ is the time that task $\mathcal{T}_i(t)$ is generated.
	
	\item \textit{Average completion delay} (ACD) is defined as the average delay of completing a task successfully: 
	\begin{equation} 
		\text{ACD}=\frac{\sum_{t\in \mathbf{T}}\sum_{i\in \mathcal{V}}p_i^{\text{gen}}(t)\cdot\left(t_i^{\text{cmp}}(t) - t_i^{\text{req}}(t)\right)}{\sum_{t\in \mathbf{T}}\sum_{i\in \mathcal{V}}p_i^{\text{gen}}(t)}.
	\end{equation}
	
	%\noindent where $\sum_{t\in \mathbf{T}}\sum_{i\in \mathcal{V}}p_i^{\text{gen}}(t)$ is the number of tasks generated during time duration $\mathbf{T}$.
	
	\item \textit{Average completion ratio} (ACR) is defined as the ratio of tasks that are successfully completed to the total number of tasks generated during the considered duration, which is as follows:
	\begin{sequation} 
			\text{ACR}=\frac{N^\text{succ}}{\sum_{t\in \mathbf{T}}\sum_{i\in \mathcal{V}}p_i^{\text{gen}}(t)}.
%		\text{ACR}=\frac{\sum_{t\in \mathbf{T}}\sum_{i\in \mathcal{V}}p_i^{\text{gen}}(t)\cdot \mathbb{I}\left(t_i^{\text{cmp}}(t)- t_i^{\text{req}}(t)\leq T_i^{\max}(t)\right)}{\sum_{t\in \mathbf{T}}\sum_{i\in \mathcal{V}}p_i^{\text{gen}}(t)},
	\end{sequation}
	
%	\noindent where $\sum_{t\in \mathbf{T}}\sum_{i\in \mathcal{V}}p_i^{\text{gen}}(t)$ is the number of tasks generated during time duration $\mathbf{T}$.

\end{itemize}
}

\subsubsection{Effect of Task Size}

\begin{figure*}[!hbt] 
	\centering
	\subfigure[APR]
	{
		\begin{minipage}[t]{0.31\linewidth}
			\centering
			\includegraphics[scale=0.22]{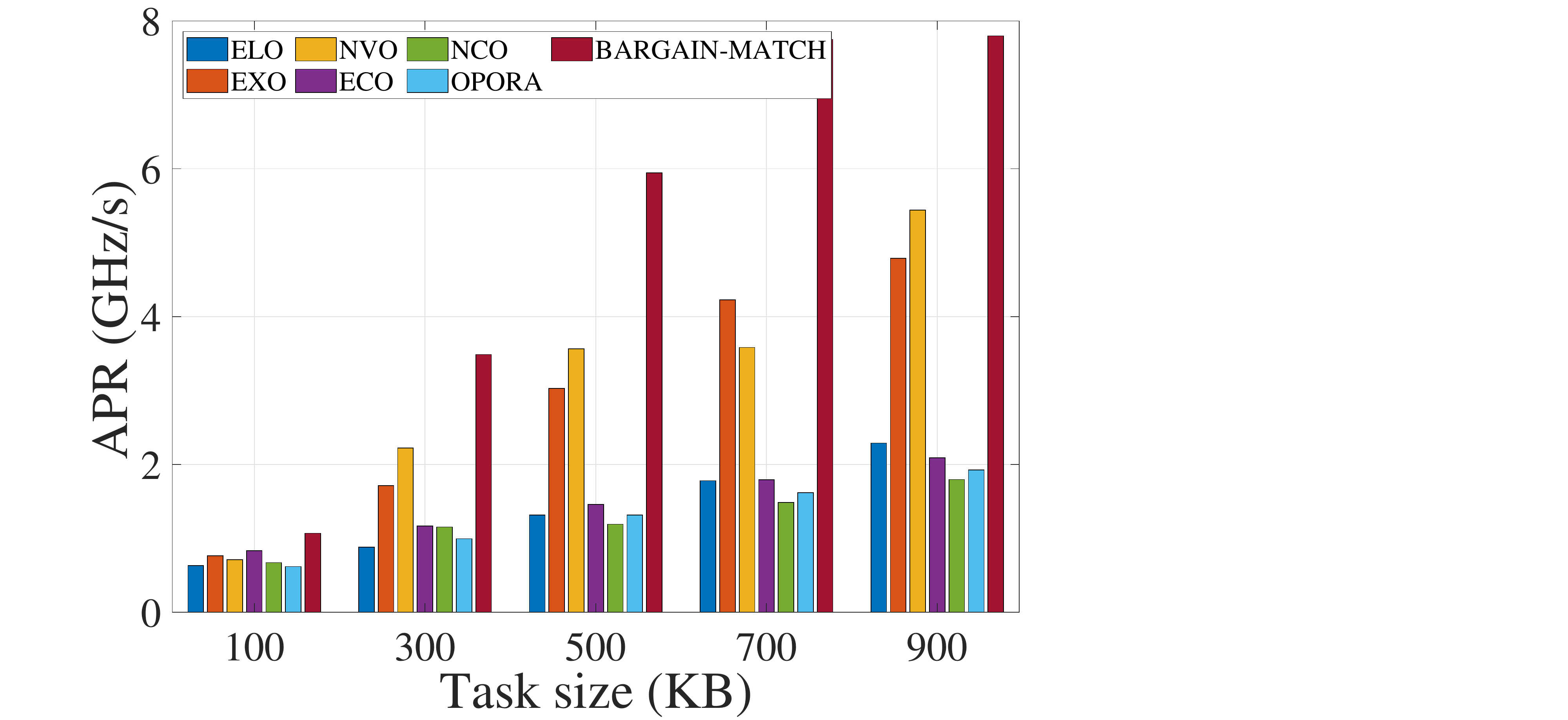}
		\end{minipage}
	}
	\subfigure[ACD]
	{
		\begin{minipage}[t]{0.31\linewidth}
			\centering
			\includegraphics[scale=0.22]{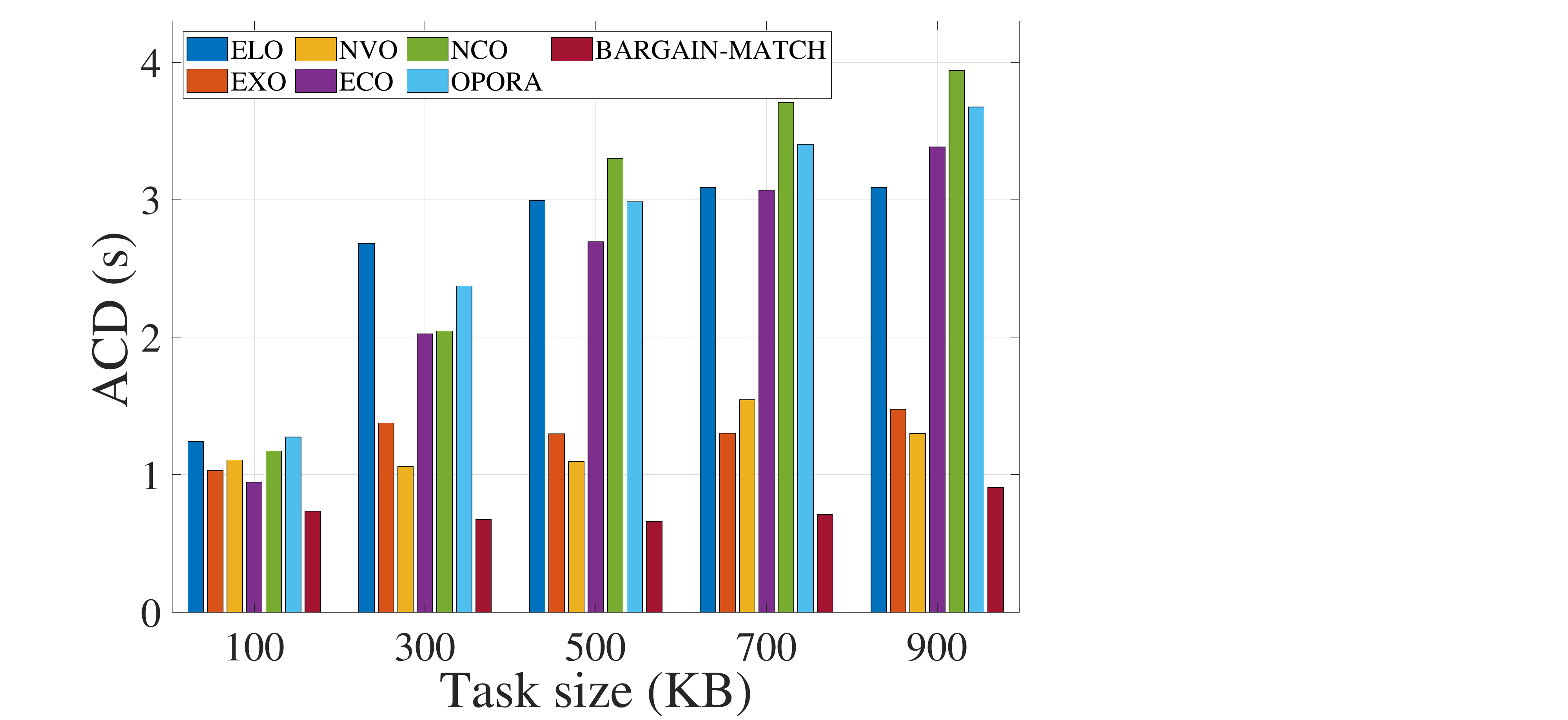}	
		\end{minipage}
	}
	\subfigure[\textcolor{color1}{ACR}]
	{
		\begin{minipage}[t]{0.31\linewidth}
			\centering
			\includegraphics[scale=0.22]{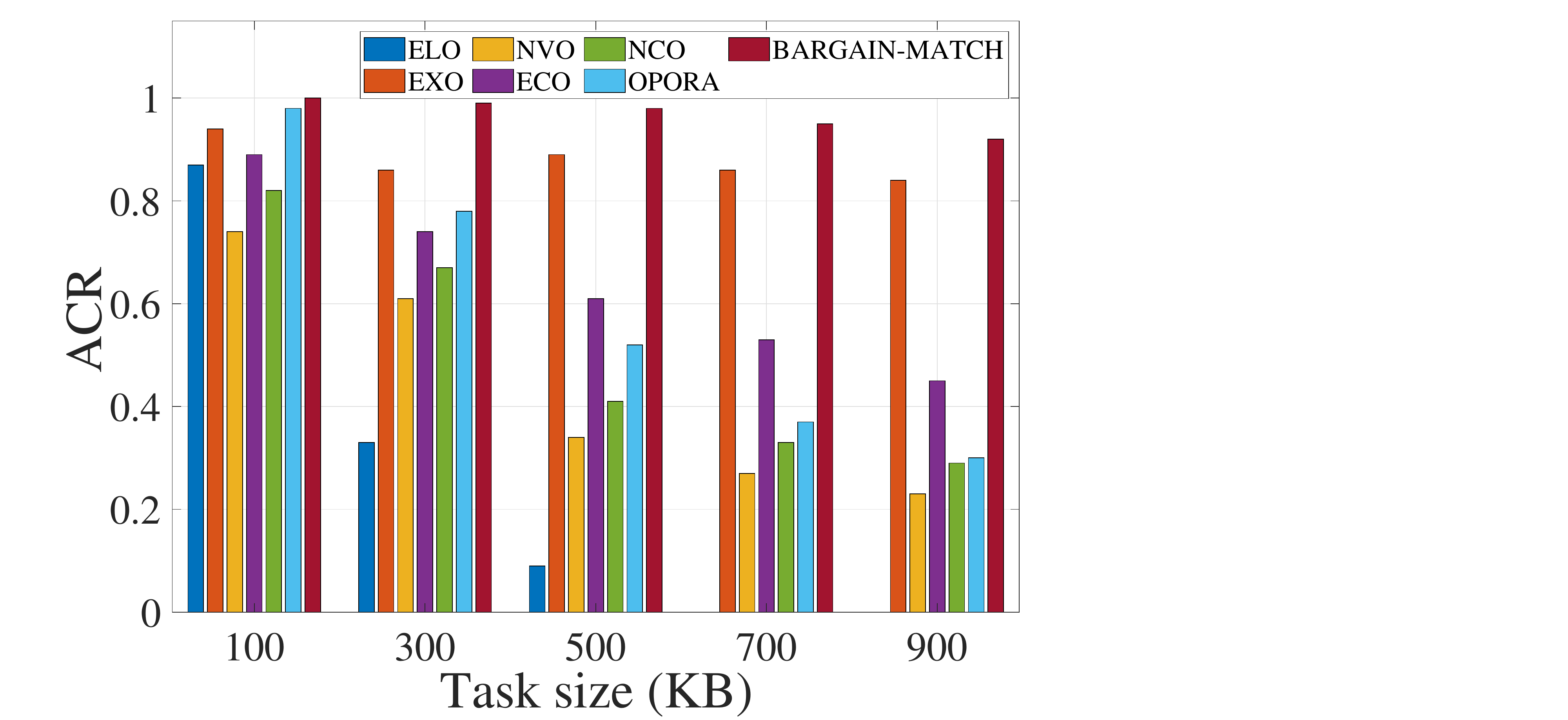}	
		\end{minipage}
	}
	\centering
	\caption{System efficiency with respect to the average size of tasks. (a) APR. (b) ACD. \textcolor{color1}{(c) ACR.}}
    \label{fig_efficiency_task}
	\vspace{-1em}
\end{figure*}

{\color{color1} 
\par  Figs. \ref{fig_efficiency_task}(a), \ref{fig_efficiency_task}(b), and \ref{fig_efficiency_task}(c) compare the APR, ACD, and ACR of the seven algorithms under different task sizes, respectively. First, both the APR and ACD of ELO, EXO, NVO, ECO, and NCO show overall upward trends with the increasing of task size, and the ACR of them show the opposite trends with the task size increases. Obviously, this is because the workloads of vehicles or servers become heavier with the increasing of task sizes, leading to the increased amount of task processing per unit time, the increased delay of task completion, and the decreased task completion ratio. Besides, it can be observed that the proposed BARGAIN-MATCH shows a significant rising trend in APR, a slight upward trend in ACD, and a slight downward trend in ACR with the increasing of task size. This implies that the processing rate of BARGAIN-MATCH increases significantly with relatively low costs of delay and task failure as the workload increases. Furthermore, BARGAIN-MATCH achieves the highest APR, the lowest ACD, and the highest ACR compared to the other schemes with the increasing of task size. BARGAIN-MATCH tries to stimulate cooperation among servers for inter-server task offloading and cooperation between servers and vehicles for intra-server resource allocation according to the varying works and the available resources of servers. In conclusion, the result set in Fig. \ref{fig_efficiency_task} demonstrates the efficiency of BARGAIN-MATCH in terms of the APR, ACD, and ACR under varying task sizes.
}

\subsubsection{Effect of the Computation Resources of VEC Servers}
{\color{color1}
	\par The effect of the computation resources on the system efficiency for the comparative algorithms is given in Appendix I.2.1 of the supplementary material due to the page limitation.
}

\subsubsection{Effect of the Initial Price of Computation Resources}
{\color{color1}
	\label{sec_effect_initial_price}
	\par The effect of the initial price of the computation resources on the system efficiency for the comparative algorithms is given in Appendix I.2.2 of the supplementary material due to the page limitation.
}

%\vspace{-10pt}
\begin{figure}[!hbt]
	\setlength{\abovecaptionskip}{-20pt}%    
	\setlength{\belowcaptionskip}{-20pt}%
	\centering
	\includegraphics[scale=0.28]{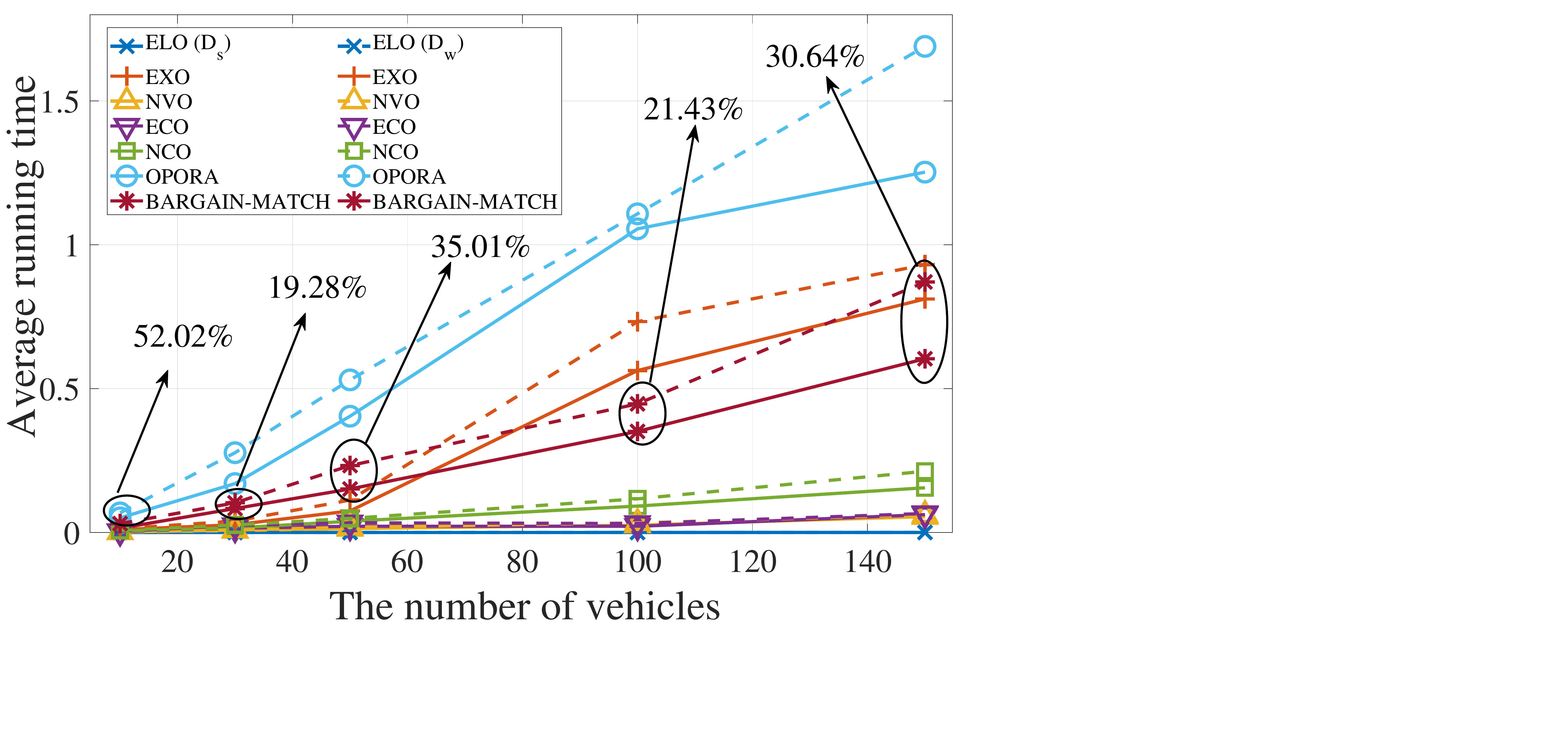}
	\caption{Average running time with respect to the number of vehicles.}
	\label{fig_run}	
    \vspace{-1em}
\end{figure}

%
% running time
%
{ \color{color}
\subsection{Algorithm Running Time}

\par To evaluate the execution time of different approaches, Fig. \ref{fig_run} shows the average running time versus the number of vehicles for the seven algorithms. To show the impact of the processors, the proposed approach is implemented on a relative strong device ($D_s$) equipped with Intel Core i7-12700H, 2.70 GHz processor, 16.0 GB RAM memory and on a relative weak device ($D_w$) equipped with Intel Core i7-9750H, 2.60GHz processor, 16.0 GB RAM memory, respectively. From the perspective of the comparative approaches, it can be observed from Fig. \ref{fig_run} that the average running time increases for each algorithm with increasing number of vehicles. Specifically, the algorithms of ELO, NVO, ECO, and NCO have lower time complexity compared to EXO, OPOPRA, and BARGAIN-MATCH. This is because these approaches make offloading decisions directly, which however have relative inferior system performance and efficiency compared to the other approaches, as shown in Fig. \ref{fig_sw} to Fig. \ref{fig_efficiency_task}. Furthermore, OPOPRA shows significantly higher time complexity among the seven schemes, which is mainly attributed to the time-consuming strategies of one-to-one matching and the random price rising incentive. Moreover, it can be observed that the average execution time of BARGAIN-MATCH increases linearly with increasing number of vehicles, which is consistent with the theoretical analysis. Besides, EXO takes less execution time then BARGAIN-MATCH when the number of vehicles is relative small ($\leq 75$), while it is more time-consuming when the network becomes denser, and the difference enlarges gradually. From the perspective of the capability of devices, it can be observed that the average running time decreases approximately 19.28\% to 52.02\% when the simulation runs on $D_s$. Concluding from Fig. \ref{fig_run}, BARGAIN-MATCH achieves superior performance in social welfare, vehicle utility, and server utility with the time complexity higher than the schemes that adopt direct decision making and lower than EXO in relative dense networks. Furthermore, the proposed approach can be completed in polynomial time with linearly increased complexity over the number of vehicles. Besides, the execution time of the proposed approach could be further reduced when it runs in the real VEC node with stronger processing capability than the device we used.
}

{\color{color}
\section{Discussion}
\subsection{The Case of Vehicular Applications}
\label{sec_discussion}

\par Based on the standards of 5G Automotive Association (5GAA) \cite{5g2019c} and ETSI MEC \cite{isg2018multi,Spinelli2021}, we consider the following vehicular applications, i.e., 
i) \textit{vehicle collision warning}, ii) \textit{emergency break warning}, iii) \textit{traffic jam warning}, iv) \textit{hazardous location warning}, and v) \textit{speed harmonization}. The characteristics of each application that are mapped to the task model in Section \ref{sec_system_model} are given as follows. First, the task size and maximum acceptable delay for applications i)-v) are given as: \textbf{i)} [300, 1000] B and 100 ms,
\textbf{ii)} [200, 400] B and 120 ms, \textbf{iii)} 300 B and 2000 ms, \textbf{iv)} [300, 1000] B and [1000, 2000] ms (safety) or $[10^4, 2\times10^5]$ ms (route obstruction), and \textbf{v)} [300, 1000] B and [400, 1500] ms, respectively \cite{5g2019c}. Furthermore, the computational intensity and the result of the above applications are given as $[10^3, 10^4]$ cycles/bit and [0.1, 1] KB, respectively \cite{Spinelli2021}. We evaluate the performance of the proposed approach for these vehicular applications in Appendix J of the supplementary material.
}

{\color{color1}
\subsection{The Impact of Multiple Access Schemes} % duozhi 影响
\label{sec_discussion_OFDMA}

\par In this sub-section, we discuss the impact of the employed multiple access schemes on the performance. Specifically, in Appendix K of the supplementary martial, we evaluate the performance of the proposed BARGAIN-MATCH in the scenario where the network employs the OFDMA, followed by the discussion on the  extensibility of the proposed approach for more complicated scenarios. 

}

%
% Conclusion
%

\section{Conclusion}
\label{sec_conclusion}

\par In this work, we investigate the computation allocation and task offloading for VEC servers and vehicles in VEC networks. \textcolor{color}{First, to coordinate the space-time-requirement heterogeneity among tasks and the computational heterogeneity among servers, this work employs a hierarchical framework where the intra-server resource allocation and inter-server offloading are decided through the horizontal and vertical collaboration among vehicle, edge, and cloud layers under the coordination of the controller.} \textcolor{color}{Furthermore, JRATOP is formulated to maximize the system utility by jointly optimizing the strategies of resource allocation, resource pricing, and task offloading. To solve the NP-hard problem, we propose the BARGAIN-MATCH that consists of the bargaining-based trading model for intra-server resource allocation and a matching-based collaboration approach for inter-server task offloading.} Besides, the proposed BARGAIN-MATCH is proved to be stable, weak Pareto optimal, and polynomial complex. Simulation results demonstrate that BARGAIN-MATCH achieves superior performance in terms of the system utility, vehicle utility and server utility compared to the conventional approaches. Moreover, it can improve the task processing rate and task processing delay significantly, especially when the system workload is heavy.

%\vspace{-3pt}

\section*{Acknowledgment}

\par This work was supported in part by the National Natural Science Foundation of China under Grants 62172186, 62002133, 61872158, and 62272194, in part by the Science and Technology Development Plan Project of Jilin Province under Grants 20210101183JC and 20210201072GX, and in part by the Young Science and Technology Talent Lift Project of Jilin Province under Grant QT202013.

\ifCLASSOPTIONcaptionsoff
\newpage
\fi

\bibliographystyle{IEEEtran}
\bibliography{references.bib}

\vspace{-23pt}
\begin{IEEEbiography}[{\includegraphics[width=1in,height=1.25in,clip,keepaspectratio]{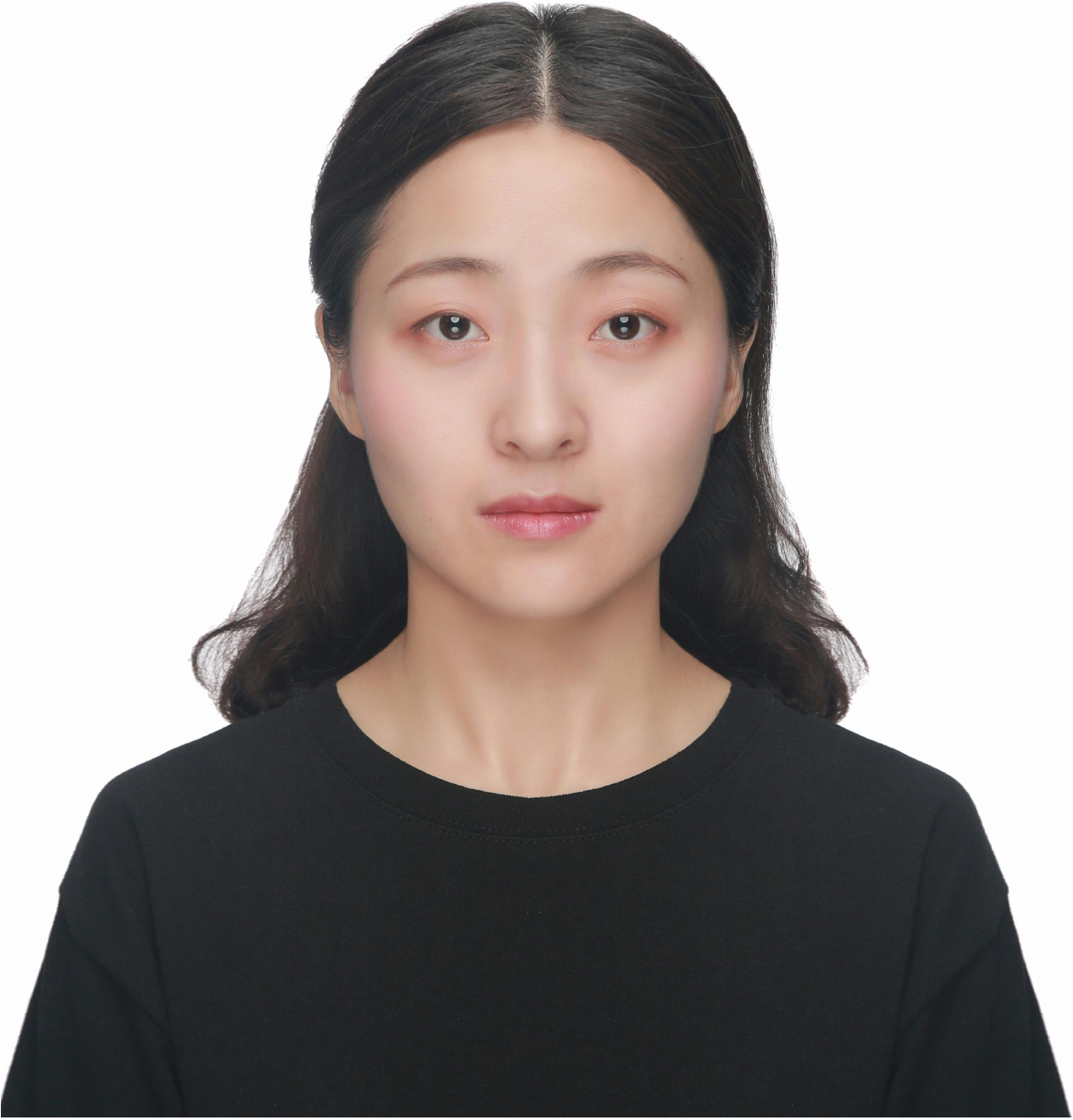}}]{Zemin Sun} (S'21)  received a BS degree in Software Engineering,  an MS degree and a Ph.D degree  in Computer Science and Technology from Jilin University, Changchun, China, in 2015, 2018, and 2022, respectively. Her research interests include vehicular networks, edge computing, and game theory. 
\end{IEEEbiography}

\vspace{-23pt}
\begin{IEEEbiography}[{\includegraphics[width=1in,height=1.25in,clip,keepaspectratio]{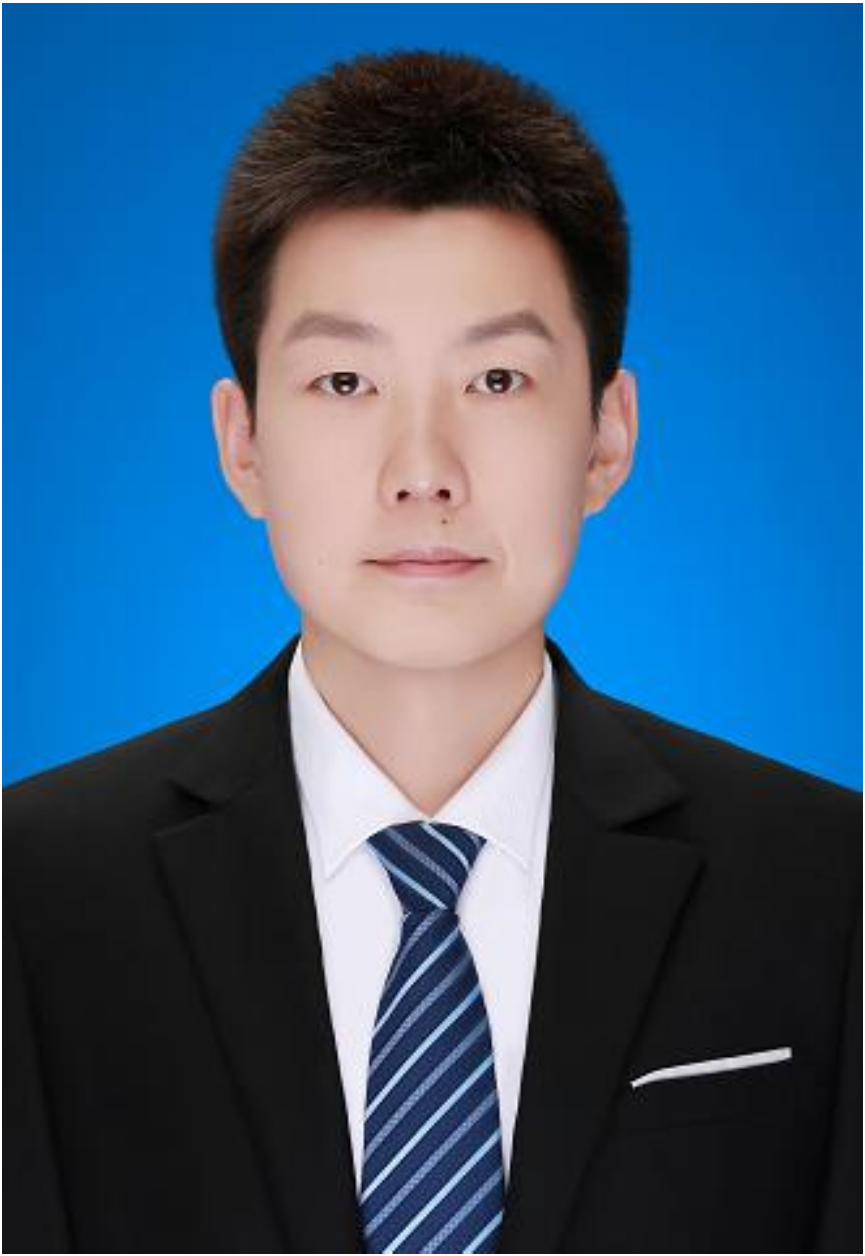}}]{Geng Sun} (S'17-M'19) received the B.S. degree in communication engineering from Dalian Polytechnic University, and the Ph.D. degree in computer science and technology from Jilin University, in 2011 and 2018, respectively. He was a Visiting Researcher with the School of Electrical and Computer Engineering, Georgia Institute of Technology, USA. He is an Associate Professor in College of Computer Science and Technology at Jilin University, and His research interests include wireless networks, UAV communications, collaborative beamforming and optimizations.
\end{IEEEbiography}

\vspace{-23pt}
\begin{IEEEbiography}[{\includegraphics[width=1in,height=1.25in,clip,keepaspectratio]{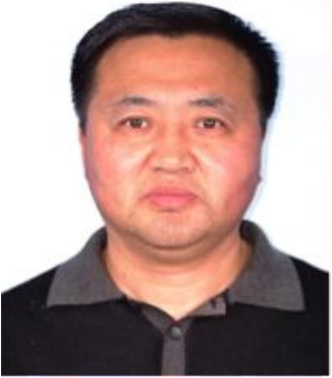}}]{Yanheng Liu} received the M.Sc. and Ph.D. degrees in computer science from Jilin University, People's Republic of China. He is currently a professor in Jilin University, People's Republic of China. His primary research interests are in network security, network management, mobile computing network theory and applications, etc. He has co-authored over 90 research publications in peer reviewed journals and international conference proceedings of which one has won ``best paper" awards. Prior to joining Jilin University, he was visiting scholar with University of Hull, England, University of British Columbia, Canada and Alberta University, Canada.
\end{IEEEbiography}
\vspace{-23pt}

\begin{IEEEbiography}[{\includegraphics[width=1in,height=1.25in,clip,keepaspectratio]{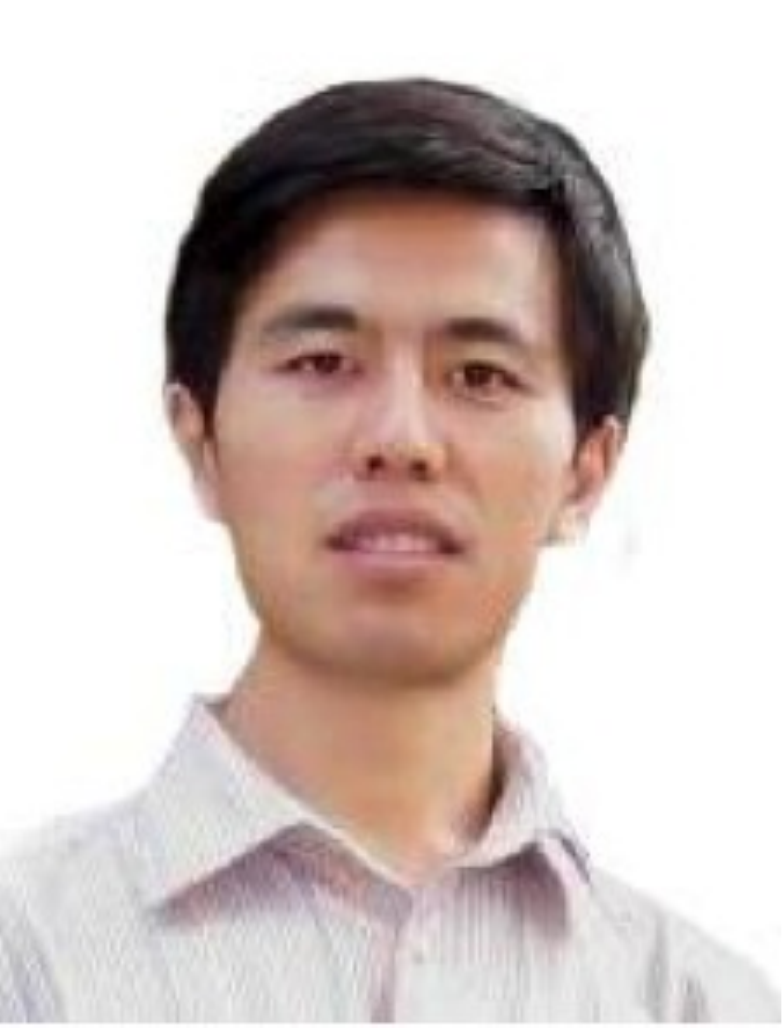}}]{Jian Wang} received the B.Sc., M.Sc., and Ph.D. degrees in computer science from Jilin University,	Changchun, China, in 2004, 2007, and 2011, respectively. He is currently a Professor with the College of Computer Science and Technology, Jilin University. He is interested in topics related to	wireless communication and vehicular networks, especially for network security and privacy protection. He has published over 40 articles in international journals.
\end{IEEEbiography}

\vspace{-23pt}

\begin{IEEEbiography}[{\includegraphics[width=4in,height=1.25in,clip,keepaspectratio]{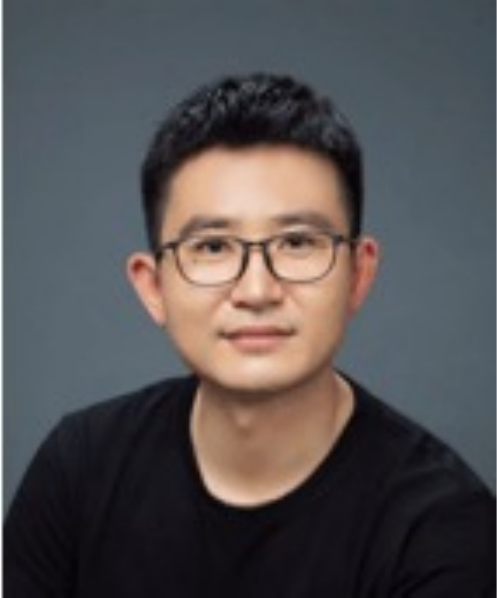}}]{Dongpu Cao} (M'08) received the Ph.D. degree from Concordia University, Canada, in 2008. He is a Professor at Tsinghua University. His current research focuses on driver cognition, automated driving and social cognitive autonomous driving. He has contributed more than 200 papers and 3 books. He received the SAE Arch T. Colwell Merit Award in 2012, IEEE VTS 2020 Best Vehicular Electronics Paper Award and over 10 Best Paper Awards from international conferences. Prof. Cao has served as Deputy Editor-in-Chief for IET INTELLIGENT TRANSPORT SYSTEMS JOURNAL, and an Associate Editor for IEEE TRANSACTIONS ON VEHICULAR TECHNOLOGY, IEEE TRANSACTIONS ON INTELLIGENT TRANSPORTATION SYSTEMS, IEEE/ASME TRANSACTIONS ON MECHATRONICS, IEEE TRANSACTIONS ON INDUSTRIAL ELECTRONICS, IEEE/CAA JOURNAL OF AUTOMATICA SINICA, IEEE TRANSACTIONS ON COMPUTATIONAL SOCIAL SYSTEMS, and ASME JOURNAL OF DYNAMIC SYSTEMS, MEASUREMENT AND CONTROL. Prof. Cao is an IEEE VTS Distinguished Lecturer. 
\end{IEEEbiography}

\end{document}